\documentclass[journal]{IEEEtran}
\usepackage{amsmath,amsfonts}
\usepackage{amsthm}
\usepackage{algorithmic}
\usepackage{algorithm}
\usepackage{array}
\usepackage{tabularx}
\usepackage[caption=false,font=normalsize,labelfont=sf,textfont=sf]{subfig}
\usepackage{textcomp}
\usepackage{stfloats}
\usepackage{url}
\usepackage{verbatim}
\usepackage{graphicx}
\usepackage{cite}
\usepackage[hidelinks]{hyperref}
\usepackage{orcidlink}
\usepackage{tikz}
\usepackage{multirow}
\usepackage{placeins}
\usepackage{wasysym}
\usepackage{threeparttable}

\newcommand{\fullcircle}{\CIRCLE}
\newcommand{\emptycircle}{\Circle}
\newcommand{\halfcircle}{\LEFTcircle}

\theoremstyle{definition}
\newtheorem{definition}{Definition}
\newtheorem{theorem}{Theorem}
\newtheorem{lemma}{Lemma}

\begin{document}

\title{UniPCH: A Unified Framework for Cross-Chain Payments via Payment Channel Hubs}
\author{
    Yuanming Shao~\orcidlink{0009-0009-3020-3494}, Yuming Feng~\orcidlink{0000-0001-8922-0496}, Weizhe Zhang~\orcidlink{0000-0003-4783-876X}, ~\IEEEmembership{Senior Member,~IEEE}, Bin Xiao~\orcidlink{0000-0003-4223-8220},~\IEEEmembership{Fellow,~IEEE}, Song~Guo~\orcidlink{0000-0001-9831-2202},~\IEEEmembership{Fellow,~IEEE}, Zejun Wang~\orcidlink{0009-0009-0034-5459}
    \thanks{
        Yuanming Shao is with the School of Cyberspace Science, Harbin Institute of Technology, Harbin 150001, China (e-mail: ymshao@stu.hit.edu.cn).
    }
    \thanks{
        Yuming Feng is with the Department of New Networks, Pengcheng Laboratory, Shenzhen 518055,
        China (e-mail: fengym@pcl.ac.cn).
    }
    \thanks{
        Weizhe Zhang is with the School of Computer Science and Technology,
        Harbin Institute of Technology (Shenzhen), Shenzhen 518055, China, and with the Department of New Networks, Pengcheng Laboratory, Shenzhen
        518055, China (e-mail: wzzhang@hit.edu.cn). (Corresponding authors: Weizhe Zhang and Yuming Feng.)
    }
    \thanks{
        Bin Xiao is with the Department of Computing, The Hong Kong Polytechnic University, Hung Hom, Hong Kong (e-mail: b.xiao@polyu.edu.hk).
    }
    \thanks{
        Song Guo is with the Department of Computer Science and Engineering, Hong Kong University of Science and Technology, Hong Kong (e-mail: songguo@cse.ust.hk).
    }
    \thanks{
        Zejun Wang is with the School of Cyberspace Science, Harbin Institute of Technology, Harbin 150001, China (e-mail: zejunwang@stu.hit.edu.cn).
    }
}

\markboth{}{UniPCH: A Unified Framework for Cross-chain Payment}

\maketitle

\begin{abstract}
    Payment channel hubs (PCHs) enable private cross-chain payments, but existing protocols remain difficult to extend across heterogeneous blockchains. Their payment flows often depend on how particular blockchains represent transactions, authorize channel updates, and settle balances. Some protocols do not explain how their channels are implemented on actual blockchains. Supporting a new ledger or execution model requires a corresponding channel realization and may also require protocol adaptation and security analysis. We present UniPCH, a framework with two layers that separates native channel execution from private cross-chain payment logic. The lower UniPC layer defines common state transitions and security requirements. It separates task authorization from confirmation of the complete channel state. We provide realizations for UTXO ledgers and account ledgers that support smart contracts. At the upper layer, we develop selective re-signing blind conditional signatures (S-RBCS) to coordinate conditional authorization across hubs. Using S-RBCS, we construct a payment protocol that coordinates different channels in four phases. Our analysis establishes local UniPC security, griefing resistance, economic atomicity, and session unlinkability. We implement a Go prototype and evaluate the running time and bandwidth of each phase under different network conditions and system parameters. Results show modest computational overhead from the common interfaces. Most evaluated configurations run faster than comparable schemes under network delay. A full refresh ablation shows that reusing valid task evidence reduces computation and communication costs in the account realization.
\end{abstract}

\begin{IEEEkeywords}
Blockchain interoperability, cross-chain payments, payment channels, payment channel hubs, payment privacy.
\end{IEEEkeywords}

\section{Introduction}

The continued expansion of blockchain ecosystems has led to the emergence of numerous blockchain systems and a rapidly growing user population. As applications and users increasingly span multiple blockchains, demand for cross-chain asset transfers continues to grow. However, differences in system architectures, transaction rules, consensus mechanisms, and execution environments make secure and efficient transfers difficult~\cite{CrossChannel,Sok,InteropSurvey}. Numerous cross-chain transfer protocols improve interoperability~\cite{ParaSwap,PACE,MAP,DataInterop,IvyCross}, but payment privacy has received less attention. Publicly observable addresses and transaction records can be correlated across blockchains to infer payment relationships and user behavior~\cite{Sok,zkCross,CAPE}. Protocols requiring frequent on-chain interactions also incur confirmation latency and transaction fees~\cite{Ultraviolet,Duplex}. These limitations motivate a cross-chain payment system that combines interoperability with security, user privacy, and payment efficiency.

\begin{figure}[ht]
    \centering
    \includegraphics[width=\columnwidth]{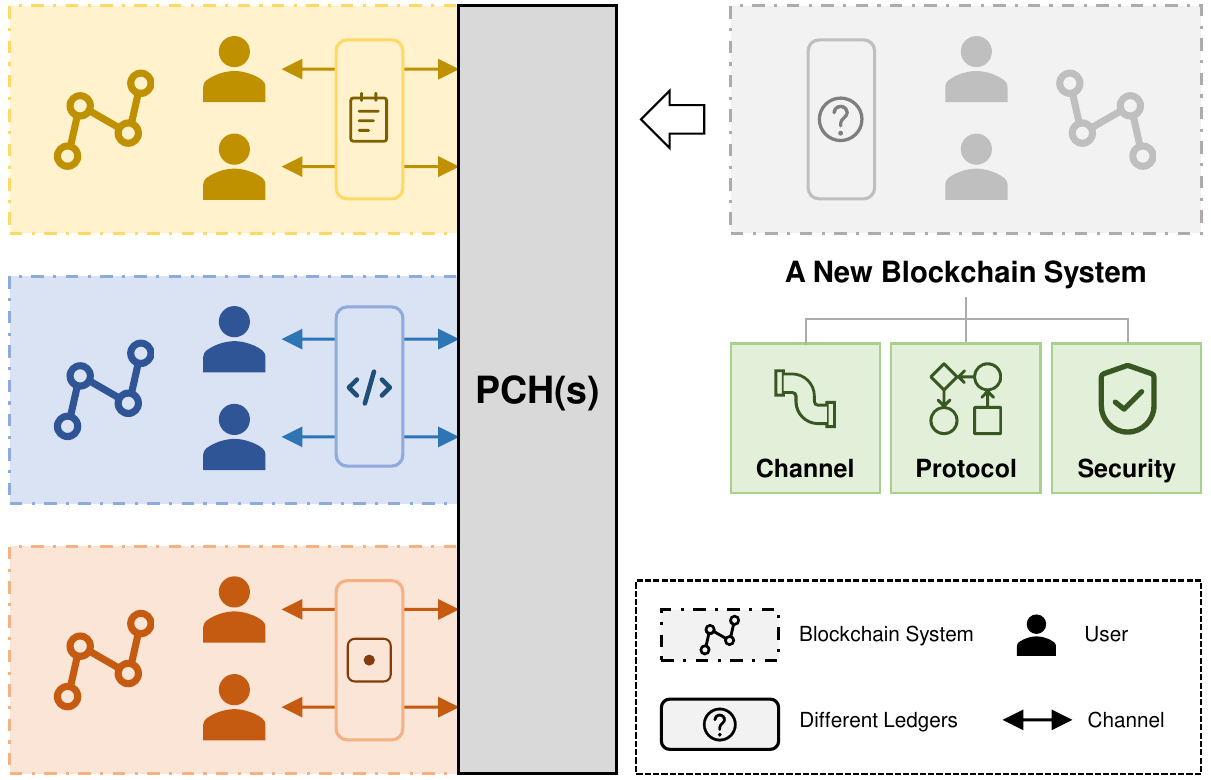}
    \caption{Dependence of existing PCH schemes on specialized channel, protocol, and security designs.}
    \label{fig:intro}
\end{figure}

Payment channel hubs (PCHs) enable efficient and private off-chain payments by allowing users to transact through a hub~\cite{OffchainSurvey}. Early schemes based on hubs include Bolt~\cite{Bolt}, Perun~\cite{Perun}, NOCUST~\cite{NOCUST}, and Teechain~\cite{Teechain}. They support efficient, scalable, or private off-chain payments. Their reliance on specific blockchain functions, smart contracts, or trusted hardware limits interoperability across heterogeneous blockchain environments. Subsequent protocols, including A$^2$L~\cite{A2L}, A$^2$L$^+$~\cite{A2LPlus}, and BlindHub~\cite{BlindHub}, reduce platform dependence and improve interoperability. They also improve unlinkability and atomicity through different cryptographic constructions. P2C2T~\cite{P2C2T} and EC2P~\cite{EC2P} further extend private payments with PCH to cross-chain settings. XPCH~\cite{XPCH} connects multiple PCHs serving different blockchains, allowing each hub to manage its local channels while distributing payment processing and liquidity requirements across hubs.

However, existing PCH protocols still have limitations. Some protocols describe payment channels only through abstract functionalities. They do not explain how the required channel states and operations are realized on heterogeneous ledgers. Other protocols build their payment flows around particular ledger models. Their payment logic depends on native transaction structures and ledger-specific procedures for authorizing, updating, and settling channel states. For example, the channel instantiations presented by BlindHub~\cite{BlindHub} and XPCH~\cite{XPCH} target only unspent transaction output (UTXO) ledgers. EC2P~\cite{EC2P} uses separate protocols for each transfer direction between Turing-complete and non-Turing-complete blockchains. Although effective in their intended settings, these designs lack a common way to coordinate channel implementations across different ledger models. As illustrated in Fig.~\ref{fig:intro}, incorporating a blockchain with a different ledger or execution model requires a new channel realization and may also require protocol adaptation and corresponding security analysis, limiting practical interoperability.

Addressing this gap requires identifying the behavior shared by heterogeneous channel implementations and coordinating them through a consistent private payment flow. This raises two challenges.

\textbf{CH1.} \emph{How can heterogeneous payment channels be used uniformly?}
Payment channels on different blockchains rely on different transaction structures, validation rules, and on-chain capabilities. Therefore, supporting them within a common framework requires identifying the channel behavior shared across heterogeneous blockchains and expressing it independently of ledger-specific mechanisms.

\textbf{CH2.} \emph{How can heterogeneous payment channels be coordinated for secure cross-chain payments?}
A cross-chain payment may traverse multiple channels implemented on different blockchains, requiring them to work together as a single end-to-end payment flow. Therefore, the payment protocol needs to coordinate these heterogeneous channels through common interfaces rather than depend on the native transactions of any particular blockchain, while preserving payment security and user privacy.

To address these challenges, we propose UniPCH, a unified two-layer framework for cross-chain payments via PCHs. UniPCH adopts a multi-hub architecture similar to that of XPCH~\cite{XPCH}. Multiple PCHs connect the two layers: at the lower layer, each PCH manages channels local to its ledger through UniPC; at the upper layer, the PCHs participate in the UniPCH protocol to form a complete cross-chain payment path. For \textbf{CH1}, UniPC defines common channel behavior and security requirements for cross-chain payments. Its authorization lifecycle separates task evidence collection from confirmation of the complete candidate state. We provide realizations for UTXO ledgers and account ledgers that support smart contracts, while other blockchain environments can be incorporated through conforming realizations. For \textbf{CH2}, UniPCH invokes only the UniPC interfaces along a path through multiple PCHs, allowing the same payment protocol to coordinate heterogeneous realizations without interpreting native transactions.

The main contributions of this paper are summarized as follows.

\begin{itemize}
    \item We propose UniPCH, a unified two-layer framework that separates channel execution on individual ledgers from private cross-chain payment logic, providing an architectural basis for interoperable payments across heterogeneous ledgers.

    \item We develop UniPC, a payment channel abstraction that specifies common update semantics and security requirements for different ledger realizations. Its authorization lifecycle separates conditional task evidence from confirmation of the complete channel state. Selective refresh retains valid task evidence across updates and refreshes affected tasks. We provide conforming UTXO and account realizations.
    
    \item We develop selective re-signing blind conditional signatures (S-RBCS) to bind conditional authorization across hubs to the tasks generated by each realization and refresh only the affected evidence. Building on S-RBCS, we construct a payment protocol over UniPC. It coordinates homogeneous and mixed ledger paths without interpreting native transactions, preserves payment security and user privacy, and reduces unnecessary sequential interactions.

    \item We formalize component and system security and prove that UniPCH satisfies the stated security goals. We implement a Go prototype and evaluate its performance against representative PCH schemes under different network conditions and system parameters. The results demonstrate competitive efficiency, characterize the cost trends of both realizations, and quantify the benefits of selective evidence refresh.
\end{itemize}

\section{Related Work}
\label{sec:related-work}

\noindent\textbf{Payment Channel.}
A payment channel (PC)~\cite{Duplex,Lightning} is an off-chain payment technique in which two parties open and fund a channel on chain, then conduct multiple payments by updating their channel balances off chain. Channels opened by different users form a payment channel network (PCN)~\cite{PCN1,Lightning,Sprites}, which routes payments between users without a direct channel. In a payment channel hub (PCH), users open channels with a hub that coordinates their transfers. This structure shortens payment paths, simplifies routing, and supports mechanisms that hide relationships between payers and payees. Based on platform dependence and cross-chain support, we group PCH schemes into platform-specific, platform-portable, and cross-chain PCHs. Table~\ref{tab:pch-comparison} compares representative schemes.

\begin{table*}[ht]
\centering
\caption{Comparison with PCH Schemes}
\label{tab:pch-comparison}
\renewcommand{\arraystretch}{1.08}
\footnotesize
\begin{threeparttable}
\setlength{\tabcolsep}{2.8pt}
\begin{tabular}{|c|c|c|c|c|c|c|c|}
\hline
\textbf{Scheme} & \textbf{Interoperability} & \textbf{Atomicity} & \textbf{Unlinkability} & \textbf{Griefing Resistance} & \textbf{Multi-Hubs} & \textbf{Unified PC Security} & \textbf{Unified Framework} \\
\hline
Bolt~\cite{Bolt} & \emptycircle~(Zcash) & \fullcircle & \fullcircle & \emptycircle & \fullcircle & \emptycircle & \emptycircle \\
Perun~\cite{Perun} & \emptycircle~(Ethereum) & \fullcircle & \emptycircle & \fullcircle & \emptycircle & \emptycircle & \emptycircle \\
NOCUST~\cite{NOCUST} & \emptycircle~(Ethereum) & \fullcircle & \emptycircle & \fullcircle & \emptycircle & \emptycircle & \emptycircle \\
Teechain~\cite{Teechain} & \halfcircle~(TEE) & \fullcircle & \fullcircle & \fullcircle & \fullcircle & \emptycircle & \emptycircle \\
\hline
TumbleBit~\cite{TumbleBit} & \halfcircle~(HTLC) & \fullcircle & \fullcircle & \emptycircle & \emptycircle & \emptycircle & \emptycircle \\
A$^2$L~\cite{A2L} / A$^2$L$^+$~\cite{A2LPlus}  & \fullcircle & \fullcircle & \fullcircle & \fullcircle & \emptycircle & \emptycircle & \emptycircle \\
BlindHub~\cite{BlindHub} & \fullcircle & \fullcircle & \fullcircle & \fullcircle & \emptycircle & \emptycircle & \emptycircle \\
\hline
P2C2T~\cite{P2C2T} & \fullcircle & \fullcircle & \fullcircle & \fullcircle & \emptycircle & \emptycircle & \emptycircle \\
EC2P~\cite{EC2P} & \fullcircle & \fullcircle & \fullcircle & \fullcircle & \emptycircle & \emptycircle & \emptycircle \\
XPCH~\cite{XPCH} & \fullcircle & \fullcircle & \fullcircle & \fullcircle & \fullcircle & \emptycircle & \emptycircle \\
\textbf{UniPCH} & \fullcircle & \fullcircle & \fullcircle & \fullcircle & \fullcircle & \fullcircle & \fullcircle \\
\hline
\end{tabular}
\begin{tablenotes}[flushleft]
\footnotesize
\item[] \fullcircle/\halfcircle/\emptycircle denote full, conditional, and no support, respectively. Interoperability denotes payment support across different blockchains or execution environments within each scheme's scope. Multi-Hubs indicates whether one payment can be coordinated across multiple hubs or multiple nodes; Unified PC Security indicates whether a common security definition applies to distinct payment channel realizations; Unified Framework indicates whether both channel implementations and the cross-chain payment protocol are unified through common interfaces.
\end{tablenotes}
\end{threeparttable}
\end{table*}

\noindent\textbf{Platform-Specific PCH.}
Platform-specific PCHs rely on capabilities provided by particular ledgers or execution environments. Bolt~\cite{Bolt} combines blind signatures, commitments, and ledger support to construct anonymous bidirectional channels. Perun~\cite{Perun} builds virtual payment hubs from smart contract state channels and provides a formal security proof, but requires a Turing-complete platform. NOCUST~\cite{NOCUST} protects commit-chain users through an on-chain contract, periodic state commitments, and challenge-response dispute mechanisms. Teechain~\cite{Teechain} protects channel states using trusted execution environments. These dependencies bind channel states, transactions, and dispute handling to specialized ledger or hardware capabilities, limiting portability and interoperability across heterogeneous blockchain environments.

\noindent\textbf{Platform-Portable PCH.}
Platform-portable PCHs use broadly available cryptographic or scripting capabilities to reduce requirements on the underlying blockchain and improve portability and interoperability. TumbleBit~\cite{TumbleBit} uses puzzle promises and puzzle solving to enable unlinkable payments through an untrusted intermediary with limited scripting. A$^2$L~\cite{A2L} replaces payment conditions based on hashes with adaptor signatures and formalizes anonymous atomic locks. A$^2$L$^+$~\cite{A2LPlus} revisits this construction and its security using linear-only encryption. BlindHub~\cite{BlindHub} combines blind and adaptor signatures to support payments with variable amounts on ledgers compatible with Bitcoin.

A$^2$L and A$^2$L$^+$ rely on an ideal functionality for generalized channels and do not provide realizations at the ledger level. BlindHub realizes BlindChannel only for ledgers compatible with Bitcoin. Their cryptographic mechanisms can be adapted for cross-chain payments, but supporting a new ledger model still requires a corresponding channel realization and the integration of update, authorization, and settlement procedures specific to that ledger. These protocols therefore cannot directly coordinate channels built on different ledger models within a single cross-chain payment.

\noindent\textbf{Cross-Chain PCH.}
Cross-chain PCHs enable private payments through hubs across different ledgers. P2C2T~\cite{P2C2T} combines threshold anonymous atomic locks with verifiable timed techniques based on discrete logarithms to provide atomicity, unlinkability, and indistinguishability with minimal ledger functionality. EC2P~\cite{EC2P} uses separate NTC-to-TC and TC-to-NTC protocols for payments between Turing-complete and non-Turing-complete blockchains. XPCH~\cite{XPCH} connects multiple PCHs through adaptor payment channels and uses RA$^2$L to propagate payment conditions across hubs. It supports multiple bidirectional payments per epoch and distributes liquidity and processing across hubs, reducing individual hub load.

The current implementations of P2C2T and XPCH target UTXO ledgers similar to Bitcoin, tying their transfer or channel procedures to native transaction and signature structures. EC2P covers both blockchain types through protocols tailored to each transfer direction. All three therefore remain tied to particular ledger types or protocols tailored to specific transfer directions.

Platform-specific PCHs depend on specialized capabilities. Platform-portable PCHs leave channel realizations unspecified or provide them for limited ledger types. Cross-chain PCHs retain dependencies on ledger types or transfer directions. Supporting a new ledger model requires a new channel realization and may also require protocol adaptation and corresponding security analysis. UniPCH coordinates heterogeneous channels through common UniPC semantics and interfaces. A new ledger model requires a conforming realization that implements these semantics and satisfies the UniPC security requirements using native mechanisms. The upper payment protocol can then be reused without interpreting native transactions, improving interoperability across heterogeneous blockchain environments.

\noindent\textbf{Comparison with XPCH.}
UniPCH and XPCH~\cite{XPCH} both adopt a multi-hub architecture, and UniPCH uses the re-signing adaptor signature (RAS) abstraction and re-signing blind conditional signature (RBCS) construction introduced in XPCH. XPCH's APC derives signing messages from commitment and descendant transactions, which its payment protocol processes through RA$^2$L. UniPC defines common state transitions and security requirements for different channel realizations. Each realization derives authorization tasks. Evidence for these tasks is retained or refreshed according to whether it remains valid for the candidate state. The parties then confirm the complete state after verifying all required evidence. S-RBCS binds RBCS condition propagation to these tasks, enabling the same payment protocol to coordinate UTXO and account channels.

\section{Preliminaries}
\label{sec:preliminaries}

This section fixes the notation used throughout the paper and reviews the cryptographic abstractions required by UniPCH.

\noindent\textbf{Notation.}
Let $\lambda$ be the security parameter. Let $\mathbb{G}$ be a cyclic group of prime order $q$ with generator $G$, and let $\mathbb{Z}_q^*=\mathbb{Z}_q\setminus\{0\}$. We write $a\xleftarrow{\$}\mathcal{S}$ for uniform sampling from a finite set $\mathcal{S}$, $H(\cdot)$ for a collision-resistant hash function, and $\Vert$ for concatenation. We use $\Pi_{\mathrm{Sig}}$ and $\Pi_{\mathrm{ZK}}$ to denote the state confirmation signature scheme and zero-knowledge proof system, respectively. We assume that $\Pi_{\mathrm{Sig}}$ is unforgeable under chosen-message attacks.

\noindent\textbf{Blind Randomizable Signatures.}
Let $\Pi_{\mathrm{BRS}}$ denote a blind randomizable signature scheme~\cite{PS} that supports key generation, blind signing, unblinding, verification, and signature randomization. During blind issuance, the user commits to a message and proves through $\Pi_{\mathrm{ZK}}$ that the commitment is well formed. The signer returns a blind signature, which the user unblinds and randomizes into a signature valid for the message. UniPCH uses this primitive to issue a credential used once in each payment. We rely on the blindness, unforgeability, and correctness under unblinding and randomization of $\Pi_{\mathrm{BRS}}$.

\noindent\textbf{Adaptor and Re-Signing Adaptor Signatures.}
Let $\Pi_{\mathrm{AS}}$ denote an adaptor signature scheme~\cite{AdaptorSignature}. Given a signing key $sk$, message $m$, and statement $Y$, $\mathsf{PreSign}(sk,m,Y)$ produces a pre-signature $\widetilde{\sigma}$. $\mathsf{PreVerify}$ verifies the pre-signature, $\mathsf{Adapt}$ completes it using witness $y$, and $\mathsf{Extract}$ recovers $y$ from $\widetilde{\sigma}$, the completed signature $\sigma$, and $Y$.

Let $\Pi_{\mathrm{RAS}}$ denote the re-signing adaptor signature abstraction of XPCH~\cite{XPCH}, which extends $\Pi_{\mathrm{AS}}$ with $\mathsf{RSign}$. When an update changes the ledger message from $m$ to $m'$, $\mathsf{RSign}$ uses the previously used signing key and statement to generate the replacement pre-signature $\widetilde{\sigma}'$ for $m'$. UniPCH uses $\Pi_{\mathrm{AS}}$ and $\Pi_{\mathrm{RAS}}$ to bind conditional transfer completion to payment conditions and refresh evidence when ledger-defined messages change. We rely on their correctness, unforgeability, and witness extractability.

\noindent\textbf{Linear-Only Encryption.}
A linear-only encryption scheme $\Pi_{\mathrm{LOE}}=(\mathsf{KeyGen},\mathsf{Enc},\mathsf{Dec},\mathsf{RandEnc})$ encrypts witnesses in $\mathbb{Z}_q$ and supports public scalar multiplication~\cite{A2LPlus}. For $Y=yG$, $C\leftarrow\mathsf{Enc}(pk_{\mathrm{LOE}},y)$, and $\alpha\xleftarrow{\$}\mathbb{Z}_q^*$, anyone can compute $(Y',C')=(\alpha Y,\mathsf{RandEnc}(C,\alpha))$. The resulting pair encrypts $\alpha y$ without revealing $y$ or $\alpha$. A proof $\pi_C$ shows that $Y$ and $C$ contain the same witness. UniPCH uses this property to randomize encrypted payment conditions across hubs. We rely on the correctness and confidentiality under randomization of $\Pi_{\mathrm{LOE}}$. The proof system $\Pi_{\mathrm{ZK}}$ for $\pi_C$ is complete, sound, and zero knowledge.

\noindent\textbf{Re-Signing Blind Conditional Signatures.}
Let $\Pi_{\mathrm{RBCS}}$ denote the RBCS construction of XPCH~\cite{XPCH}. It combines blind randomizable credentials, LOE ciphertexts, and re-signing adaptor signatures to coordinate linked conditional authorizations among $S$, $R$, and their participating hubs. It provides setup and \emph{near}/\emph{far} variants of registration, promise, and solving. In the near variant, an endpoint communicates directly with the coordinating hub; in the far variant, it communicates through an intermediate hub. It satisfies correctness, blindness, unlockability, and unforgeability. These properties ensure valid outputs in honest executions, endpoint unlinkability, witness release across linked authorizations, and resistance to unauthorized completion, respectively.

UniPCH uses $\Pi_{\mathrm{RBCS}}$ as the basis for condition propagation across hubs. Building on it, we develop selective re-signing blind conditional signatures (S-RBCS), which bind this propagation to the authorization tasks generated by UniPC realizations and refresh the evidence required by each realization.

\section{System Overview}
\label{sec:overview}

This section presents the UniPCH execution environment, framework with two layers, payment flow, threat model, and security goals.

\subsection{Environment}

\noindent\textbf{System Model.}
The system comprises a set of blockchains $\mathcal{B}=\{B_1,\ldots,B_n\}$, a set of PCHs $\mathcal{H}=\{H_1,\ldots,H_n\}$, and a set of users $\mathcal{U}=\bigcup_{i=1}^{n}\mathcal{U}_i$. The collection $\mathcal{B}$ is potentially heterogeneous. Each blockchain $B_i$ is associated with a designated hub $H_i$, which serves the users in $\mathcal{U}_i$. Each user $u\in\mathcal{U}_i$ opens and funds a payment channel $c_{u,H_i}$ on $B_i$ for off-chain transfers with $H_i$. All hubs agree in advance on a common blockchain $B_h\in\mathcal{B}$ for their mutual channels. Every pair of distinct hubs establishes and funds a bilateral channel $c_{H_i,H_j}$ on $B_h$, forming a fully connected overlay.

For a payment from sender $S\in\mathcal{U}_i$ to receiver $R\in\mathcal{U}_j$, their access hubs are $H_S=H_i$ and $H_R=H_j$, respectively. Before execution, $S$ and $R$ jointly select a coordinating hub $H_C\in\mathcal{H}$ at random, yielding the logical path $(S,H_S,H_C,H_R,R)$. If $H_C$ is directly connected to $S$ or $R$, then $H_C=H_S$ or $H_C=H_R$, respectively, and the corresponding edge between hubs is omitted. If $H_C$ is directly connected to both endpoints, then $H_C=H_S=H_R$ and the path reduces to a single-hub payment. For the full path with three distinct hubs, the four channels are established on $(B_i,B_h,B_h,B_j)$ in order. The endpoint channels use their respective local blockchains, and both channels between hubs use the common blockchain $B_h$. When $B_i\neq B_j$, the path realizes a cross-chain payment. 

Throughout, \emph{payment} denotes the complete operation from $S$ to $R$, \emph{transfer} a value movement on one UniPC channel, and \emph{transaction} an object native to a ledger and used by a particular UniPC realization.

\noindent\textbf{System Assumptions.} UniPCH relies on the following assumptions.

\noindent(1) Payments are executed in epochs, each divided into the ordered protocol phases described below. Payments assigned to the same epoch execute the corresponding phase during the same time interval. All participants have access to a synchronized clock with sufficient granularity to identify the current epoch and phase. Updates to the same UniPC channel are processed sequentially, with each update based on the latest installed state.

\noindent(2) For each payment, the amounts on the participating blockchains are fixed before execution according to an exchange rate agreed by the participants and remain unchanged throughout the protocol.

\noindent(3) Internal communication channels are authenticated and confidential, while the direct communication channel between $S$ and $R$ is anonymous.

\noindent(4) The underlying blockchains correctly process valid transactions or contract calls and confirm them within their assumed delay bounds.

\subsection{Our Framework}

\noindent\textbf{Architecture with Two Layers.} UniPCH separates native channel execution from cross-chain payment coordination. As shown in Fig.~\ref{fig:architecture}, PCHs bridge the two layers by managing user channels and channels between hubs through UniPC at the lower layer. At the upper layer, the PCHs coordinate cross-chain payments across these channels. User channels use their local blockchains, whereas all channels between hubs use the common blockchain $B_h$.

\begin{figure}[ht]
\centering
\includegraphics[width=\linewidth]{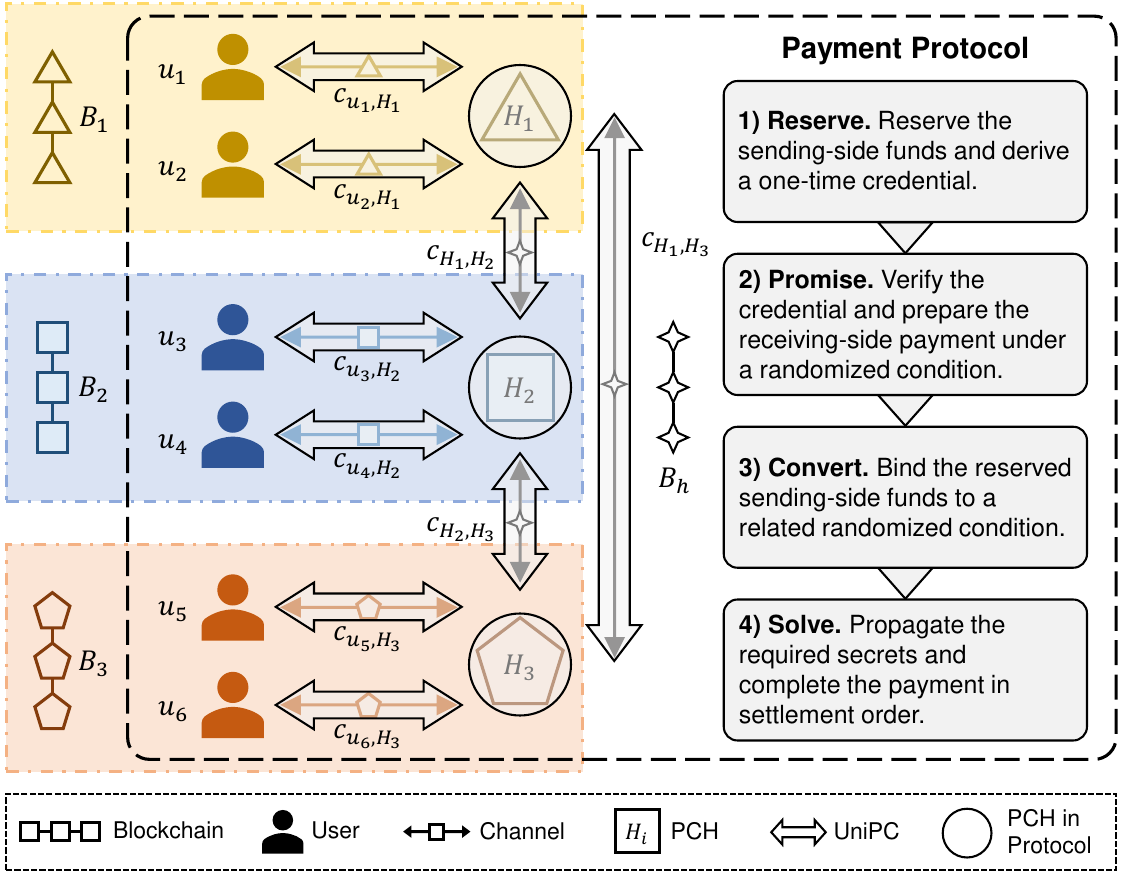}
\caption{Overview of UniPCH.}
\label{fig:architecture}
\end{figure}

\noindent\textbf{UniPC.} UniPC defines common state effects and security requirements for different channel realizations. Its six operations cover fund reservation, conditional prepayment, conversion, claiming, and refunds. This workflow is structurally similar to XPCH's APC~\cite{XPCH}, which implements it using UTXO transactions. UniPC's authorization lifecycle separates task evidence from confirmation of the complete candidate state. Each realization maps candidate updates to native transactions or contract calls and derives authorization tasks and settlement objects. The parties verify all required evidence before confirming the state and install the update after obtaining both state confirmation signatures. The upper protocol coordinates heterogeneous channels through these interfaces without interpreting native transactions. Section~\ref{sec:unipc} formalizes UniPC and provides conforming realizations for UTXO ledgers and account ledgers with smart contracts.

\noindent\textbf{UniPCH.} In UniPCH, multiple PCHs connect the local payment domains of different blockchains into a complete payment path. Each PCH manages its user channels and channels between hubs through the corresponding UniPC realizations. For a payment from $S$ to $R$, $H_S$ and $H_R$ connect the two users to the path, while $H_C$ coordinates the sender and receiver sides. When $H_S=H_C$ or $H_R=H_C$, the corresponding step between hubs is omitted.

As shown in Fig.~\ref{fig:architecture}, $\mathsf{Reserve}$ lets $S$ obtain a credential from $H_C$ backed by sender reservations. After these reservations are jointly confirmed, $S$ sends the credential to $R$. In $\mathsf{Promise}$, $R$ presents it to $H_C$ to request one prepayment. After the corresponding receiver-side updates are jointly confirmed, $R$ forwards the associated encrypted payment condition to $S$. In $\mathsf{Convert}$, $S$ uses this information to convert the sender reservations into conditional prepayments. In $\mathsf{Solve}$, $H_C$ first claims funds on the sender side. The required secrets are then recovered, transformed, and propagated to complete the remaining claims in order. Randomization of the credential and encrypted payment condition hides the association between $S$ and $R$. Each realization derives its own authorization and refresh tasks within this flow.

\subsection{Security}
\noindent\textbf{Threat Model.} We consider the following threats to UniPCH.

\noindent(1) \emph{Channel state attack.} A malicious channel participant may attempt to install an invalid, incompletely authorized, not jointly confirmed, or obsolete state, or cause conflicting terminal outcomes to be accepted.

\noindent(2) \emph{Fund theft attack.} A malicious user or hub may attempt to obtain an incoming transfer without fulfilling the linked outgoing transfer.

\noindent(3) \emph{Griefing attack.} A malicious receiver or receiver access hub may request multiple prepayments from an honest $H_C$ without distinct confirmed sender reservations, locking excessive hub funds.

\noindent(4) \emph{Linkage attack.} For privacy, $S$ and $R$ are honest, and participating hubs are honest-but-curious and may collude by sharing their protocol views and observations from public ledgers to link $S$ and $R$.

\noindent\textbf{Security Goals.} UniPCH addresses these threats through the following goals.

\noindent(1) \emph{Local UniPC security.} An honest participant accepts only channel states that are valid, completely authorized, and jointly confirmed. Such states preserve value, prevent obsolete states from becoming terminal, and enforce exclusive terminal outcomes.

\noindent(2) \emph{Economic atomicity.} An honest endpoint either completes its intended transfer or recovers its locked funds, while every claimed outgoing transfer of an honest intermediary is backed by an incoming transfer that has been claimed or remains claimable.

\noindent(3) \emph{Griefing resistance.} Every receiver prepayment installed by an honest $H_C$ is backed by a distinct $\mathsf{Reserve}$ session whose sender reservation updates of the agreed amount have been jointly confirmed, and each credential authorizes at most one such prepayment.

\noindent(4) \emph{Session unlinkability.} Within an epoch, colluding honest-but-curious hubs cannot distinguish how honest senders and receivers are paired among payments with the same $H_C$ and agreed amount, except with negligible advantage.

\section{UniPC}
\label{sec:unipc}

This section presents UniPC, a payment channel abstraction independent of ledger type. We define its common channel state, six operations, authorization and confirmation interfaces, and security requirements. We also provide conforming UTXO and account realizations. Fig.~\ref{fig:unipc-state-authorization} shows how the operations produce candidate states and how the authorization lifecycle installs a selected candidate.

\begin{figure}[ht]
\centering
\includegraphics[width=\columnwidth]{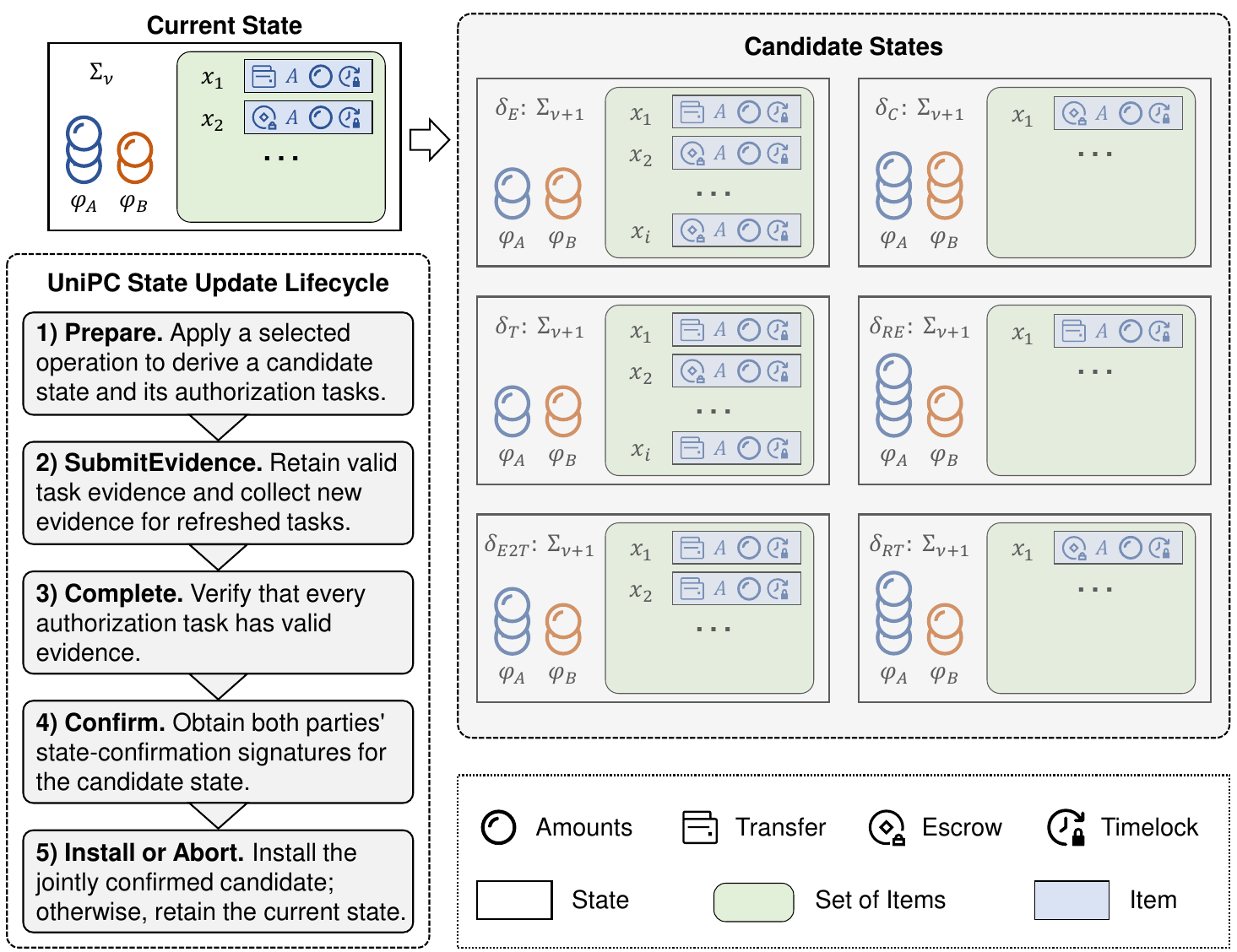}
\caption{UniPC state transitions and the authorization lifecycle.}
\label{fig:unipc-state-authorization}
\end{figure}

\subsection{State and Operations}
\label{sec:unipc-state}

We first define the common channel state maintained by UniPC. At version $\nu$, a channel between parties $A$ and $B$ is represented by
\begin{equation}
 \begin{gathered}
    \Sigma_\nu=(\nu,\phi_A,\phi_B,\mathcal{I}),\\
    x=(\mathsf{id},\mathsf{type},\mathsf{payer},v,\tau),
 \end{gathered}
    \label{eq:unipc-state}
\end{equation}
where $\phi_A$ and $\phi_B$ are the spendable balances and $\mathcal{I}$ is the collection of active items. An item $x$ represents either locked funds or a conditional prepayment. $\mathsf{id}$ uniquely identifies the item, $\mathsf{payer}$ identifies its payer, and the other channel participant is its payee. The fields $v$ and $\tau$ denote the reserved value and refund deadline.

To open a channel, $A$ and $B$ lock deposits $\psi_A$ and $\psi_B$ on the underlying ledger. The channel is initialized as
\begin{equation}
    \Sigma_0=(0,\psi_A,\psi_B,\emptyset),
    \label{eq:unipc-initial-state}
\end{equation}
Thus, the initial spendable balances equal the deposits and the active item set is empty. Subsequent updates use the six operations below.

We describe a transfer of amount $\mu$ from $A$ to $B$ on one channel. The reverse direction is symmetric. Both escrow and conditional transfer items reserve $\mu$, representing locked funds and a prepayment, respectively. We write $e(\mu)$ and $t(\mu)$ as shorthand for their complete item tuples. We denote $\mathsf{Escrow}$, $\mathsf{Transfer}$, $\mathsf{Escrow2Transfer}$, $\mathsf{ClaimTransfer}$, $\mathsf{RefundEscrow}$, and $\mathsf{RefundTransfer}$ by $\delta_{\mathsf{E}}$, $\delta_{\mathsf{T}}$, $\delta_{\mathsf{E2T}}$, $\delta_{\mathsf{C}}$, $\delta_{\mathsf{RE}}$, and $\delta_{\mathsf{RT}}$, respectively. Their state effects are shown below, where $\mathcal{I}$ contains the unaffected items in each row and omitted components remain unchanged:

{\small
\begin{equation}
\begin{aligned}
 \delta_{\mathsf{E}}:\;&
 (\phi_A,\mathcal{I})
 \rightarrow(\phi_A-\mu,\mathcal{I}\cup\{e(\mu)\}),\\
 \delta_{\mathsf{T}}:\;&
 (\phi_A,\phi_B,\mathcal{I})
 \rightarrow(\phi_A-\mu,\phi_B,\mathcal{I}\cup\{t(\mu)\}),\\
 \delta_{\mathsf{E2T}}:\;&
 (\phi_A,\phi_B,\mathcal{I}\cup\{e(\mu)\})
 \rightarrow(\phi_A,\phi_B,\mathcal{I}\cup\{t(\mu)\}),\\
 \delta_{\mathsf{C}}:\;&
 (\phi_A,\phi_B,\mathcal{I}\cup\{t(\mu)\})
 \rightarrow(\phi_A,\phi_B+\mu,\mathcal{I}),\\
 \delta_{\mathsf{RE}}:\;&
 (\phi_A,\mathcal{I}\cup\{e(\mu)\})
 \rightarrow(\phi_A+\mu,\mathcal{I}),\\
 \delta_{\mathsf{RT}}:\;&
 (\phi_A,\phi_B,\mathcal{I}\cup\{t(\mu)\})
 \rightarrow(\phi_A+\mu,\phi_B,\mathcal{I}).
\end{aligned}
\label{eq:unipc-transitions}
\end{equation}
}

$\mathsf{Escrow}$ receives $(\mu,\tau)$, deducts $\mu$ from the payer, and creates an escrow item with an identifier generated by the realization. $\mathsf{Transfer}$ receives $(\mu,\tau,Y)$, similarly deducts $\mu$, and creates a conditional transfer item together with its authorization task for statement $Y$. $\mathsf{Escrow2Transfer}$ receives an escrow identifier, a new deadline, and $Y$, and replaces the escrow with a transfer while preserving its reserved value. Given a witness $y$ for the statement in the corresponding task, $\mathsf{ClaimTransfer}$ removes the transfer and credits $\mu$ to the payee. After the refund deadline, $\mathsf{RefundEscrow}$ and $\mathsf{RefundTransfer}$ remove the corresponding item and return $\mu$ to the payer.

An operation is admissible only if the payer has sufficient balance, each referenced item has the expected type and payer, and the applicable condition or deadline is satisfied. In particular, $\mathsf{ClaimTransfer}$ requires a valid witness for $Y$, whereas both refund operations require the local ledger time to have reached $\tau$. Applying an admissible operation increments $\nu$ and updates $\mathcal{I}$, producing a candidate state. A claimed or refunded item is removed and cannot be consumed again.

Every state reachable from $\Sigma_0$ through admissible operations satisfies the conservation invariant
\begin{equation}
    \phi_A+\phi_B+\sum_{x\in\mathcal{I}}v_x
    =\psi_A+\psi_B.
    \label{eq:unipc-conservation}
\end{equation}
A realization may introduce fields and public parameters specific to its ledger in the native encoding, but it must preserve the operation effects in \eqref{eq:unipc-transitions} and the invariant in \eqref{eq:unipc-conservation}. The protocol in the upper layer can use the same six operations regardless of how each realization encodes and executes them.

\subsection{Authorization and State Installation}
\label{sec:unipc-authorization}

The transition rules above are used to construct an admissible candidate state. The parties then collect and verify the task evidence required by the state update and complete the update after confirming the candidate. UniPC uses an authorization and confirmation lifecycle to determine whether the candidate is installed or discarded. We represent an update by the record
\begin{equation}
    \mathcal{R}_{\nu}
    =(\Sigma_{\nu},\mathcal{P}_{\nu},\mathcal{E}_{\nu},\Gamma_{\nu}),
    \label{eq:unipc-record}
\end{equation}
where $\Sigma_{\nu}$ is the candidate state at version $\nu$, $\mathcal{P}_{\nu}$ is the complete set of authorization tasks generated by the realization, $\mathcal{E}_{\nu}$ records the verified evidence for those tasks, and $\Gamma_{\nu}=(\gamma_{A,\nu},\gamma_{B,\nu})$ is the state confirmation signature pair.

Here, a task $\eta\in\mathcal{P}_{\nu}$ is represented as
\begin{equation}
    \eta=(\mathsf{signer},m,Y),
    \label{eq:unipc-task}
\end{equation}
where $\mathsf{signer}$ identifies the participant that provides authorization, $m$ is the authorization message defined by the ledger realization, and $Y$ is the condition statement for an adaptor signature task. The concrete realization defines $m$ so that it binds the relevant channel context, authorization role, and every item or settlement field whose change must invalidate the evidence. Each $e\in\mathcal{E}_{\nu}$ is a verified task evidence object. When checking retained or newly submitted evidence against a task, the interface evaluates
\begin{equation}
    \mathsf{VfyEv}(\eta,e)=1.
    \label{eq:unipc-verify-evidence}
\end{equation}
The lifecycle consists of $\mathsf{Prepare}$, $\mathsf{SubmitEvidence}$, $\mathsf{Complete}$, $\mathsf{Confirm}$, and $\mathsf{Install}$. $\mathsf{Abort}$ leaves the local installed record unchanged. A participant retains any candidate it has confirmed and resumes it when retrying. Jointly confirmed candidates follow the realization's settlement rules.

\noindent\textbf{1) $\mathsf{Prepare}$.}
Given the current installed state $\Sigma$, an operation $\mathsf{op}$, and its inputs $\theta$, $\mathsf{Prepare}$ produces:
\begin{equation}
    (\Sigma',\mathcal{P}')
    \leftarrow
    \mathsf{Prepare}(\Sigma,\mathsf{op},\theta),
    \label{eq:unipc-prepare}
\end{equation}
where $\mathsf{op}$ identifies one of the six UniPC operations and $\theta$ contains the inputs required by that operation, such as the amount, item identifier, deadline, condition statement, and any inputs required by the realization. The resulting $\Sigma'$ is the candidate state, and $\mathcal{P}'$ is the complete set of tasks required to authorize it.

After $\mathcal{P}'$ is returned, it is compared with the verified evidence $\mathcal{E}$ stored in the current installed record. The tasks already satisfied by existing evidence form
\begin{equation}
 \mathcal{P}_{\mathrm{keep}}
 =\{\eta\in\mathcal{P}':\exists e\in\mathcal{E},
   \mathsf{VfyEv}(\eta,e)=1\}.
\label{eq:unipc-refresh}
\end{equation}
The remaining tasks that require new evidence form $\mathcal{P}_{\mathrm{refresh}}=\mathcal{P}'\setminus\mathcal{P}_{\mathrm{keep}}$.

\noindent\textbf{2) $\mathsf{SubmitEvidence}$.}
Evidence satisfying $\mathcal{P}_{\mathrm{keep}}$ is retained from the current installed record, while the corresponding channel participants generate new evidence for $\mathcal{P}_{\mathrm{refresh}}$. Each channel participant submits evidence for the tasks requiring its authorization. Each submitted evidence object is first checked using \eqref{eq:unipc-verify-evidence}. Evidence that passes verification is accepted as verified evidence.

\noindent\textbf{3) $\mathsf{Complete}$.}
Let $\mathcal{E}'$ denote the set containing the accepted new evidence and the retained evidence for $\mathcal{P}_{\mathrm{keep}}$. UniPC uses the following predicate to check whether every task in $\mathcal{P}'$ has valid evidence:
{\small
\begin{equation}
 \mathsf{Comp}(\mathcal{P}',\mathcal{E}')
 =\bigwedge_{\eta\in\mathcal{P}'}
   \left(\exists e\in\mathcal{E}':
   \mathsf{VfyEv}(\eta,e)=1\right).
 \label{eq:unipc-complete}
\end{equation}
}
\noindent\textbf{4) $\mathsf{Confirm}$.}
An honest participant confirms at most one successor candidate for each installed version. After \eqref{eq:unipc-complete} holds in its local view, participant $i\in\{A,B\}$ signs the candidate state message $m'_{\mathsf{st}}=\mathsf{StateMsg}(\mathsf{dom}_c,\Sigma')$:
\begin{equation}
 \gamma'_i\leftarrow\mathsf{Sign}(sk_i,m'_{\mathsf{st}}),
 \qquad \Gamma'=(\gamma'_A,\gamma'_B).
\label{eq:unipc-confirm}
\end{equation}
The confirmation certificate is valid when
\begin{equation}
 \mathsf{VfyConf}(\Sigma',\Gamma')
 =\bigwedge_{i\in\{A,B\}}
 \left[\mathsf{Verify}(pk_i,m'_{\mathsf{st}},\gamma'_i)=1\right].
\label{eq:unipc-verify-confirm}
\end{equation}
\noindent\textbf{5) $\mathsf{Install}$.}
The candidate state satisfies the installation condition when
\begin{equation}
 \mathsf{Inst}(\Sigma',\mathcal{P}',\mathcal{E}',\Gamma')
 =\mathsf{Comp}(\mathcal{P}',\mathcal{E}')\land\mathsf{VfyConf}(\Sigma',\Gamma').
\label{eq:unipc-installable}
\end{equation}
The initiator sends the candidate and its evidence. After verifying them and adding evidence for its own tasks, the counterparty returns its state confirmation signature. The initiator verifies the response and returns the second signature. Both parties install the candidate after verifying both signatures, while $\mathsf{Abort}$ retains the preceding record if any required evidence or signature is unavailable.

A realization conforms to UniPC when it implements the transitions in \eqref{eq:unipc-transitions}, preserves \eqref{eq:unipc-conservation}, and uses a fixed algorithm for deriving tasks that includes every participant authorization required by its native encoding. It installs a record after \eqref{eq:unipc-installable} evaluates to $1$ and maps $\mathsf{ClaimTransfer}$, $\mathsf{RefundEscrow}$, and $\mathsf{RefundTransfer}$ to terminal actions verifiable by the ledger. 

To incorporate a ledger with a different execution model, its realization implements native funding, state encoding, the six operation mappings, and task evidence generation and verification. The realization follows UniPC's confirmation and installation rules and uses native settlement mechanisms to enforce refund deadlines and handle obsolete states. It provides the common UniPC interfaces to the payment protocol after satisfying these requirements and Definition~\ref{def:unipc-security}.

\begin{definition}[Security of UniPC]
\label{def:unipc-security}
A conforming UniPC realization is secure if it satisfies \emph{transition correctness}, \emph{authorization security}, \emph{state freshness}, and \emph{settlement consistency}.
\end{definition}

Transition correctness requires every installed state to follow an admissible operation and preserve the channel funds. Authorization security requires valid evidence for every task and confirmation of the complete state by both participants. State freshness prevents an obsolete state from becoming terminal when an honest participant follows the prescribed response. Settlement consistency excludes conflicting claim and refund outcomes for the same item. The complete definitions of these properties are provided in the supplementary material.

\subsection{Realizations}
\label{sec:unipc-realizations}

As representative examples, we provide UniPC realizations for two widely used ledger architectures: the UTXO model used by Bitcoin~\cite{Bitcoin} and the account model used by Ethereum~\cite{Ethereum}.

\noindent\textbf{UTXO Realization.}
To open a channel, parties $A$ and $B$ contribute their deposits to a funding transaction $\mathsf{TX}_F$, whose output $o_F$ is locked by a 2-of-2 multisignature script under their funding keys. The parties use $o_F$ as the common input of a jointly signed initial commitment transaction and represent subsequent channel states by replacing this commitment transaction. At version $\nu$, the abstract state is encoded as
{\small
\begin{equation}
\begin{gathered}
 \mathsf{TX}_C^{\nu}:o_F\longrightarrow
 \bigl(o_A(\phi_A),o_B(\phi_B),\{o_x(v_x,\mathsf{scr}_x)\}_{x\in\mathcal I}\bigr),\\
 \mathsf{Enc}_{\mathrm{UTXO}}(\Sigma_{\nu})
 =\bigl(\mathsf{TX}_C^{\nu},\{\mathcal D_x\mid x\in\mathcal I\}\bigr),
\end{gathered}
\label{eq:unipc-utxo}
\end{equation}
}
where $\mathsf{TX}_C^{\nu}$ spends $o_F$. Its first two outputs $o_A$ and $o_B$ encode the spendable balances. Each $o_x(v_x,\mathsf{scr}_x)$ is an active item output of reserved value $v_x$ locked by script $\mathsf{scr}_x$, and $\mathcal D_x$ contains the descendant transactions implementing its permitted terminal actions. A reference to $o_x$ that is local to the realization supplies $\mathsf{id}_x$. The refund deadline $\tau_x$ is enforced by a timelocked refund path, through which the payer recovers an escrow or an unclaimed transfer after $\tau_x$. A transfer output additionally has a claim branch requiring the signatures associated with the payer and payee keys. The payer's signature on the descendant claim transaction is provided as an adaptor pre-signature under statement $Y$. After learning witness $y$, the payee completes it and spends the output together with its own signature.

The operations in \eqref{eq:unipc-transitions} are realized by modifying the commitment transaction as follows. $\mathsf{Escrow}$ deducts $\mu$ from the payer and appends an escrow output with deadline $\tau$. $\mathsf{Transfer}$ likewise deducts $\mu$ and appends a conditional transfer output with its deadline and claim and refund descendants. $\mathsf{Escrow2Transfer}$ replaces the referenced escrow output with a transfer output, preserves its reserved value, and updates $\tau$ to the new transfer deadline. $\mathsf{ClaimTransfer}$ removes the transfer output and credits the payee after the claim condition is satisfied, while $\mathsf{RefundEscrow}$ and $\mathsf{RefundTransfer}$ remove an expired item and return its value to the payer. In these admissible transitions, the balance and active item outputs satisfy \eqref{eq:unipc-conservation}.

For each state update, $\mathsf{Prepare}$ follows the rules above to construct the candidate commitment transaction $\mathsf{TX}_C'$ and its descendant transactions. For every active transfer $x$ paid by participant $i$, the realization derives an adaptor signature task whose signer is $i$, whose message $m_x$ is the signature hash of the descendant claim transaction, and whose statement is $Y_x$. Newly created tasks form $\mathcal{P}_{\mathrm{new}}$, while carried tasks whose claim transaction has changed form $\mathcal{P}_{\mathrm{affected}}$. Thus, the task set returned for the candidate is
\begin{equation}
 \mathcal{P}'_{\mathrm{UTXO}}
 =\mathcal{P}_{\mathrm{new}}
 \cup\mathcal{P}_{\mathrm{affected}},
 \label{eq:unipc-utxo-tasks}
\end{equation}
Each participant generates an adaptor pre-signature for the tasks assigned to it. Because an unsettled claim transaction spends an output of $\mathsf{TX}_C'$, replacing the commitment changes the referenced transaction hash and may change the output index. Its signing message therefore changes, so every active transfer carried into the new commitment belongs to $\mathcal{P}_{\mathrm{affected}}$ and its payer needs to refresh the corresponding evidence. Therefore, $\mathcal{P}_{\mathrm{keep}}=\emptyset$ and $\mathcal{P}_{\mathrm{refresh}}=\mathcal{P}'_{\mathrm{UTXO}}$. $\mathsf{SubmitEvidence}$ collects the adaptor pre-signatures required by $\mathcal{P}_{\mathrm{refresh}}$. $\mathsf{Complete}$ checks that every task has valid evidence, and $\mathsf{Confirm}$ obtains both ordinary signatures on $\mathsf{TX}_C'$. $\mathsf{Install}$ adopts the jointly confirmed candidate. $\mathsf{Abort}$ retains the preceding installed commitment and any candidate already confirmed by the participant.

The concrete bilateral authorization and installation process is as follows. The initiator sends the update request and its task evidence to the counterparty. The counterparty reconstructs the candidate transactions, checks the input, output amounts, scripts, item indices, deadlines, and value conservation, and uses \eqref{eq:unipc-verify-evidence} to verify every submitted pre-signature. It then generates the evidence for its own tasks. Once \eqref{eq:unipc-complete} holds, the counterparty signs $\mathsf{TX}_C'$ and returns its task evidence and state confirmation signature. The initiator verifies the response, signs the same $\mathsf{TX}_C'$, and returns its state confirmation signature. These two ordinary signatures form $\Gamma'$. After both signatures have been verified, the parties install the same candidate state. Failure to obtain the required evidence or either signature invokes $\mathsf{Abort}$, which leaves the local installed commitment unchanged.

The state encoding, operation mappings, authorization tasks, and installation procedure above constitute a conforming UniPC realization. For a transfer output, the shared item output prevents its claim and refund branches from both settling. Revocation keys and the mechanism for handling stale states prevent an obsolete commitment from replacing the latest jointly confirmed balances.

\begin{theorem}[Security of the UTXO Realization]
\label{thm:utxo-realization}
Assuming the unforgeability of $\Pi_{\mathrm{Sig}}$, $\Pi_{\mathrm{AS}}$, and $\Pi_{\mathrm{RAS}}$, and assuming that the underlying UTXO ledger correctly validates transactions and confirms valid transactions within the assumed delay bound, the UTXO realization above satisfies Definition~\ref{def:unipc-security}. The proof is provided in the supplementary material.
\end{theorem}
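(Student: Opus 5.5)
The proof will establish the four properties of Definition~\ref{def:unipc-security} one at a time. Each property is reduced either to a structural fact about the UTXO encoding in \eqref{eq:unipc-utxo} or to the unforgeability of one primitive. The ledger assumption is used only for the on-chain (terminal and dispute) parts.

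For \emph{transition correctness}, we argue by induction on the version $\nu$. The base case is $\Sigma_0$, whose commitment pays $\psi_A,\psi_B$ from $o_F$ with no item outputs. For the inductive step, an honest party confirms only after reconstructing $\mathsf{TX}_C'$ itself and checking inputs, output amounts, scripts, item indices, deadlines, and conservation. The installed candidate is therefore the image under $\mathsf{Enc}_{\mathrm{UTXO}}$ of $\delta(\Sigma)$ for an admissible $\delta$ among the six operations. Since each mapping only moves $\mu$ between balance outputs and a single item output, \eqref{eq:unipc-conservation} is preserved.

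For \emph{authorization security}, suppose a state is installed for an honest party without valid evidence for every task in $\mathcal{P}'_{\mathrm{UTXO}}$, or without the honest party's own confirmation. We turn such an adversary into a forger. A missing confirmation means $\mathsf{VfyConf}$ accepted a signature $\gamma_i$ on a $\mathsf{StateMsg}$ (equivalently, on $\mathsf{TX}_C'$) that honest party $i$ never signed, which is a $\Pi_{\mathrm{Sig}}$ forgery. Because the party confirms at most one successor per installed version, this also rules out two different jointly confirmed successors. For the task side, we use the fact that $\mathcal{P}_{\mathrm{keep}}=\emptyset$: every carried transfer is refreshed, since its claim message $m_x$ commits to the new $\mathsf{TX}_C'$ hash. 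Consequently, a claim spend by the payee of an honest payer's transfer must use a completed signature on some $m_x$ that the payer pre-signed only under $Y_x$. A valid spend on any other message, or without $y_x$, contradicts unforgeability of $\Pi_{\mathrm{AS}}$ for fresh tasks and of $\Pi_{\mathrm{RAS}}$ for re-signed carried tasks. The reduction guesses the task index and embeds the challenge key as the honest payer's adaptor key.

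For \emph{state freshness} and \emph{settlement consistency}, we rely on the ledger. Any obsolete commitment $\mathsf{TX}_C^{\nu'}$ with $\nu'<\nu$ has had its revocation secret released to the honest party. If it is broadcast, the honest party observes it within the delay bound and uses the revocation path before the relative timelock expires, so the stale balances never become terminal. Forging a spend of the revocation path instead would again break $\Pi_{\mathrm{Sig}}$. For settlement consistency, the claim and refund descendants of an item $x$ both spend the same output $o_x$, so the ledger's no-double-spend rule admits at most one of them. The refund branch is enforced by the $\tau_x$ timelock, and the claim branch needs the payer's adapted signature, so neither branch can be triggered outside its condition. The main obstacle is the freshness argument: it must be phrased carefully against the abstract definition in the supplementary material. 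In particular, it has to fix timing (the dispute window must exceed the ledger delay bound) and handle item outputs created by a stale commitment whose descendants were re-signed. The plan is to show that these stale descendants are unusable, because they spend outputs of a revoked transaction that is punished before those outputs can mature.
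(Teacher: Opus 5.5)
Your proposal follows the paper's proof closely. Both treat the four properties separately, and for each one you rely on the same facts:
\begin{itemize}
\item \emph{Transition correctness:} the honest party's reconstruction and checks before signing, together with unforgeability of $\Pi_{\mathrm{Sig}}$.
\item \emph{Authorization security:} unforgeability of $\Pi_{\mathrm{Sig}}$, plus the observation that replacing $\mathsf{TX}_C$ forces $\mathcal{P}_{\mathrm{keep}}=\emptyset$.
\item \emph{State freshness:} the released revocation material and the ledger delay bound.
\item \emph{Settlement consistency:} the shared item outpoint and the ledger's no-double-spend rule.
\end{itemize}

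The one difference is your $\Pi_{\mathrm{AS}}$/$\Pi_{\mathrm{RAS}}$ reduction on claim spends, which the paper does not need. For authorization security as formalized, an installed record with $\mathsf{Comp}=0$ is excluded directly: an honest party checks every pre-signature with $\mathsf{VfyEv}$ before signing, and its confirmation signature cannot be forged.
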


\noindent\textbf{Account Realization.}
One party registers participants $A$ and $B$ with a channel contract to open a channel, after which they deposit $\psi_A$ and $\psi_B$ into the contract. The parties maintain jointly authorized states off-chain, while the contract holds the deposits and processes a signed state only when on-chain settlement is required. At version $\nu$, the state is encoded as
\begin{equation}
\begin{gathered}
    \mathsf{Enc}_{\mathrm{SC}}(\Sigma_\nu)
    =(\nu,\phi_A,\phi_B,C_{\mathcal{I}}),\\
    h_{\Sigma}
    =H(\mathsf{cid}\Vert\nu\Vert\phi_A\Vert\phi_B
    \Vert C_{\mathcal{I}}),
\end{gathered}
    \label{eq:unipc-account}
\end{equation}
where $C_{\mathcal I}$ is a commitment defined by the realization and derived from the abstract active item collection $\mathcal I$. It commits to the identifiers of all active items and is realized as either a Merkle root with membership proofs or a hash of an ordered list with item indices. When opening the channel, the parties select the representation according to the expected number of active items. The digest $h_\Sigma$ binds the channel identifier, version, balances, and complete active item set and is used to derive the state confirmation message. For an item $x$, the account realization derives its identifier and authorization message as
\begin{equation}
 m_x=\mathsf{id}_x=H(\mathsf{type}_x\Vert\mathsf{cid}\Vert\mathsf{payer}_x\Vert v_x\Vert\tau_x\Vert n_x),
\label{eq:unipc-account-item}
\end{equation}
where $n_x$ is a fresh nonce used by the realization when deriving $\mathsf{id}_x$, preventing otherwise identical items from sharing an identifier, while $\mathsf{payer}_x$ identifies the payer.

The operations in \eqref{eq:unipc-transitions} are realized by updating this off-chain record. Whenever an operation creates a new item, the realization samples a fresh $n_x$ before deriving its identifier. $\mathsf{Escrow}$ deducts $\mu$ from the payer and inserts an escrow item with deadline $\tau$. $\mathsf{Transfer}$ likewise deducts $\mu$ and inserts a conditional transfer item. $\mathsf{Escrow2Transfer}$ replaces the referenced escrow with a transfer item, preserves its value, and updates its deadline. After the claim authorization is completed, $\mathsf{ClaimTransfer}$ removes the transfer and credits the payee, while $\mathsf{RefundEscrow}$ and $\mathsf{RefundTransfer}$ remove an expired item and return its value to the payer. Each operation increments $\nu$ and recomputes $C_{\mathcal I}$, and the resulting balances and active items satisfy \eqref{eq:unipc-conservation}.

Let $\mathcal P$ denote the set of authorization tasks for the installed state. In $\mathsf{Prepare}$, each active transfer $x$ has one adaptor signature task $\eta_x$ assigned to its payer, with message $m_x$ and statement $Y_x$. Since $m_x$ excludes $\nu$ and $h_\Sigma$, an unchanged transfer retains its task when another item is updated. A $\mathsf{Transfer}$ or $\mathsf{Escrow2Transfer}$ operation adds only the task $\eta_x$ of the new transfer, and in this case, $\mathcal P_{\mathrm{keep}}=\mathcal P$ and $\mathcal P_{\mathrm{refresh}}=\{\eta_x\}$. $\mathsf{Escrow}$ and $\mathsf{RefundEscrow}$ leave the set of transfer tasks unchanged, with $\mathcal P_{\mathrm{keep}}=\mathcal P$ and $\mathcal P_{\mathrm{refresh}}=\emptyset$. $\mathsf{ClaimTransfer}$ and $\mathsf{RefundTransfer}$ remove the consumed transfer's task, with $\mathcal P_{\mathrm{keep}}=\mathcal P\setminus\{\eta_x\}$ and $\mathcal P_{\mathrm{refresh}}=\emptyset$.

$\mathsf{SubmitEvidence}$ collects the adaptor pre-signatures required by $\mathcal P_{\mathrm{refresh}}$. $\mathsf{Complete}$ checks the retained and newly submitted evidence against $\mathcal P'$. $\mathsf{Confirm}$ obtains fresh state confirmation signatures on $h_\Sigma'$. $\mathsf{Install}$ adopts the jointly confirmed state. $\mathsf{Abort}$ retains the preceding state.

The concrete bilateral authorization and installation process is as follows. The initiator submits the update request and any refreshed evidence assigned to it. The counterparty reconstructs the candidate and checks its contents. It verifies newly submitted evidence using \eqref{eq:unipc-verify-evidence}, retains the evidence for $\mathcal P_{\mathrm{keep}}$, and supplies any refreshed evidence assigned to it. After \eqref{eq:unipc-complete} holds, it signs the candidate state message and returns its new evidence and state confirmation signature. The initiator verifies the response, signs the same candidate, and returns its state confirmation signature. After both signatures have been verified, the parties install the same state. Failure to obtain both verified signatures invokes $\mathsf{Abort}$, which retains the preceding installed state.

The state encoding, operation mappings, authorization tasks, confirmation procedure, and contract settlement constitute a conforming UniPC realization. If a dispute or a similar situation requires on-chain settlement, either party can submit a jointly signed state and start a challenge period. A jointly signed state with a higher version submitted during this period supersedes the earlier submission. After the challenge period, the contract finalizes the highest submitted state and executes its permitted claim or refund actions.

\begin{theorem}[Security of the Account Realization]
\label{thm:account-realization}
Assuming the collision resistance of $H$, the unforgeability of $\Pi_{\mathrm{Sig}}$ and $\Pi_{\mathrm{AS}}$, and correct execution and timely confirmation of valid contract calls by the underlying account ledger, the account realization above satisfies Definition~\ref{def:unipc-security}. The proof is provided in the supplementary material.
\end{theorem}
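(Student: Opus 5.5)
We prove Theorem~\ref{thm:account-realization} by checking the four properties of Definition~\ref{def:unipc-security} one at a time. Each property is reduced either to the structure of the realization or to one of the stated assumptions. We fix a channel $\mathsf{cid}$ with an honest participant, say $A$, and argue by induction over the sequence of installed records $\mathcal{R}_0,\mathcal{R}_1,\ldots$ in $A$'s local view. The inductive hypothesis is that every installed record satisfies $\mathsf{Inst}=1$, its state satisfies \eqref{eq:unipc-conservation}, and its item identifiers are pairwise distinct.

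\textbf{Transition correctness.} An honest $A$ signs $h_{\Sigma'}$ only after reconstructing the candidate from the operation inputs and checking admissibility. Admissibility covers sufficient balance, the referenced item type and payer, deadline and witness conditions, and the version increment. Each operation mapping in the account realization changes $(\phi_A,\phi_B,C_{\mathcal I})$ exactly as in \eqref{eq:unipc-transitions}, so \eqref{eq:unipc-conservation} is preserved inductively from $\Sigma_0$.

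The remaining gap is that the installed record might not be the state $A$ actually checked. Suppose some other state $\Sigma^*\neq\Sigma'$ carries $A$'s valid signature on its encoding. Then either $h_{\Sigma^*}=h_{\Sigma'}$, which is a collision of $H$ on the encoded tuple (including $C_{\mathcal I}$, which is itself collision-resistant via Merkle or list hashing), or $A$'s signature on $h_{\Sigma^*}$ is a forgery against $\Pi_{\mathrm{Sig}}$.

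\textbf{Authorization security.} Installation requires both $\mathsf{Comp}(\mathcal P',\mathcal E')$ and $\mathsf{VfyConf}$. Confirmation by both parties follows from the $\Pi_{\mathrm{Sig}}$ reduction above. Completeness needs one extra argument because evidence may be retained rather than refreshed.

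The argument is that $m_x=\mathsf{id}_x$ binds $(\mathsf{type}_x,\mathsf{cid},\mathsf{payer}_x,v_x,\tau_x,n_x)$. A retained pre-signature that satisfies $\mathsf{VfyEv}(\eta_x,e)=1$ therefore authorizes exactly the same item, unless two distinct items hash to the same identifier, which contradicts collision resistance of $H$. Including $\mathsf{cid}$ rules out cross-channel replay, and the fresh nonce $n_x$ rules out replay within a channel.

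Any completed claim signature on $m_x$ that is not derived from a pre-signature produced by the payer breaks unforgeability of $\Pi_{\mathrm{AS}}$. The case analysis over the six operations confirms that $\mathcal P'$ contains a task for every active transfer. Transfer and Escrow2Transfer add $\eta_x$; Escrow and RefundEscrow leave the task set unchanged; Claim and RefundTransfer remove only the consumed task.

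\textbf{State freshness and settlement consistency.} These are the on-chain arguments and I expect them to be the main obstacle, because they rely on the contract's challenge logic and on timing.

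For freshness, suppose the adversary submits a jointly signed state with version $\nu'$ below $A$'s latest installed version $\nu$. An honest $A$ monitors the chain and, within the challenge period, submits its record with version $\nu$. The ledger assumption guarantees inclusion within the delay bound, provided the challenge period exceeds that bound. The contract then finalizes the highest version.

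A jointly signed state with a version above $\nu$ that $A$ never confirmed requires forging $A$'s signature. Two different jointly signed states with the same version are ruled out because honest $A$ confirms at most one successor per installed version, so producing a second would again require a signature forgery or a hash collision. One subtlety must be handled: a candidate that $A$ confirmed but did not install has version $\nu+1$ and could be finalized. This outcome is still valid, since that candidate was admissible and complete by the checks $A$ performed before confirming, and the paper explicitly allows jointly confirmed candidates to follow settlement rules.

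For settlement consistency, the contract finalizes exactly one state and executes each item in $C_{\mathcal I}$ at most once, deleting it afterward. Claim requires a valid adapted signature on $m_x$, while refund requires ledger time at least $\tau_x$. The contract enforces mutual exclusion per $\mathsf{id}_x$ by marking the item consumed, and identifiers are unique by collision resistance. Hence claim and refund cannot both succeed for the same item.

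Finally, I would state the total adversarial advantage as a sum of the collision, $\Pi_{\mathrm{Sig}}$-forgery, and $\Pi_{\mathrm{AS}}$-forgery advantages, each negligible.
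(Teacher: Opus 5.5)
Your proposal is correct and follows essentially the same route as the paper's proof. The paper also treats the four properties separately: it reduces transition correctness and authorization security to the unforgeability of $\Pi_{\mathrm{Sig}}$ and the collision resistance of $H$ (retained evidence stays bound to its unchanged task tuple $(\mathsf{signer},m,Y)$), and it reduces freshness and settlement consistency to the contract's highest-version challenge rule and its check-membership-then-remove logic. Your additions, namely the explicit induction, the $\Pi_{\mathrm{AS}}$ forgery remark, and the case of a confirmed but uninstalled candidate at version $\nu+1$, are harmless refinements that the paper leaves implicit.
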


\section{UniPCH}
\label{sec:unipch}

UniPCH coordinates private payments across multi-hub paths whose channels may use different UniPC realizations. The protocol invokes common UniPC operations, and every channel update follows the UniPC authorization and state installation lifecycle. This section presents path invocation, the S-RBCS binding between adjacent channels, the payment protocol with four phases, and failure handling.

\subsection{Payment Path and UniPC Invocation}
\label{sec:unipch-model}

For a payment from sender $S$ to receiver $R$, let $H_S$ and $H_R$ be their access hubs and let $H_C$ be the selected coordinating hub. The logical path is
\begin{equation}
    \mathcal{Q}=(S,H_S,H_C,H_R,R).
    \label{eq:unipch-path}
\end{equation}
All hubs are interconnected through payment channels implemented on the designated blockchain $B_h\in\mathcal{B}$. Since $S$ and $R$ may belong to different blockchains, a payment may span multiple heterogeneous blockchains. If $H_S=H_C$ or $H_R=H_C$, the corresponding edge between hubs is omitted. If both equalities hold, the protocol reduces to a single-hub payment. UniPCH executes payments in epochs, each divided into the ordered phases described below.

We use $\mathsf{Apply}(c,\mathsf{op},\theta)$ as shorthand for applying operation $\mathsf{op}$ with inputs $\theta$ to channel $c$ through the UniPC lifecycle. Let $I'=\mathsf{Inst}(\Sigma',\mathcal{P}',\mathcal{E}',\Gamma')$. The operation installs the new record when $I'=1$ and otherwise retains the preceding record:
\begin{equation}
 \mathsf{Apply}(c,\mathsf{op},\theta;\mathcal{R}_{\nu})=
 \begin{cases}
  \mathcal{R}_{\nu+1}, & I'=1,\\
  \mathcal{R}_{\nu}, & \text{otherwise}.
 \end{cases}
 \label{eq:unipch-apply}
\end{equation}
\subsection{S-RBCS}
\label{sec:unipch-crypto}

UniPCH defines selective re-signing blind conditional signatures (S-RBCS), a stateful conditional authorization scheme. It binds anonymous credentials and randomized conditions to UniPC tasks and selectively refreshes affected evidence. Its syntax is $\Pi_{\mathrm{S\text{-}RBCS}}=(\mathsf{Setup},\allowbreak\mathsf{Reserve},\allowbreak\mathsf{Promise},\allowbreak\mathsf{Convert},\allowbreak\mathsf{Solve})$. $\mathsf{Setup}$ initializes the public parameters and keys. The four interactive procedures reserve the sending path and issue a one-time credential, prepare the receiving path, convert the sender-side reservations, and complete the linked authorizations, respectively.

S-RBCS integrates Near and Far operations into the same ordered phases. Near operations connect an endpoint directly to $H_C$. Far operations reach $H_C$ through an access hub. It invokes the corresponding UniPC operations on each existing edge and omits an edge between hubs when its adjacent roles coincide. The set of authorization tasks affected by a state change is determined by each channel realization. For example, a UTXO realization returns tasks affected by transaction replacement, while an account realization with stable identifiers normally returns only a newly created transfer task. Both follow the common UniPC lifecycle for final state confirmation.

\begin{definition}[Security of S-RBCS]
\label{def:srbcs-security}
An S-RBCS scheme $\Pi_{\mathrm{S\text{-}RBCS}}$ is secure if it satisfies \emph{correctness}, \emph{task unforgeability}, \emph{unlockability}, and \emph{blindness}.
\end{definition}

Correctness ensures that honest executions produce valid credentials, linked conditions, and the required task evidence. Task unforgeability excludes complete signatures on unauthorized messages. For an honest coordinating hub, it also prevents completed outgoing authorizations from outnumbering successful incoming solves. Unlockability ensures that completing an adaptor signature reveals the witness required by the linked task. Blindness prevents credentials, randomized conditions, and task evidence from revealing the association between endpoint sessions with the same public inputs and phase schedule. The complete definitions of these properties are provided in the supplementary material.

We construct $\Pi_{\mathrm{S\text{-}RBCS}}^{*}$ using $\Pi_{\mathrm{BRS}}$, $\Pi_{\mathrm{LOE}}$, $\Pi_{\mathrm{AS}}$, $\Pi_{\mathrm{RAS}}$, $\Pi_{\mathrm{ZK}}$, and the UniPC authorization interface. It propagates linked conditions across hubs and processes the authorization tasks derived by UniPC realizations under the common evidence retention and refresh rules.

\noindent\textbf{1) $\mathsf{Setup}$.} On input $1^\lambda$, $\mathsf{Setup}$ initializes the public parameters and keys for $\Pi_{\mathrm{BRS}}$, $\Pi_{\mathrm{LOE}}$, $\Pi_{\mathrm{AS}}$, $\Pi_{\mathrm{RAS}}$, and $\Pi_{\mathrm{ZK}}$.

We give the task authorization rule. For a task $\eta=(\mathsf{signer},m,Y)$ with $Y=xG$, the signer produces an adaptor pre-signature as evidence. The payee completes it using $x$, which the payer can then extract:
\begin{equation}
\begin{gathered}
 \widetilde{\sigma}\leftarrow\mathsf{PreSign}(sk,m,Y),\quad
 \sigma\leftarrow\mathsf{Adapt}(\widetilde{\sigma},x),\\
 x\leftarrow\mathsf{Extract}(\widetilde{\sigma},\sigma,Y).
\end{gathered}
 \label{eq:unipch-adaptor}
\end{equation}
For each candidate state, $\Pi_{\mathrm{S\text{-}RBCS}}^{*}$ processes the tasks returned by its UniPC realization as
\begin{equation}
 e'_{\eta}\leftarrow
 \begin{cases}
  e_{\eta}, & \eta\in\mathcal{P}_{\mathrm{keep}},\\
  \widetilde{\sigma}^{\mathrm{new}}_{\eta}, & \eta\in\mathcal{P}_{\mathrm{refresh}}.
 \end{cases}
 \label{eq:srbcs-refresh}
\end{equation}
When only a task's authorization message changes, $\mathsf{RSign}$ refreshes its evidence; a new transfer task uses $\mathsf{PreSign}$. All evidence must satisfy \eqref{eq:unipc-verify-evidence} before the participants confirm the candidate state through UniPC.

\noindent\textbf{2) $\mathsf{Reserve}$.} The sender $S$ samples and commits to a fresh $\mathsf{tid}$. The sender-side channels invoke $\mathsf{Escrow}$ and process the resulting task evidence. After verifying the commitment proof and the reservation evidence adjacent to $H_C$, the hub returns a blind response $\sigma^*$, which $S$ unblinds and randomizes:
\begin{equation}
\begin{gathered}
 \mathsf{tid}\xleftarrow{\$}\{0,1\}^{\lambda},\quad
 (cm,d,\pi_{cm})\leftarrow\mathsf{Commit}(\mathsf{tid}),\\
 \sigma^*\leftarrow\mathsf{BlindSign}(sk_{H_C}^{\mathrm{BRS}},cm),\\
 \sigma_{\mathsf{tid}}\leftarrow\mathsf{Randomize}(\mathsf{Unblind}(\sigma^*,d)).
\end{gathered}
\label{eq:unipch-credential}
\end{equation}
Each $\mathsf{Reserve}$ session yields one credential, released to $R$ after the sender-side reservations are jointly confirmed and usable for one receiver prepayment.

\noindent\textbf{3) $\mathsf{Promise}$.} $R$ presents the credential and $\mathsf{tid}$ to request a prepayment from $H_C$. After verification, $H_C$ records $\mathsf{tid}$, samples $x_0\in\mathbb{Z}_q^*$, sets $Y_0=x_0G$, and verifiably encrypts $x_0$ as $(C_0,\pi_0)$. The receiver-side channels invoke $\mathsf{Transfer}(Y_0)$, submit the new task evidence, and confirm the candidate states. $R$ verifies $\pi_0$ and its adjacent evidence.

\noindent\textbf{4) $\mathsf{Convert}$.} $R$ randomizes the verified condition and ciphertext using a nonzero $\alpha_R$, yielding $x_1=\alpha_Rx_0$. $S$ then applies a nonzero $\alpha_S$, yielding $x_2=\alpha_Sx_1$:
\begin{equation}
\begin{gathered}
 (Y_1,C_1)=(\alpha_RY_0,\mathsf{RandEnc}(C_0,\alpha_R)),\\
 (Y_2,C_2)=(\alpha_SY_1,\mathsf{RandEnc}(C_1,\alpha_S)).
\end{gathered}
\label{eq:unipch-randomization}
\end{equation}
The receiver-side channels retain $Y_0$, while the sender-side channels invoke $\mathsf{Escrow2Transfer}(Y_2)$. Only $(Y_1,C_1)$ passes directly from $R$ to $S$. Each candidate follows the task authorization rule before confirmation and installation through UniPC.

\noindent\textbf{5) $\mathsf{Solve}$.} $H_C$ decrypts $C_2$ to obtain $x_2$ and completes the corresponding pre-signature with $\mathsf{Adapt}$. Each payer extracts its local witness from the stored evidence and completed signature. $S$ derives $x_1=\alpha_S^{-1}x_2$, and $R$ derives $x_0=\alpha_R^{-1}x_1$. The witnesses complete the linked authorizations and propagate in settlement order as the channels invoke $\mathsf{ClaimTransfer}$.

\begin{theorem}[Security of the S-RBCS Construction]
\label{thm:srbcs-construction}
Assume that the one-more discrete logarithm (OMDL) problem is hard in $\mathbb{G}$ and that $H$ is collision resistant. Suppose that $\Pi_{\mathrm{BRS}}$ is correct under unblinding and randomization and satisfies blindness, and that $\Pi_{\mathrm{LOE}}$ is correct and confidential under randomization. Suppose further that $\Pi_{\mathrm{AS}}$ and $\Pi_{\mathrm{RAS}}$ satisfy correctness, unforgeability, and witness extractability for the task evidence queries prescribed by the construction, and that $\Pi_{\mathrm{ZK}}$ is complete, sound, and zero knowledge. In the LOE model, assume that independent item authorizations under the same signing key use distinct authorization messages, and that task evidence and completed claims are checked against their corresponding UniPC tasks. Then $\Pi_{\mathrm{S\text{-}RBCS}}^{*}$ satisfies Definition~\ref{def:srbcs-security}. The proof is provided in the supplementary material.
\end{theorem}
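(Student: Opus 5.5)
The proof treats the four properties of Definition~\ref{def:srbcs-security} separately. Correctness and unlockability are handled by direct arguments. Task unforgeability and blindness are handled by reductions and hybrid games that follow the RBCS analysis of XPCH. The main new ingredient throughout is the selective refresh rule \eqref{eq:srbcs-refresh}, which must not weaken any of the underlying guarantees.

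\emph{Correctness.} We trace an honest execution phase by phase.
\begin{itemize}
\item In $\mathsf{Reserve}$, correctness of $\Pi_{\mathrm{BRS}}$ under unblinding and randomization, together with completeness of $\Pi_{\mathrm{ZK}}$, shows that $\sigma_{\mathsf{tid}}$ verifies.
\item In $\mathsf{Promise}$, the ciphertext $C_0$ decrypts to $x_0$ with $Y_0=x_0G$.
\item In $\mathsf{Convert}$, correctness of $\mathsf{RandEnc}$ gives that $C_2$ decrypts to $x_2=\alpha_S\alpha_Rx_0$, and that $Y_2=x_2G$.
\item For task evidence, kept evidence still verifies because $\mathcal{P}_{\mathrm{keep}}$ is defined by \eqref{eq:unipc-refresh}. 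Refreshed evidence verifies by correctness of $\mathsf{PreSign}$ and $\mathsf{RSign}$. Hence $\mathsf{Comp}$ in \eqref{eq:unipc-complete} holds.
\item In $\mathsf{Solve}$, $\mathsf{Adapt}$ and $\mathsf{Extract}$ recover $x_2$. The payers then derive $x_1=\alpha_S^{-1}x_2$ and $x_0=\alpha_R^{-1}x_1$, and each derived witness matches the statement of the linked task.
\end{itemize}

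\emph{Unlockability.} This follows from witness extractability of $\Pi_{\mathrm{AS}}$ and $\Pi_{\mathrm{RAS}}$. Any completed signature verifying on task $\eta=(\mathsf{signer},m,Y)$, whose pre-signature is the stored evidence, lets the payer run $\mathsf{Extract}$ and obtain $y$ with $yG=Y$. Since $\alpha_R,\alpha_S\neq 0$, the payer can then map $y$ to the witness of the adjacent task.

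One point needs care: evidence that has been retained or re-signed across versions must be the same object that the ledger checks. This is exactly the hypothesis that completed claims are checked against their corresponding UniPC tasks. For kept evidence in the account realization we additionally use collision resistance of $H$, so that $m_x$ identifies a unique item.

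\emph{Task unforgeability.} We split this into two parts.
\begin{itemize}
\item \textbf{No complete signature on an unauthorized message.} A forgery reduces to unforgeability of $\Pi_{\mathrm{AS}}$ or $\Pi_{\mathrm{RAS}}$. The reduction simulates all $\mathsf{PreSign}$ and $\mathsf{RSign}$ queries prescribed by the construction. The assumption that independent items under one key use distinct messages ensures that the forged message was never queried for an unrelated item.
\item \textbf{Counting bound for an honest $H_C$.} We must show that completed outgoing authorizations do not outnumber successful incoming solves. We follow XPCH and reduce to OMDL. The reduction embeds OMDL challenges as the $Y_0$ values. In the LOE model, the $\mathsf{Solve}$ decryptions are answered with the discrete-log oracle. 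Suppose the adversary completes $k+1$ outgoing pre-signatures with only $k$ solves. Extraction then yields $k+1$ witnesses for the randomized statements. Because the LOE model forces $C_2$ to be a known linear transform of $C_0$, the reduction can divide by the known $\alpha$ factors and output $k+1$ discrete logarithms from $k$ oracle calls.
\end{itemize}

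\emph{Blindness.} We use a hybrid argument over endpoint sessions that share the same $H_C$, the same amount, and the same phase schedule.
\begin{enumerate}
\item Replace the credentials with fresh ones, using blindness of $\Pi_{\mathrm{BRS}}$ and zero knowledge of $\pi_{cm}$.
\item Replace $(Y_1,C_1)$ and $(Y_2,C_2)$ with independent encryptions of fresh random scalars. This step uses confidentiality of $\Pi_{\mathrm{LOE}}$ under randomization and the uniformity of $\alpha_R,\alpha_S$; since $\alpha_R\alpha_S$ is uniform in $\mathbb{Z}_q^*$, the statement $Y_2$ is independent of $Y_0$.
\item Argue that task evidence leaks nothing further. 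The sets $\mathcal{P}_{\mathrm{keep}}$ and $\mathcal{P}_{\mathrm{refresh}}$ are fixed by each realization's deterministic task derivation applied to public channel operations. These operations are identical across the swapped sessions, so the refresh pattern is the same in both worlds.
\end{enumerate}

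I expect the main obstacle to be the OMDL counting argument under selective refresh. $\mathsf{RSign}$ produces additional pre-signatures under the same statement $Y$ for new messages. The proof must show that these extra pre-signatures give no additional witness information, and that an adversary cannot obtain two completions from one solve by replaying kept evidence against a changed message. I would first try to charge every pre-signature to a unique pair (item, statement). Then I would argue, via the distinct-message assumption and the correspondence between task checks and ledger checks, that only one of those pre-signatures can ever be completed on the ledger, so that the count of successful completions reduces to the original RBCS bound.
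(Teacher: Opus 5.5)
Your overall decomposition is the paper's. You handle correctness by tracing the phases, unlockability by witness extractability plus nonzero randomizers, the fresh-message part by reduction to $\Pi_{\mathrm{AS}}$/$\Pi_{\mathrm{RAS}}$ unforgeability, the counting bound by OMDL in the LOE model, and blindness by hybrids that couple the refresh pattern. The gap is in the conditional-completion (OMDL) argument, which is the one place where selective refresh matters.

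\textbf{The extraction step is misdescribed.} Honest $H_C$'s outgoing authorizations carry the challenge statements $Y_0=r_jG$ themselves. Extracting from a completed outgoing signature therefore yields $r_j$ directly. There are no ``randomized statements'' on that side and nothing to divide by $\alpha$.

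\textbf{$C_2$ need not be $\alpha C_0$.} In the LOE model an active adversary may submit a $C_2$ that is an arbitrary linear form $a_0+\sum_j a_jr_j$ over many sessions' handles, not a multiple of a single $C_0$. The reduction must therefore:
\begin{itemize}
\item record every handle as such a linear form;
\item check that the incoming statement equals $a_0G+\sum_j a_jY_j$;
\item spend one DL-oracle query on that point for each of the $D$ distinct incoming authorizations that $H_C$ solves;
\item simulate $\pi_0$, since it never knows $r_j$.
\end{itemize}

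\textbf{The OMDL instance is not finished.} ``$k+1$ logarithms from $k$ calls'' is not the winning condition once $N$ challenges have been embedded. The reduction must also query the $N-d$ unextracted challenges. It then outputs all $N$ logarithms with $D+N-d\le N-1$ queries.

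\textbf{Your resolution of the refresh obstacle would not go through.} Arguing that ``only one of those pre-signatures can ever be completed on the ledger'' is neither available nor needed. The S-RBCS game counts completed signatures, not ledger settlements, and the proof is meant to be realization-independent. A payee who knows the witness can also adapt every refreshed pre-signature offline. The paper resolves the issue with three observations you do not make:
\begin{itemize}
\item The reduction generates $H_C$'s signing key itself and embeds OMDL only in the statements. So $\mathsf{PreSign}$ and $\mathsf{RSign}$ under $Y_\ell$ need neither the witness nor an oracle query, and extra pre-signatures cannot affect the query count.
\item The game is defined per authorization record $\ell$. A refresh with unchanged key and statement only enlarges $M_\ell$ and counts once. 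Any completion on any $m\in M_\ell$ reveals the same $r_\ell$, so refreshes cannot inflate $d$.
\item The distinct-message assumption is needed precisely to stop one completed signature from being credited to two records with different statements. Otherwise extraction against the second pre-signature fails and $d>D$ comes for free.
\end{itemize}
By contrast, that assumption plays no role in the fresh-message challenge, where $m^*\notin\mathcal{Q}_{\mathrm{msg}}$ already excludes queried messages.
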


\subsection{Payment Protocol}
\label{sec:unipch-protocol}

The UniPCH payment protocol invokes the corresponding procedure of $\Pi_{\mathrm{S\text{-}RBCS}}^{*}$ in four phases: $\mathsf{Reserve}$, $\mathsf{Promise}$, $\mathsf{Convert}$, and $\mathsf{Solve}$. Each existing edge along $\mathcal{Q}$ applies the required UniPC operation through $\mathsf{Apply}$, while edges between coincident hub roles are omitted. Under S-RBCS, when an honest hub's outgoing transfer is claimed, its incoming transfer has been claimed or remains claimable. Fig.~\ref{fig:payment-flow} shows one concrete payment across mixed ledgers. In this example, channel $c_{S,H_S}$ uses a UniPC for the UTXO model. Channel $c_{H_R,R}$ and the channels on $B_h$ between hubs use UniPCs for the account model.

\begin{figure}[ht]
\centering
\includegraphics[width=\columnwidth]{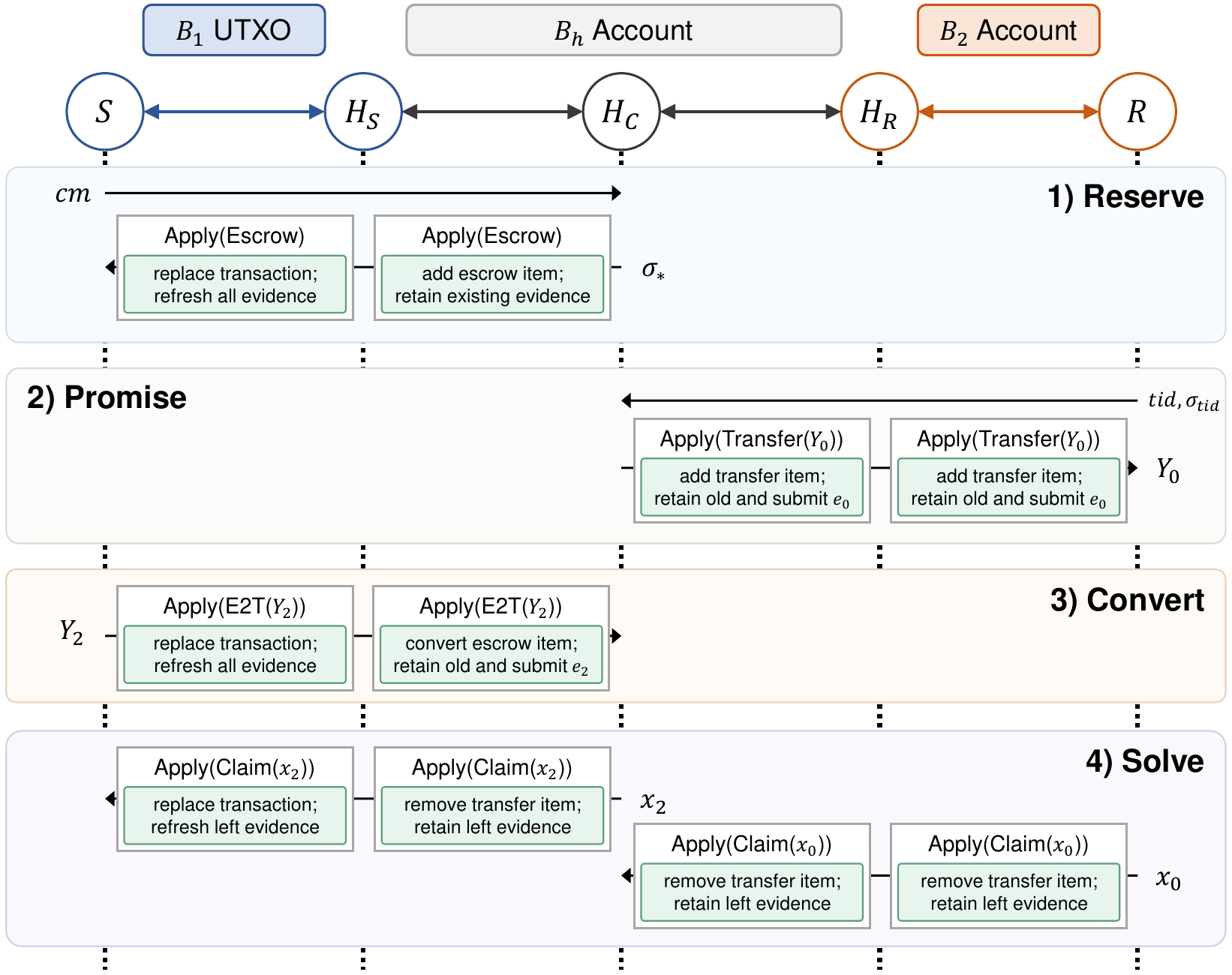}
\caption{Execution of the UniPCH Payment Protocol across Heterogeneous Ledgers.}
\label{fig:payment-flow}
\end{figure}

\noindent\textbf{1) $\mathsf{Reserve}$.}
In $\mathsf{Reserve}$, $S$ creates $(\mathsf{tid},cm,\pi_{cm})$ and initiates an $\mathsf{Escrow}$ update on $c_{S,H_S}$. If $H_S\neq H_C$, $H_S$ prepares the corresponding $\mathsf{Escrow}$ on $c_{H_S,H_C}$. After verifying the commitment proof and its adjacent reservation evidence, $H_C$ returns a blind signature response along the same path. The associated reservation updates continue through their UniPC authorization and confirmation lifecycles. Once the reservations on the sender side are jointly confirmed, $S$ derives $\sigma_{\mathsf{tid}}$ and sends $(\mathsf{tid},\sigma_{\mathsf{tid}})$ directly to $R$.

\noindent\textbf{2) $\mathsf{Promise}$.}
In $\mathsf{Promise}$, $R$ presents $(\mathsf{tid},\sigma_{\mathsf{tid}})$ to $H_C$ through $H_R$. After verifying the credential, $H_C$ atomically checks whether $\mathsf{tid}$ has already been used and records it. It aborts the payment if a previous use is detected. For a newly recorded $\mathsf{tid}$, $H_C$ generates $(x_0,Y_0,C_0,\pi_0)$. When $H_R=H_C$, $H_C$ applies $\mathsf{Transfer}$ directly on $c_{H_C,R}$. When $H_R\neq H_C$, $H_C$ first applies $\mathsf{Transfer}$ on $c_{H_C,H_R}$. After verifying the corresponding adaptor pre-signature, $H_R$ applies $\mathsf{Transfer}$ with the same $Y_0$ on $c_{H_R,R}$. In either case, $R$ verifies $\pi_0$ and the adaptor pre-signature for its transfer before confirming its local update. Once all existing transfers on the receiver side are jointly confirmed, $R$ computes $(Y_1,C_1)$ and sends $(\mathsf{tid},Y_1,C_1)$ directly to $S$.

\noindent\textbf{3) $\mathsf{Convert}$.}
In $\mathsf{Convert}$, $S$ computes $(Y_2,C_2)$, sends $C_2$ to $H_C$ along the sender path, and initiates $\mathsf{Escrow2Transfer}$ on $c_{S,H_S}$. If $H_S\neq H_C$, $H_S$ waits until the update on $c_{S,H_S}$ is jointly confirmed. It then coordinates the same operation for the associated escrow on $c_{H_S,H_C}$. Each candidate is authorized, confirmed, and installed through its own $\mathsf{Apply}$. At completion, every sender transfer is conditioned on $Y_2$, and every receiver transfer is conditioned on $Y_0$. The two endpoint randomizers remain hidden from the intermediate access hubs.

\noindent\textbf{4) $\mathsf{Solve}$.}
In $\mathsf{Solve}$, $H_C$ decrypts the $C_2$ received during $\mathsf{Convert}$ to obtain $x_2$. It invokes $\mathsf{ClaimTransfer}$ on $c_{H_S,H_C}$ when $H_S\neq H_C$ and on $c_{S,H_C}$ when $H_S=H_C$. When $H_S\neq H_C$, $H_S$ extracts $x_2$ from the completed transfer signature and invokes $\mathsf{ClaimTransfer}$ on $c_{S,H_S}$. When $H_S=H_C$, the completed signature is delivered directly to $S$. $S$ extracts $x_2$, computes $x_1=\alpha_S^{-1}x_2$, and sends $x_1$ directly to $R$. $R$ computes $x_0=\alpha_R^{-1}x_1$ and invokes $\mathsf{ClaimTransfer}$ on $c_{H_R,R}$. Finally, when $H_R\neq H_C$, $H_R$ extracts $x_0$ from that completed transfer signature and invokes $\mathsf{ClaimTransfer}$ on $c_{H_C,H_R}$. Each claim is installed through $\mathsf{Apply}$ after its completed transfer signature is checked against the stored task. The extracted witnesses then propagate along the settlement path.

\subsection{Failure Handling}
\label{sec:unipch-recovery}

UniPCH normally completes through the off-chain flow above. If cooperative execution stops, recovery follows the latest jointly confirmed state on each affected channel. Before the applicable refund deadline, an affected participant submits this state to the underlying ledger with sufficient time for it to become enforceable. An escrow that has not been converted is then recovered through $\mathsf{RefundEscrow}$ after its deadline, while a transfer whose witness has not appeared is recovered through $\mathsf{RefundTransfer}$ after expiration.

With three distinct hubs, claims in $\mathsf{Solve}$ proceed on the channels between $H_S$ and $H_C$, between $S$ and $H_S$, between $H_R$ and $R$, and finally between $H_C$ and $H_R$. The refund deadlines follow the same order so that each subsequent claimant has sufficient time to obtain the required witness and complete its claim:
\begin{equation}
 \tau_{H_S,H_C}<\tau_{S,H_S}<
 \tau_{H_R,R}<\tau_{H_C,H_R},
 \label{eq:unipch-timeouts}
\end{equation}
Each gap provides sufficient time for state submission, witness extraction or construction, the required message delivery, and confirmation on the corresponding ledger.

Each UniPC realization maps the two refund operations and $\mathsf{ClaimTransfer}$ to terminal actions verifiable by the ledger. The UTXO realization uses its commitment transaction, timelocked item scripts, and mechanisms such as revocation keys. The account realization uses sequence numbers and a challenge period in which a jointly signed state with a higher version supersedes one with a lower version.

\section{Security Analysis}
\label{sec:security}

This section defines local UniPC security, griefing resistance, economic atomicity, and session unlinkability, and proves that UniPCH satisfies these properties.

\begin{definition}[Local UniPC Security]
\label{def:unipch-local-security}
UniPCH has local UniPC security if, across all polynomially many protocol sessions, no honest participant on a channel edge of the corresponding logical path $\mathcal{Q}$ accepts an installed or terminal channel record that violates transition correctness, authorization security, state freshness, or settlement consistency, except with negligible probability in $\lambda$.
\end{definition}

\begin{definition}[Griefing Resistance]
\label{def:unipch-griefing}
UniPCH resists griefing by malicious receiver parties if, except with negligible probability in $\lambda$, every receiver prepayment installed by an honest $H_C$ is backed by a distinct $\mathsf{Reserve}$ session whose sender reservation updates of the agreed amount have been jointly confirmed, and no valid credential authorizes more than one such prepayment.

\end{definition}

\begin{definition}[Economic Atomicity]
\label{def:unipch-atomicity}
UniPCH is economically atomic if every honest endpoint either completes its intended transfer or recovers its locked funds, and whenever an honest intermediate hub's outgoing item is claimed, it is backed by a distinct incoming item of the agreed value that has been claimed or remains claimable before its deadline. An unclaimed outgoing item is refundable.
\end{definition}

\begin{definition}[Session Unlinkability]
\label{def:unipch-unlinkability}
Within one epoch, the selected coordinating hub and the agreed payment amount are public. Let $\mathcal{L}=(\mathsf{epoch},H_C,\mathsf{amount})$ denote this public leakage, and let payments with the same $\mathcal L$ form the anonymity set. The adversary selects two bijections between sets of honest senders and receivers with the same size such that all candidate payments have the same $\mathcal L$. The challenger executes one bijection according to a hidden bit $b$ and gives the adversary the combined protocol transcripts observed by the participating honest-but-curious hubs and the observations from public ledgers, while messages carried by the anonymous endpoint channel remain hidden. UniPCH is unlinkable if the adversary's advantage $|\Pr[b'=b]-1/2|$ is negligible in $\lambda$.
\end{definition}

\begin{theorem}[Local UniPC Security]
\label{thm:unipch-local-security}
If every channel on each payment path $\mathcal{Q}$ uses a UniPC realization satisfying Definition~\ref{def:unipc-security}, then UniPCH satisfies Definition~\ref{def:unipch-local-security} over all polynomially many protocol sessions.
\end{theorem}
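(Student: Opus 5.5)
The plan is a reduction from the composed UniPCH execution to the single-channel security games of Definition~\ref{def:unipc-security}, followed by a union bound over the polynomially many channels and sessions. The key structural observation comes from Section~\ref{sec:unipch-model}: UniPCH changes channel state only through $\mathsf{Apply}(c,\mathsf{op},\theta)$, which runs the UniPC lifecycle $\mathsf{Prepare}$, $\mathsf{SubmitEvidence}$, $\mathsf{Complete}$, $\mathsf{Confirm}$, $\mathsf{Install}$ (or $\mathsf{Abort}$). Terminal actions happen only through the ledger mappings of $\mathsf{ClaimTransfer}$, $\mathsf{RefundEscrow}$, and $\mathsf{RefundTransfer}$, or through submission of the latest jointly confirmed state as in Section~\ref{sec:unipch-recovery}. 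The payment layer never reads or writes native transactions. So every record that an honest participant on an edge of $\mathcal{Q}$ accepts arises from a sequence of UniPC invocations on that edge. The only differences from a stand-alone execution are the inputs $\theta$ and the statements $Y$ that the upper layer supplies.

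First, for each edge $c$ appearing on some path, I would define the channel-local view of the honest endpoint: its sequence of lifecycle invocations, accepted records, and ledger interactions. I would argue that this view can be generated by an adversary $\mathcal{A}_c$ playing the UniPC security game for the realization used on $c$. $\mathcal{A}_c$ embeds the game's honest party at $c$ and simulates everything else in the UniPCH execution itself. This includes the other channels, S-RBCS messages, credentials, LOE ciphertexts, and the other honest parties' keys, which $\mathcal{A}_c$ samples on its own. It also chooses operations and inputs $\theta$, including adversarially chosen statements $Y$ and witnesses, exactly as the composed adversary would. This step needs the UniPC properties to be quantified over arbitrary admissible operation sequences and arbitrary inputs. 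Transition correctness, authorization security, state freshness, and settlement consistency are conditions on the channel record alone, so they do not depend on how those inputs were produced.

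Second, I would handle the cross-channel sharing of values. The same $Y_0$ or $Y_2$ is reused on adjacent edges, and one hub holds keys across several channels. A witness revealed on one channel, for example $x_2$ extracted on $c_{H_S,H_C}$, becomes available to the adversary on another channel. I would argue that this is harmless for local security. Witnesses only enable admissible $\mathsf{ClaimTransfer}$ transitions, which the UniPC game already lets the adversary request with any witness it knows. Key separation is justified because each channel uses its own signing keys and a domain $\mathsf{dom}_c$ in $\mathsf{StateMsg}$. Signatures from one channel therefore cannot be replayed as confirmations on another, so the reduction can answer other-channel signing queries without touching the challenge key.

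Third, I would conclude with a hybrid and union bound. If the composed adversary makes some honest participant accept a violating record with non-negligible probability, then for a uniformly guessed (channel, session) pair among the polynomially many ones, $\mathcal{A}_c$ wins that realization's game with probability at least that advantage divided by a polynomial. Theorems~\ref{thm:utxo-realization} and~\ref{thm:account-realization} rule this out for the two concrete realizations, and the hypothesis covers any other conforming realization.

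I expect the main obstacle to be the simulation in the first step. The UniPC game must give the adversary enough power to match everything the UniPCH context induces: concurrent sessions sharing a channel, retained and refreshed evidence across $\mathsf{Prepare}$ calls, and timing and deadlines chosen by the upper protocol. If the supplementary definitions fix these inputs too rigidly, the reduction breaks. My first attempt would rely on the sequential-update assumption (1), which serializes updates per channel into a single interleaving, together with the fact that $\mathsf{Prepare}$'s task derivation is a fixed algorithm of $(\Sigma,\mathsf{op},\theta)$. Together these should make the game's oracle interface expressive enough to cover the composed execution.
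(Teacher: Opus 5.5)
Your proposal is correct and takes the same route as the paper. The paper's entire proof is that each edge on $\mathcal{Q}$ satisfies Definition~\ref{def:unipc-security} except with negligible probability, and that a union bound over the polynomially many edges and sessions preserves negligibility. Your explicit embedding of the single-channel game into the composed execution fills in what the paper takes for granted. Note, though, that your simulation relies on each channel having its own signing keys, which the paper never specifies. If a hub reused one key across its channels, the black-box reduction to the UniPC game would not go through, and you would need either the $\mathsf{dom}_c$ separation in $\mathsf{StateMsg}$ plus a direct appeal to $\Pi_{\mathrm{Sig}}$ unforgeability, or a UniPC game that gives you an external signing oracle.
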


\begin{proof}[Proof of Theorem~\ref{thm:unipch-local-security}]
Each channel edge on a payment path satisfies the four properties in Definition~\ref{def:unipc-security} except with negligible probability. A union bound over the polynomially many path edges and sessions preserves negligibility, which gives Definition~\ref{def:unipch-local-security}.
\end{proof}

\begin{theorem}[Griefing Resistance]
\label{thm:unipch-griefing}
Suppose the sender participants and an honest $H_C$ follow the protocol, and $\Pi_{\mathrm{BRS}}$ is unforgeable. Then UniPCH satisfies Definition~\ref{def:unipch-griefing} against malicious receiver parties.
\end{theorem}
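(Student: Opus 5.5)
We argue by reduction to the unforgeability of $\Pi_{\mathrm{BRS}}$, combined with the protocol rules that an honest $H_C$ follows. Suppose some run breaks Definition~\ref{def:unipch-griefing}. Then an honest $H_C$ has installed receiver prepayments, through $\mathsf{Transfer}(Y_0)$ on $c_{H_C,H_R}$ or $c_{H_C,R}$, that cannot be matched injectively to $\mathsf{Reserve}$ sessions whose sender reservations of the agreed amount were jointly confirmed. Either this happens, or one credential authorizes two prepayments. We rule out the second case first and then reduce the first case to a forgery.

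\textbf{Step 1 (one credential, one prepayment).} In $\mathsf{Promise}$, an honest $H_C$ verifies $\sigma_{\mathsf{tid}}$ and then atomically checks and records $\mathsf{tid}$ before generating $(x_0,Y_0,C_0,\pi_0)$ or initiating any $\mathsf{Transfer}$. It aborts whenever $\mathsf{tid}$ has been used before. Randomizing a signature does not change its message, so every valid credential for a fixed $\mathsf{tid}$ triggers at most one prepayment. This gives the second clause of the definition unconditionally, and it yields an injective map from installed prepayments to distinct $\mathsf{tid}$ values.

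\textbf{Step 2 (each blind signature is backed by confirmed reservations).} The sender participants follow the protocol. An honest $H_C$ returns $\sigma^*\leftarrow\mathsf{BlindSign}(sk_{H_C}^{\mathrm{BRS}},cm)$ only after two checks: it verifies $\pi_{cm}$, and it verifies that the reservation evidence adjacent to it, the $\mathsf{Escrow}$ of the agreed amount on $c_{H_S,H_C}$ or $c_{S,H_C}$, has passed its UniPC lifecycle. Each $\mathsf{Reserve}$ session therefore triggers at most one blind-signing query. Theorem~\ref{thm:unipch-local-security} lets us treat "jointly confirmed" as meaningful, since a reservation counted by $H_C$ is an installed record satisfying authorization security. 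The honest sender releases $(\mathsf{tid},\sigma_{\mathsf{tid}})$ only after joint confirmation. Hence the set of distinct messages $H_C$ has signed is in bijection with confirmed $\mathsf{Reserve}$ sessions. Freshly sampled $\mathsf{tid}\in\{0,1\}^\lambda$ collide only with negligible probability, which makes the sessions distinct.

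\textbf{Step 3 (reduction).} Combining Steps 1 and 2, a violation means that $H_C$ accepted valid credentials on $k+1$ distinct $\mathsf{tid}$ values after answering only $k$ blind-signing queries. We build a one-more forger $\mathcal{B}$ against $\Pi_{\mathrm{BRS}}$. $\mathcal{B}$ embeds the challenge key as $pk_{H_C}^{\mathrm{BRS}}$ and simulates all other participants honestly, including the sender side and the channel updates. It answers each $\mathsf{BlindSign}$ request with its signing oracle, and it does so only in the situations where an honest $H_C$ would respond. Finally, $\mathcal{B}$ outputs the $k+1$ accepted pairs $(\mathsf{tid},\sigma_{\mathsf{tid}})$.

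\textbf{Main obstacle.} The delicate point is Step 2. A malicious receiver, or a malicious $H_R$, might run $\mathsf{Reserve}$ itself while posing as a sender, and so obtain blind signatures legitimately. Those sessions still require $H_C$ to see jointly confirmed escrow of the agreed amount on its own incoming channel, so they count as backing sessions and do not break the definition. The reduction must therefore charge oracle queries only to sessions in which $H_C$ verified installed reservation evidence. To extract the committed messages we will use the soundness of $\pi_{cm}$, via a knowledge extractor or rewinding if the BRS unforgeability game requires explicit messages, and we will absorb that extraction loss as negligible.
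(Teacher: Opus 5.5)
Your ingredients match the paper's proof: unforgeability of $\Pi_{\mathrm{BRS}}$, $H_C$'s atomic check-and-record of $\mathsf{tid}$, and the honest sender's rule of releasing $(\mathsf{tid},\sigma_{\mathsf{tid}})$ only after joint confirmation. Casting the first ingredient as an explicit one-more forgery usefully sharpens the paper's informal claim that each credential ``must be derived from a blind signature response'', since blindness only supports a counting statement.

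The gap is in Step~2: it relies on a check the protocol does not perform. In $\mathsf{Reserve}$, $H_C$ returns $\sigma^*$ after verifying $\pi_{cm}$ and the adjacent reservation \emph{evidence}. Only afterwards do the escrow updates ``continue through their UniPC authorization and confirmation lifecycles''. On a Far path, $H_C$ never sees the update on $c_{S,H_S}$ at all. So $H_C$ also answers blind-signing queries in sessions that are still in flight or later abort. The messages $H_C$ has signed are therefore not in bijection with confirmed sessions. Step~3's premise that a violation gives $k+1$ credentials after only $k$ queries then fails: at a violating installation, $H_C$ may have answered $k+j$ queries while only $k$ sessions are confirmed.

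The repair fits inside your reduction, but the confirmation guarantee must come from the senders' release rule, as in the paper, not from $H_C$.
\begin{itemize}
\item Since $\mathcal{B}$ simulates the honest senders, it knows $(\mathsf{tid},d)$ for each of the $j$ unconfirmed sessions. It can unblind those signatures itself and output them alongside the $k+1$ accepted credentials.
\item The unreleased $\mathsf{tid}$ values are uniform and hidden behind the hiding commitment $cm$ and the zero-knowledge $\pi_{cm}$. They therefore coincide with a presented $\mathsf{tid}$ only with negligible probability.
\item This yields $k+j+1$ distinct valid pairs after $k+j$ queries, provided you stop $\mathcal{B}$ at the first installation that breaks the bound, so the count is taken at installation time.
\end{itemize}

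The same misreading undermines your ``main obstacle'' paragraph. A malicious party posing as a sender would obtain $\sigma^*$ without $H_C$ ever observing a jointly confirmed reservation, and without any check on $c_{S,H_S}$. Such sessions are therefore not automatically backing sessions. The theorem excludes this case by assuming that the sender participants follow the protocol. Invoke that hypothesis rather than an $H_C$-side confirmation check.
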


\begin{proof}[Proof of Theorem~\ref{thm:unipch-griefing}]
Consider any receiver prepayment installed by an honest $H_C$. Its request contains a valid credential $(\mathsf{tid},\sigma_{\mathsf{tid}})$. By the unforgeability of $\Pi_{\mathrm{BRS}}$, this credential must be derived from a blind signature response produced by $H_C$ in a $\mathsf{Reserve}$ session, except with negligible probability. Because the sender participants follow $\mathsf{Reserve}$, $S$ releases the credential to $R$ only after the corresponding sender reservation updates are jointly confirmed. Before installation, $H_C$ checks whether $\mathsf{tid}$ has already been used and records it. Any subsequent reuse of the credential, including a rerandomized form, is rejected. Since each $\mathsf{Reserve}$ session yields one credential, distinct installed prepayments require distinct confirmed $\mathsf{Reserve}$ sessions. UniPCH therefore satisfies Definition~\ref{def:unipch-griefing}.
\end{proof}

\begin{theorem}[Economic Atomicity]
\label{thm:unipch-atomicity}
Suppose every channel on $\mathcal{Q}$ satisfies Definition~\ref{def:unipc-security}, and $\Pi_{\mathrm{S\text{-}RBCS}}^{*}$ satisfies the correctness, task unforgeability, and unlockability clauses of Definition~\ref{def:srbcs-security}. Suppose also that the four payment phases follow their prescribed order and every successive pair in the claim order satisfies \eqref{eq:security-timeout-gap} under the blockchain assumption of Section~\ref{sec:overview}. Then UniPCH satisfies Definition~\ref{def:unipch-atomicity}.
\end{theorem}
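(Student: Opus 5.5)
The plan is a case analysis over the four roles on $\mathcal{Q}=(S,H_S,H_C,H_R,R)$ and over the phase in which cooperative execution stops. The per-edge guarantees come from local UniPC security (Theorem~\ref{thm:unipch-local-security}). By transition correctness and settlement consistency, every installed item on an edge has the agreed value. It can settle through exactly one terminal action, a claim or a refund, and it is never consumed twice. By state freshness, an honest party can always enforce its latest jointly confirmed record on the ledger before the relevant deadline. Given these facts, each honest party's funds are always sitting in a spendable balance, an escrow, or a transfer item. It therefore suffices to show two things: every item an honest party pays into ends up either claimed as intended or refundable, and every claim against an honest intermediary is matched by an incoming claim.

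\textbf{Endpoints.} For $S$, I would split on whether $\mathsf{Convert}$ completed on $c_{S,H_S}$ (or on $c_{S,H_C}$ when $H_S=H_C$). If it did not, the item is still an escrow, and $S$ recovers it through $\mathsf{RefundEscrow}$. Authorization security ensures no transfer task exists for it, so $H_S$ has no way to claim it. If it did, the item is a transfer conditioned on $Y_2$, and it either stays unclaimed and is refunded, or it is claimed. A claim of the payment item is exactly the completion of $S$'s intended transfer. For $R$, once its receiver-side transfer conditioned on $Y_0$ is installed, $R$ holds a verified pre-signature. It obtains $x_0$ only after $S$'s item has been claimed, which occurs at or before $\tau_{S,H_S}$. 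The gap $\tau_{S,H_S}<\tau_{H_R,R}$ from \eqref{eq:unipch-timeouts} then leaves $R$ enough time to adapt the pre-signature and claim. If no witness ever appears, $R$ has locked nothing, so the endpoint condition holds trivially.

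\textbf{Intermediaries.} For $H_S$, the outgoing item is on $c_{H_S,H_C}$ and the incoming item is on $c_{S,H_S}$. Both are conditioned on $Y_2$, because $H_S$ applies $\mathsf{Escrow2Transfer}$ upstream only after the update on $c_{S,H_S}$ is jointly confirmed. If $H_C$ claims the outgoing item, it publishes a completed signature. Unlockability then lets $H_S$ extract $x_2$ from its stored pre-signature. Task unforgeability ensures any valid completed claim must come from adapting that pre-signature, since no other valid claim signature can exist. The timeout $\tau_{H_S,H_C}<\tau_{S,H_S}$ leaves the incoming item claimable. The argument for $H_R$ is symmetric: its outgoing item is on $c_{H_R,R}$ and its incoming item is on $c_{H_C,H_R}$, both conditioned on $Y_0$, with $\tau_{H_R,R}<\tau_{H_C,H_R}$. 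For $H_C$, an outgoing claim on the receiver side reveals $x_0$. The problem is that linking this back to the sender side has to go through the randomizers $\alpha_R,\alpha_S$. Here I would invoke the task-unforgeability clause for an honest coordinating hub, which says completed outgoing authorizations never outnumber successful incoming solves. Combined with griefing resistance (Theorem~\ref{thm:unipch-griefing}), this gives a distinct confirmed $\mathsf{Reserve}$ backing each prepayment. Since $H_C$ claims its incoming item first, at $\tau_{H_S,H_C}$, it holds that incoming item as claimed. Distinctness of items comes from fresh identifiers or outputs, together with the distinct-message assumption.

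\textbf{Main obstacle.} The hardest step is likely the $H_C$ case, and in particular its interaction with a malicious sender. Such a sender could submit a $C_2$ whose plaintext does not match $Y_2$. Then $H_C$ would fail to claim on the sender side, while the receiver side could still be claimed through $x_0$. My first approach would be to rely on the correctness and task unforgeability clauses, which guarantee that $H_C$ releases no receiver-side witness without a successful incoming solve. Beyond that, I expect to need an explicit rule: $H_C$ verifies decryption against $Y_2$ before claiming, and otherwise simply lets its outgoing transfer expire. Economic atomicity would then hold only if $R$ cannot complete its claim without $x_0$, and obtaining $x_0$ requires $S$'s item to have been claimed. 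Finally, a union bound over all edges and sessions keeps the total failure probability negligible.
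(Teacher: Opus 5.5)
Your proposal is correct and follows essentially the same route as the paper. It uses UniPC security together with the deadline ordering of Lemma~\ref{lem:ordered-settlement} for recovery and refunds. For $H_S$ and $H_R$, it relies on confirming the incoming transfer before authorizing the outgoing one, with a shared statement ($Y_2$ resp.\ $Y_0$), plus unlockability. For $H_C$, it uses the conditional-completion clause of task unforgeability. The appeal to griefing resistance is unnecessary, and its hypotheses (honest sender participants and unforgeability of $\Pi_{\mathrm{BRS}}$) are not available in this theorem. Your malicious-$C_2$ concern is already covered by that conditional-completion clause, so no extra rule beyond the assumed S-RBCS properties is needed.
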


We first establish a lemma on settlement order.

For three distinct hubs, let $e_1=(H_S,H_C)$, $e_2=(S,H_S)$, $e_3=(H_R,R)$, and $e_4=(H_C,H_R)$ denote the claim order in $\mathsf{Solve}$, and let $\tau_i$ be the refund deadline on $e_i$. When roles coincide, the corresponding edges are omitted and the remaining sequence is renumbered as $(e_1,\ldots,e_m)$. Let $\Delta_i$ bound the time required for state submission, witness extraction or construction, message delivery, and ledger confirmation between the claims on $e_i$ and $e_{i+1}$, and let $\epsilon_i>0$ be a scheduling margin. The ordering in \eqref{eq:unipch-timeouts} is instantiated so that
\begin{equation}
 \tau_{i+1}-\tau_i\geq\Delta_i+\epsilon_i,\qquad 1\leq i<m.
 \label{eq:security-timeout-gap}
\end{equation}

The first three phases establish a protection invariant across channels before settlement. $\mathsf{Reserve}$ jointly confirms sender escrows of the agreed amount, $\mathsf{Promise}$ installs receiver transfers without releasing their claim witnesses, and $\mathsf{Convert}$ replaces the sender escrows with transfers linked to independently randomized conditions. A claim on $c_{H_S,H_C}$ releases $x_2$ to $H_S$, enabling the matching claim on $c_{S,H_S}$. A claim on $c_{H_R,R}$ releases $x_0$ to $H_R$, enabling the matching claim on $c_{H_C,H_R}$. The latter claim becomes possible after $x_0$ reaches $H_R$ through the receiver chain, following the settlement initiated by $H_C$ on $c_{H_S,H_C}$. For an honest access hub, the confirmation order establishes the corresponding incoming transfer before the outgoing authorization is issued. For an honest coordinating hub, conditional completion unforgeability bounds completed outgoing authorizations by successful incoming solves. If execution stops earlier, the unclaimed outgoing item remains refundable.

\begin{lemma}[Ordered Settlement Safety]
\label{lem:ordered-settlement}
Under \eqref{eq:security-timeout-gap}, a completed claim on $e_i$ releases the material required for an honest participant to complete the next claim on $e_{i+1}$ before its deadline. If this propagation stops, an honest payer of an unclaimed transfer can enforce the latest state and recover the locked amount through the corresponding refund operation.
\end{lemma}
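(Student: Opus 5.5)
The plan is to prove the two clauses separately and then combine them by induction along the claim order $(e_1,\ldots,e_m)$. For the propagation clause, I would first record what each completed claim reveals. By the realization requirements, every $\mathsf{ClaimTransfer}$ maps to a terminal action verifiable by the ledger. Such an action publishes, or delivers to the payer, a completed adaptor signature $\sigma$ on the task message $m$ whose pre-signature $\widetilde{\sigma}$ the payer holds as verified evidence. By the unlockability clause of Definition~\ref{def:srbcs-security}, the payer can compute $\mathsf{Extract}(\widetilde{\sigma},\sigma,Y)$ via \eqref{eq:unipch-adaptor}, which yields the witness for the statement $Y$ of the task on $e_i$.

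Next I would check, case by case, that this witness is exactly the material needed on $e_{i+1}$:
\begin{itemize}
\item a claim on $e_1=(H_S,H_C)$ gives $H_S$ the witness $x_2$, and $x_2$ is the witness of $Y_2$, which conditions $e_2=(S,H_S)$ after $\mathsf{Convert}$;
\item a claim on $e_2$ gives $S$ the witness $x_2$, from which $S$ computes $x_1=\alpha_S^{-1}x_2$ and sends it to $R$, who computes $x_0=\alpha_R^{-1}x_1$, the witness of $Y_0$ on $e_3$;
\item a claim on $e_3$ gives $H_R$ the witness $x_0$, which also unlocks $e_4$, since both receiver-side transfers use the same $Y_0$.
\end{itemize}
The degenerate paths follow the same pattern once coincident edges are omitted and the remaining edges renumbered.

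For timing, I would take the worst case, in which the claim on $e_i$ lands at time $\tau_i$. Propagation needs at most $\Delta_i$ for state submission, extraction or construction of the witness, message delivery, and ledger confirmation, which is guaranteed by Assumption (4). The claim on $e_{i+1}$ therefore confirms by $\tau_i+\Delta_i<\tau_{i+1}$ by \eqref{eq:security-timeout-gap}, using $\epsilon_i>0$. Before $\tau_{i+1}$ the payer on $e_{i+1}$ cannot refund, so by settlement consistency no conflicting refund outcome exists.

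For the refund clause, suppose propagation stops and the transfer on some $e_j$ stays unclaimed. The honest payer holds the latest jointly confirmed record, which by authorization security of Definition~\ref{def:unipc-security} contains the item. The payer submits this record before $\tau_j$, leaving enough margin for it to become enforceable. State freshness ensures that no obsolete state can become terminal in its place: in the UTXO realization this is enforced by revocation keys, and in the account realization by the challenge period with higher-version supersession. After $\tau_j$, $\mathsf{RefundTransfer}$ is admissible. Transition correctness then returns $v_x$ to the payer, and settlement consistency rules out any claim being accepted concurrently. An unconverted escrow is handled the same way via $\mathsf{RefundEscrow}$.

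I expect the main obstacle to be the boundary case where the claim on $e_i$ is enforced on chain rather than off chain, possibly on a different ledger from $e_{i+1}$, at a time close to $\tau_i$. In that case the witness must be observable from on-chain data. I would handle this by showing that each realization's terminal claim action exposes $\sigma$ on chain: the UTXO realization does so through the witness of the claim transaction, and the account realization through the arguments of the contract call. This makes extraction possible within $\Delta_i$ even if the counterparty is uncooperative. A second point is ensuring that the payer on $e_{i+1}$ does not refund prematurely. That holds simply because the refund is inadmissible before $\tau_{i+1}$, and $\tau_{i+1}$ exceeds the confirmation time of the claim.
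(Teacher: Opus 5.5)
Your proposal is correct and takes essentially the same route as the paper's proof. You trace the witness chain $x_2\to x_1=\alpha_S^{-1}x_2\to x_0=\alpha_R^{-1}x_1$ along $(e_1,\ldots,e_4)$, use \eqref{eq:security-timeout-gap} to give each successive honest claimant time, and fall back on refunds from the latest jointly confirmed state when propagation stops. Your added details (extraction via unlockability, the bound $\tau_i+\Delta_i<\tau_{i+1}$, on-chain observability of the completed signature) only sharpen that argument. The paper's extra remark that $\tau_{H_S,H_C}<\tau_{H_C,H_R}$ stops $H_C$ from refunding its outgoing transfer before claiming its incoming one follows in your version by chaining the consecutive gaps.
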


\begin{proof}[Proof of Lemma~\ref{lem:ordered-settlement}]
A claim by $H_C$ on $c_{H_S,H_C}$ releases $x_2$ to $H_S$, which claims on $c_{S,H_S}$ and releases $x_2$ to $S$. The sender derives $x_1$ and sends it to $R$, which derives $x_0$ and claims on $c_{H_R,R}$, releasing $x_0$ to $H_R$ for the final claim on $c_{H_C,H_R}$. Equation~\eqref{eq:security-timeout-gap} leaves sufficient time for each successive honest action, while $\tau_{H_S,H_C}<\tau_{H_C,H_R}$ prevents $H_C$ from first refunding its outgoing transfer and then claiming its incoming transfer. If propagation stops, the honest payer of the corresponding unclaimed item can recover it through the corresponding refund operation. The same argument applies when edges are omitted because adjacent roles coincide.
\end{proof}

\begin{proof}[Proof of Theorem~\ref{thm:unipch-atomicity}]
Before a candidate has a valid joint confirmation certificate, the preceding jointly confirmed state remains available for recovery. Definition~\ref{def:unipc-security} excludes invalid, incompletely authorized, obsolete, or conflicting local outcomes. Task unforgeability further excludes valid complete signatures on messages that have not been authorized by an honest signer.

For an honest $H_C$, consider any protocol prefix ending with an outgoing authorization completion. If distinct outgoing completions exceeded successful incoming solves, an adversary stopping at this prefix would win the conditional-completion challenge. Task unforgeability excludes this event except with negligible probability. Each completed outgoing authorization can therefore be assigned a distinct preceding successful incoming solve. The corresponding incoming item is jointly confirmed, and its completed signature enables the honest hub to claim it under UniPC's settlement rules and the assumed timing bounds. For an honest $H_S$, $\mathsf{Convert}$ jointly confirms the transfer on $c_{S,H_S}$ before authorizing the corresponding transfer on $c_{H_S,H_C}$. For an honest $H_R$, $\mathsf{Promise}$ confirms the transfer on $c_{H_C,H_R}$ before authorizing the corresponding transfer on $c_{H_R,R}$. The adjacent transfers use the same statement $Y_2$ on the sender side and $Y_0$ on the receiver side. Completing an access hub's outgoing authorization therefore reveals, through unlockability, the witness for its confirmed incoming transfer.

Lemma~\ref{lem:ordered-settlement} provides sufficient time to enforce the latest state and complete each successive honest claim. If cooperative execution stops, an honest payer can recover an unclaimed escrow or transfer through the corresponding refund operation after expiration. UniPCH therefore satisfies Definition~\ref{def:unipch-atomicity}.
\end{proof}

\begin{theorem}[Session Unlinkability]
\label{thm:unipch-unlinkability}
If $\Pi_{\mathrm{S\text{-}RBCS}}^{*}$ satisfies the blindness clause of Definition~\ref{def:srbcs-security}, then UniPCH satisfies Definition~\ref{def:unipch-unlinkability} within the anonymity set determined by $\mathcal L$.
\end{theorem}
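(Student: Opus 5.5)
The proof will be a reduction: any adversary $\mathcal{A}$ that breaks Definition~\ref{def:unipch-unlinkability} with non-negligible advantage yields an adversary $\mathcal{B}$ that breaks the blindness clause of Definition~\ref{def:srbcs-security}. The first step is to fix what the colluding hubs observe. Their combined view consists of four parts. The first is the $\mathsf{Reserve}$ transcripts: the commitments $cm$, the proofs $\pi_{cm}$, the blind responses $\sigma^*$, and the escrow evidence. The second is the $\mathsf{Promise}$ transcripts: the presented credentials $(\mathsf{tid},\sigma_{\mathsf{tid}})$, the values $(x_0,Y_0,C_0,\pi_0)$, and the receiver-side transfer evidence. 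The third is the $\mathsf{Convert}$ transcripts: $C_2$, $Y_2$, and the sender-side $\mathsf{Escrow2Transfer}$ evidence. The fourth is the $\mathsf{Solve}$ transcripts: $x_2$, the completed signatures, and the extracted $x_0$. To this view we add public ledger observations of any settled channels. The $S$--$R$ messages, namely $(\mathsf{tid},\sigma_{\mathsf{tid}})$, $(Y_1,C_1)$, and $x_1$, are excluded because the endpoint channel is anonymous.

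Next I will argue that every component of this view is either (a) part of an S-RBCS endpoint session transcript as covered by blindness, or (b) computable from the public leakage $\mathcal{L}=(\mathsf{epoch},H_C,\mathsf{amount})$, the phase schedule, and per-channel data that are independent of the pairing. Several components fall under (b). UniPC state updates on $c_{S,H_S}$, $c_{H_S,H_C}$, $c_{H_C,H_R}$, and $c_{H_R,R}$ depend only on the agreed amount and deadlines, and these are identical across the anonymity set. The task messages $m_x$ and the refresh patterns $\mathcal{P}_{\mathrm{keep}}$ and $\mathcal{P}_{\mathrm{refresh}}$ are determined by the realization and the operation sequence, which System Assumption (1) fixes identically for every payment in the epoch. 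Account-realization identifiers use fresh nonces $n_x$, so they are independent of the pairing. The reduction $\mathcal{B}$ can therefore simulate all non-S-RBCS parts itself and embed the blindness challenger's sessions for the credential, the randomized conditions, and the task evidence. Given the two bijections from $\mathcal{A}$, $\mathcal{B}$ submits them to the blindness game, builds the full UniPCH view, and outputs $\mathcal{A}$'s guess. Its advantage then equals that of $\mathcal{A}$.

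The linking points are the ones blindness must cover. On the sender side, $\sigma_{\mathsf{tid}}$ is unblinded and randomized, so it is unlinkable to the $\sigma^*$ issued in $\mathsf{Reserve}$. On the condition side, the pair $(Y_2,C_2)$ seen on the sender side, together with $x_2$ revealed at $\mathsf{Solve}$, must be unlinkable to $(Y_0,C_0,x_0)$ on the receiver side. This holds because $x_2=\alpha_S\alpha_R x_0$ with hidden uniform nonzero $\alpha_S,\alpha_R$, and confidentiality of $\Pi_{\mathrm{LOE}}$ under randomization hides the correspondence until $\mathsf{Solve}$. I expect the main obstacle to be this $\mathsf{Solve}$ step. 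After settlement the hubs learn both $x_2$ and every $x_0$ in the epoch, and each ratio $x_2/x_0$ is uniform. So unlinkability there relies on the randomizers being uniform and unseen, and on all claims in an epoch occurring within the common phase interval, so that timing cannot link them. I will handle this by requiring the blindness clause to hold for adversaries that also see the released witnesses, and I will argue this holds for the construction because $\alpha_S\alpha_R$ is a uniform element of $\mathbb{Z}_q^*$ independent of everything else. Equal amounts and fixed deadlines within the anonymity set make the ledger observations identically distributed under $b=0$ and $b=1$. A final hybrid over the polynomially many sessions in the epoch then gives a negligible advantage.
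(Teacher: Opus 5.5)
Your overall route matches the paper's. You split the colluding hubs' UniPCH view into the objects produced by $\Pi_{\mathrm{S\text{-}RBCS}}^{*}$ (credentials, randomized condition pairs, adjacent task evidence, ledger observations) and the remaining UniPC records, which are generated identically from pairing-independent local inputs, and then reduce to blindness.

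However, the reduction as you set it up does not type-check. The blindness experiment of Definition~\ref{def:srbcs-blindness} takes only two senders and two receivers and hides a single swap, $\mathcal{M}_0=\{(S_0,R_0),(S_1,R_1)\}$ versus $\mathcal{M}_1=\{(S_0,R_1),(S_1,R_0)\}$. So $\mathcal{B}$ cannot submit two arbitrary bijections, and your claim that its advantage equals $\mathcal{A}$'s holds only when the anonymity set has size two. The missing step, which the paper makes explicit, goes as follows. Write the two bijections as $\pi_0,\pi_1$ with $\pi_1=\pi_0\circ\rho$ for a permutation $\rho$ of the senders, and decompose $\rho$ into at most $n-1$ transpositions. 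Every intermediate hybrid is again a bijection with the same $\mathcal{L}$, and adjacent hybrids differ by exchanging the receivers of exactly two senders. $\mathcal{B}$ embeds the blindness challenge at one such swap and runs all other sessions itself, losing at most a factor $n-1$. Your closing ``hybrid over the polynomially many sessions'' is too vague to check. If it changes one session at a time, it does not pass through valid bijections; the steps have to be transpositions.

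Second, your treatment of $\mathsf{Solve}$ steps outside the theorem. You propose to require a stronger blindness clause that covers released witnesses, and then to argue it for the construction. That is unnecessary. The blindness experiment already gives the adversary the full hub-side transcript of all four S-RBCS procedures, including $\mathsf{Solve}$ and the resulting claim outputs and ledger observations. The witnesses are therefore already inside the hypothesis you are given, and a conditional theorem should not alter its hypothesis. Your justification is also off. $H_C$ holds the LOE decryption key and can decrypt $C_2$ as soon as it arrives in $\mathsf{Convert}$, so LOE confidentiality hides nothing from the colluding hubs ``until $\mathsf{Solve}$''. What actually hides the link is that $x_2$ is uniform and independent of $x_0$ given the hidden randomizers, together with $\mathsf{RandEnc}$ producing ciphertexts distributed as fresh encryptions. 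That is the LOE-model step inside the paper's proof of S-RBCS blindness, not part of this theorem's proof. Drop that detour and add the transposition hybrid, and your argument becomes the paper's.
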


\begin{proof}[Proof of Theorem~\ref{thm:unipch-unlinkability}]
Consider two challenge pairings with the same public inputs and phase schedule. The S-RBCS blindness experiment includes credential issuance and presentation, randomized ciphertext-statement pairs, adjacent task evidence, and the corresponding public-ledger observations. UniPCH obtains these objects through the procedures of $\Pi_{\mathrm{S\text{-}RBCS}}^{*}$. The remaining channel records and confirmation messages are generated through the same UniPC interfaces from the local inputs and protocol randomness. Thus, an adversary distinguishing the two UniPCH pairings would also distinguish the corresponding S-RBCS views. For larger anonymity sets, any two bijections can be connected by polynomially many pairwise swaps. A non-negligible distinguishing advantage would therefore give a non-negligible advantage for at least one such swap, contradicting S-RBCS blindness. UniPCH satisfies Definition~\ref{def:unipch-unlinkability}.
\end{proof}

\section{Experiments}
\label{sec:experiments}
We evaluate UniPCH in three respects. We compare its running time and bandwidth in each phase with representative PCH schemes, measure the effect of network delay, and analyze how the numbers of blockchains, users per blockchain, and payments per user affect its running time and bandwidth.

\subsection{Experimental Setup}
\label{sec:experimental-setup}

\noindent\textbf{Experimental environment.}
We implement UniPCH in Go with UTXO and account realizations. Adaptor signatures and discrete logarithm relations are instantiated over secp256k1, blind randomizable signatures use the scheme of Pointcheval-Sanders~\cite{PS} over BLS12-381, linear-only encryption uses the CL cryptosystem~\cite{CL}, and commitments use the Pedersen construction~\cite{Pedersen}. Zero-knowledge proofs~\cite{NIZK} use the Fiat-Shamir transformation~\cite{Fiat} on $\Sigma$-protocols~\cite{sigmaprotocol}. The experiments run on a Windows 11 Home machine with a 2.30 GHz 12th Gen Intel Core i7-12700H processor, 14 cores, and 24 GB of memory. All compared schemes use the same experimental environment and, for shared cryptographic primitives, the same implementations and parameters.

\noindent\textbf{Baselines.}
We compare UniPCH with A$^2$L$^+$~\cite{A2LPlus}, P2C2T~\cite{P2C2T}, and XPCH~\cite{XPCH}. A$^2$L$^+$ constructs private payments for a single PCH using anonymous atomic locks. P2C2T supports private cross-chain payments through threshold anonymous atomic locks. XPCH connects multiple PCHs through adaptor payment channels to coordinate private cross-chain payments. The default deployment contains six blockchains, with 50 users per blockchain and five payments initiated by each user. UniPCH uses three Bitcoin instances running in regtest mode and three Ethereum instances provided by Ganache. Each baseline uses three Bitcoin and three Litecoin instances running in regtest mode. All schemes use the same payment set. For both UniPCH and XPCH, N and F denote the Near and Far paths, respectively. In UniPCH, B and E indicate that the channels on $B_h$ between hubs use the Bitcoin and Ethereum realizations, respectively.

\subsection{Comparison Experiment}
\label{sec:experiment-local}

Table~\ref{tab:local-phase-comparison} reports the average running time and bandwidth per payment in the four protocol phases. XPCH's $\mathsf{SolveS}$ and $\mathsf{SolveR}$ steps are reported as $\mathsf{Solve}$. A$^2$L$^+$ and P2C2T expose Registration, Promise, and Solving, so their Solving costs are reported across the $\mathsf{Convert}$ and $\mathsf{Solve}$ columns. UniPCH-FR uses the same account realization as E but regenerates and verifies evidence for every active transfer task at each state update.

\begin{table}[ht]
    \centering
    \caption{Comparison of running time and bandwidth.}
    \label{tab:local-phase-comparison}
    \footnotesize
    \setlength{\tabcolsep}{1.0pt}
    \renewcommand{\arraystretch}{1.05}
    \begin{tabularx}{\columnwidth}{|>{\hsize=1.2\hsize\linewidth=\hsize\centering\arraybackslash}X|>{\hsize=.8\hsize\linewidth=\hsize\centering\arraybackslash}X|>{\centering\arraybackslash}X|>{\centering\arraybackslash}X|>{\centering\arraybackslash}X|>{\centering\arraybackslash}X|>{\centering\arraybackslash}X|}
    \hline
    \multicolumn{2}{|c|}{\textbf{Running Time (ms)}} & $\mathsf{Reserve}$ & $\mathsf{Promise}$ & $\mathsf{Convert}$ & $\mathsf{Solve}$ & Total \\
    \hline
    \multicolumn{2}{|c|}{A$^2$L$^+$} & 2.803 & 428.4 & \multicolumn{2}{c|}{254.6} & 685.9 \\
    \hline
    \multicolumn{2}{|c|}{P2C2T} & 4441.5 & 5398.2 & \multicolumn{2}{c|}{792.7} & 10632 \\
    \hline
    \multirow{2}{*}{XPCH} & N & 2.597 & 449.2 & 134.1 & 147.7 & 733.5 \\
    \cline{2-7}
     & F & 3.048 & 483.6 & 243.7 & 295.9 & 1026.2 \\
    \hline
    \multirow{4}{*}{UniPCH} & BN & 3.411 & 451.9 & 131.9 & 147.6 & 734.7 \\
    \cline{2-7}
     & BF & 5.356 & 494.1 & 247.9 & 301.0 & 1048.4 \\
    \cline{2-7}
     & EN & 3.578 & 468.6 & 135.0 & 147.6 & 754.8 \\
    \cline{2-7}
     & EF & 4.569 & 471.3 & 136.6 & 150.3 & 762.8 \\
     \hline
     \multirow{2}{*}{UniPCH-FR} & EN & 3.597 & 478.3 & 138.6 & 150.0 & 770.5 \\
    \cline{2-7}
     & EF & 4.563 & 517.5 & 249.1 & 295.3 & 1066.5 \\
    \hline
    \end{tabularx}
    \vspace{0.5em}

    \begin{tabularx}{\columnwidth}{|>{\hsize=1.2\hsize\linewidth=\hsize\centering\arraybackslash}X|>{\hsize=.8\hsize\linewidth=\hsize\centering\arraybackslash}X|>{\centering\arraybackslash}X|>{\centering\arraybackslash}X|>{\centering\arraybackslash}X|>{\centering\arraybackslash}X|>{\centering\arraybackslash}X|}
    \hline
    \multicolumn{2}{|c|}{\textbf{Bandwidth (KB)}} & $\mathsf{Reserve}$ & $\mathsf{Promise}$ & $\mathsf{Convert}$ & $\mathsf{Solve}$ & Total \\
    \hline
    \multicolumn{2}{|c|}{A$^2$L$^+$} & 0.720 & 13.06 & \multicolumn{2}{c|}{6.252} & 20.03 \\
    \hline
    \multicolumn{2}{|c|}{P2C2T} & 62.17 & 83.73 & \multicolumn{2}{c|}{15.28} & 161.2 \\
    \hline
    \multirow{2}{*}{XPCH} & N & 0.735 & 13.86 & 7.746 & 2.226 & 24.57 \\
    \cline{2-7}
     & F & 1.346 & 28.15 & 34.48 & 29.61 & 93.59 \\
    \hline
    \multirow{4}{*}{UniPCH} & BN & 1.398 & 14.48 & 7.727 & 3.178 & 26.78 \\
    \cline{2-7}
     & BF & 2.504 & 31.96 & 42.70 & 42.08 & 119.2 \\
    \cline{2-7}
     & EN & 1.398 & 14.48 & 7.727 & 3.178 & 26.78 \\
    \cline{2-7}
     & EF & 2.435 & 22.59 & 14.65 & 4.735 & 44.42 \\
     \hline
     \multirow{2}{*}{UniPCH-FR} & EN & 1.398 & 14.71 & 8.602 & 4.119 & 28.83 \\
    \cline{2-7}
     & EF & 2.435 & 32.04 & 43.43 & 42.99 & 120.9 \\
     \hline
    \end{tabularx}
\end{table}

\noindent\textbf{Running time.}
A$^2$L$^+$ has the lowest running time but cannot execute multiple payments within the same epoch. UniPCH supports multiple payments across PCHs. P2C2T incurs two-party signing, auxiliary encryption, and VTD operations. All tested UniPCH configurations are faster. BN and EN increase the running time over XPCH-N by 0.2\% and 2.9\%, respectively. BF increases it over XPCH-F by 2.2\%. UTXO updates refresh evidence for all active transfer tasks. The account realization retains unchanged evidence, allowing EF to reduce the running time by 25.7\% relative to XPCH-F and 27.2\% relative to BF. Under the same account realization, selective refresh reduces the running time by 2.0\% on the Near path and 28.5\% on the Far path compared with FR.

\noindent\textbf{Bandwidth.}
A$^2$L$^+$ uses the least bandwidth at 20.03 KB. P2C2T requires 161.2 KB, including auxiliary ciphertexts and VTD material. The common UniPC interfaces carry task evidence and additional protocol fields. BN and EN use 9.0\% more bandwidth than XPCH-N, and BF uses 27.4\% more than XPCH-F. EF uses 52.5\% less bandwidth than XPCH-F and 62.8\% less than BF. Under the same account realization, selective refresh reduces bandwidth by 7.1\% on the Near path and 63.3\% on the Far path compared with FR.

\begin{table}[ht]
    \centering
    \caption{Average adaptor pre-signature generations per payment.}
    \label{tab:presign-count}
    \footnotesize
    \renewcommand{\arraystretch}{1.3}
    \setlength{\tabcolsep}{12pt}
    \begin{tabular}{|c|c|c|c|}
        \hline
        Scheme & UniPCH-B & UniPCH-E & UniPCH-FR-E \\
        \hline
        Count & 294.694 & 14.079 & 304.277 \\
        \hline
    \end{tabular}
\end{table}

\noindent\textbf{Adaptor pre-signature generation.}
Table~\ref{tab:presign-count} reports the average number of adaptor pre-signatures generated per payment. UniPCH-B and UniPCH-E use the same endpoint channel realizations. By reusing evidence on the channels between hubs, UniPCH-E generates 95.2\% fewer pre-signatures than UniPCH-B. With the same account realization, selective refresh reduces the count by 95.4\% relative to UniPCH-FR-E.

\begin{table}[ht]
    \centering
    \caption{Running time under emulated network delay.}
    \label{tab:delayed-phase-comparison}
    \footnotesize
    \setlength{\tabcolsep}{1.0pt}
    \renewcommand{\arraystretch}{1.05}
    \begin{tabularx}{\columnwidth}{|>{\hsize=1.2\hsize\linewidth=\hsize\centering\arraybackslash}X|>{\hsize=.8\hsize\linewidth=\hsize\centering\arraybackslash}X|>{\centering\arraybackslash}X|>{\centering\arraybackslash}X|>{\centering\arraybackslash}X|>{\centering\arraybackslash}X|>{\centering\arraybackslash}X|}
    \hline
    \multicolumn{2}{|c|}{\textbf{Running Time (s)}} & $\mathsf{Reserve}$ & $\mathsf{Promise}$ & $\mathsf{Convert}$ & $\mathsf{Solve}$ & Total \\
    \hline
    \multirow{2}{*}{XPCH} & N & 2.027 & 2.863 & 1.744 & 3.252 & 9.886 \\
    \cline{2-7}
     & F & 3.245 & 4.693 & 3.834 & 6.126 & 17.899 \\
    \hline
    \multirow{4}{*}{UniPCH} & BN & 1.621 & 2.495 & 1.341 & 2.971 & 8.428 \\
    \cline{2-7}
     & BF & 2.836 & 4.429 & 3.055 & 6.049 & 16.369 \\
    \cline{2-7}
     & EN & 1.623 & 2.495 & 1.339 & 2.974 & 8.430 \\
    \cline{2-7}
     & EF & 2.837 & 4.127 & 2.560 & 5.411 & 14.936 \\
    \hline
    \end{tabularx}
\end{table}

\subsection{Evaluation of Network Delay}
\label{sec:experiment-delay}

We emulate network delay in WSL using Linux traffic control (tc), with a 200 ms delay and 250 Mbit/s bandwidth. Table~\ref{tab:delayed-phase-comparison} reports the running times of XPCH and UniPCH. BN and EN reduce the running time relative to XPCH-N by 14.8\% and 14.7\%, respectively. BF and EF reduce it relative to XPCH-F by 8.5\% and 16.6\%, respectively. UniPC's common update flow reduces message exchanges across the protocol phases. BF's improvement is smaller than EF's because UTXO updates require fresh evidence for all active transfer tasks.

\subsection{Parameter Sensitivity}
\label{sec:experiment-parameters}

\begin{figure}[ht]
    \centering
    \includegraphics[width=\columnwidth]{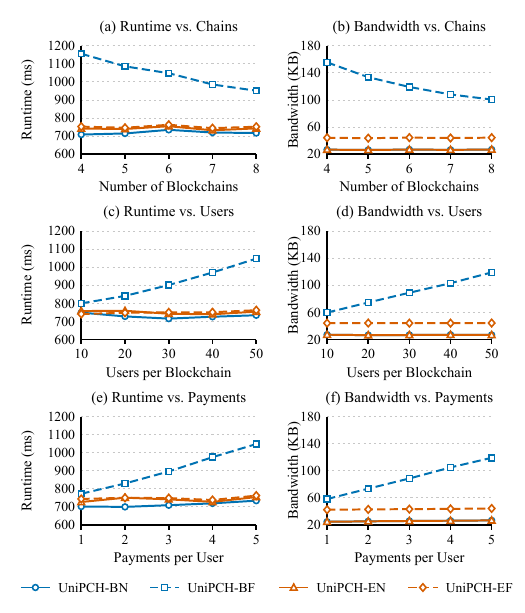}
    \caption{Effect of different parameters on running time and bandwidth.}
    \label{fig:exp-param}
\end{figure}

Fig.~\ref{fig:exp-param} varies one parameter while retaining the default values of the other two. The blockchain experiment starts with two Bitcoin regtest instances and two Ethereum Ganache instances, then alternately adds an Ethereum and a Bitcoin instance until reaching eight blockchains. BN, EN, and EF vary little as the number grows from four to eight. BF decreases from 1.156 s to 0.951 s and from 155.6 KB to 100.7 KB because the evidence refresh workload is distributed over more UTXO channels between hubs. Increasing the users per blockchain from 10 to 50 or the payments per user from one to five raises BF to 1.048 s and 119.2 KB, while the Near and EF curves change relatively little. Both changes leave more active transfers on shared UTXO channels. Each update then requires more task evidence generation and verification, increasing running time and bandwidth. Stable item identifiers allow the account realization to retain unchanged evidence.

\section{Discussion}

\noindent\textbf{Cost of Different Realizations.}
Let $k$ denote the number of active transfers in a candidate channel state. Replacing a UTXO commitment changes the signing messages of all active transfers, requiring $O(k)$ adaptor pre-signature generations per update. The account realization retains unchanged evidence and generates only $O(1)$ adaptor pre-signatures per operation. Because channels between hubs aggregate payments from multiple users, active transfers accumulate within an epoch and increase the UTXO refresh workload.

\noindent\textbf{Interaction Cost.}
UniPC organizes candidate state construction, task evidence, and state confirmation in a common update flow. Compared with XPCH's process, which is organized around transactions, this flow requires fewer message exchanges to authorize a state change and is therefore less sensitive to network delay. UTXO evidence refresh adds computation, so BF achieves a smaller reduction in running time than EF under network delay.

\noindent\textbf{Integration of New Blockchains.}
Consider a new blockchain $B'$ with a designated hub $H'$ and users $\mathcal{U}'$. Through a conforming UniPC realization, each user establishes and funds a channel with $H'$ on $B'$. The hub generates and publishes its LOE and BRS public keys and establishes funded UniPC channels with all existing hubs on $B_h$. Once these steps are completed between epochs, $H'$ and its users can participate from the next epoch.

\noindent\textbf{Scalability Considerations.}
With $n$ hubs, the fully connected overlay maintains $n(n-1)/2$ channels on $B_h$, while each payment uses at most four channels. Expanding the hub set increases channel maintenance and funding requirements. Selective refresh reduces authorization work on shared channels, but these channels process state updates sequentially and reserve liquidity for pending payments.

\section{Conclusion}
We present UniPCH, a framework with two layers that separates channel execution on individual ledgers from cross-chain payment logic. UniPC in the lower layer defines common state transitions, task authorization, and channel security requirements. We provide UTXO and account realizations. The payment protocol in the upper layer coordinates these realizations through UniPC interfaces without interpreting native transactions. We prove local UniPC security, griefing resistance, economic atomicity, and session unlinkability within an epoch. Prototype experiments demonstrate competitive efficiency, characterize the costs of both realizations, and show that selective evidence refresh reduces computation and communication costs.

\section*{Acknowledgments}
This work was supported in part by the National Natural Science Foundation of China (NSFC) / Research Grants Council (RGC) Collaborative Research Scheme (Grant No. 62461160332 \& CRS\_HKUST602/24), and in part by the Major Key Project of Pengcheng Laboratory under Grant PCL2025A07.

\bibliographystyle{IEEEtran}
\bibliography{myref}

\appendices

\section{Complete Security Definitions}
\label{sec:complete-security-definitions}

\subsection{UniPC Security}

Consider a PPT adversary $\mathcal{A}$ against a conforming UniPC realization $\mathcal{R}$. The experiment $\mathsf{Exp}^{\mathrm{UniPC}}_{\mathcal{R},\mathcal{A}}(1^\lambda)$ initializes a channel with at least one honest participant. The adversary controls any corrupted participant and may schedule off-chain messages arbitrarily, but cannot forge the origin of messages sent over authenticated channels. The underlying ledger correctly validates and executes transactions or contract calls, enforces the applicable scripts, timelocks, or contract state transitions prescribed by the realization, and confirms valid submissions within the assumed delay bound. This allows an honest participant to exercise the prescribed stale-state response within the applicable response or challenge period. The adversary may initiate polynomially many UniPC operations and submit candidate states, task evidence, and terminal actions. Each honest participant follows $\mathcal{R}$, including its prescribed response to an obsolete state, as well as UniPC's update ordering and confirmation rules. Let $\mathcal{T}$ be the resulting trace of installed records and ledger-accepted terminal actions. We define the following events over $\mathcal{T}$.

\begin{itemize}
\item $\mathsf{Bad}_{\mathrm{tr}}$ occurs if an installed state is not obtained from the preceding installed state by an admissible operation in \eqref{eq:unipc-transitions}, or violates the conservation invariant in \eqref{eq:unipc-conservation}.
\item $\mathsf{Bad}_{\mathrm{auth}}$ occurs if a record $(\Sigma',\mathcal{P}',\mathcal{E}',\Gamma')$ is installed while $\mathsf{Comp}(\mathcal{P}',\mathcal{E}')=0$ or $\mathsf{VfyConf}(\Sigma',\Gamma')=0$. It also occurs if evidence retained through $\mathcal{P}_{\mathrm{keep}}$ is accepted for a different task $\eta'\neq\eta$, where tasks are compared as complete tuples $(\mathsf{signer},m,Y)$.
\item $\mathsf{Bad}_{\mathrm{old}}$ occurs if an obsolete record becomes the terminal ledger state although an honest participant invokes the stale-state response prescribed by $\mathcal{R}$ within the applicable response bound.
\item $\mathsf{Bad}_{\mathrm{term}}$ occurs if the ledger accepts conflicting claim and refund outcomes for the same item in an installed record.
\end{itemize}

For $z\in\{\mathrm{tr},\mathrm{auth},\mathrm{old},\mathrm{term}\}$, define
\begin{equation}
 \mathsf{Adv}^{z}_{\mathcal{R},\mathcal{A}}(\lambda)
 =\Pr[\mathsf{Bad}_{z}],
 \label{eq:formal-unipc-advantage}
\end{equation}
where the probability is over the randomness of the experiment.

\begin{definition}[Formal UniPC Security]
\label{def:formal-unipc-security}
A conforming realization $\mathcal{R}$ satisfies transition correctness, authorization security, state freshness, and settlement consistency if $\mathsf{Adv}^{\mathrm{tr}}_{\mathcal{R},\mathcal{A}}$, $\mathsf{Adv}^{\mathrm{auth}}_{\mathcal{R},\mathcal{A}}$, $\mathsf{Adv}^{\mathrm{old}}_{\mathcal{R},\mathcal{A}}$, and $\mathsf{Adv}^{\mathrm{term}}_{\mathcal{R},\mathcal{A}}$ are negligible in $\lambda$, respectively, for every PPT adversary $\mathcal{A}$. It is secure if all four advantages are negligible.
\end{definition}

\subsection{S-RBCS Security}

Let $\Pi_{\mathrm{S\text{-}RBCS}}=(\mathsf{Setup},\mathsf{Reserve},\mathsf{Promise},\mathsf{Convert},\mathsf{Solve})$ be an S-RBCS scheme with the syntax defined in Section~\ref{sec:unipch-crypto}. The following definitions formalize the four requirements in Definition~\ref{def:srbcs-security}.

\begin{definition}[S-RBCS Correctness]
\label{def:srbcs-correctness}
$\Pi_{\mathrm{S\text{-}RBCS}}$ is correct if an honest execution satisfies the following conditions except with negligible probability: every credential produced by $\mathsf{Reserve}$ verifies after unblinding and randomization; each ciphertext-statement pair produced in $\mathsf{Promise}$ or $\mathsf{Convert}$ encodes a common witness within that pair; retained and refreshed evidence for every required task $\eta$ satisfies $\mathsf{VfyEv}(\eta,e)=1$; and adapting valid evidence yields a valid signature from which $\mathsf{Extract}$ recovers a witness for its statement.
\end{definition}

For task unforgeability, $\mathcal{A}$ chooses one of the following two challenges. All interactions follow the prescribed epoch and phase structure, and the honest parties execute the procedures of $\Pi_{\mathrm{S\text{-}RBCS}}$.

\emph{Fresh-message challenge.} The challenger initializes a designated honest signer and gives its public key to $\mathcal{A}$. The adversary may participate in polynomially many protocol executions and obtain the resulting signatures and task evidence. The challenger records in $\mathcal{Q}_{\mathrm{msg}}$ every message signed, pre-signed, or re-signed under the challenge key. The adversary outputs $(m^*,\sigma^*)$ and wins if $\mathsf{Verify}(pk,m^*,\sigma^*)=1$ and $m^*\notin\mathcal{Q}_{\mathrm{msg}}$.

\emph{Conditional-completion challenge.} The challenger runs the coordinating hub $H_C$ honestly, while $\mathcal{A}$ controls the other participants in Near or Far interactions. The adversary may initiate polynomially many executions of $\mathsf{Reserve}$, $\mathsf{Promise}$, $\mathsf{Convert}$, and $\mathsf{Solve}$, including the evidence updates prescribed by the construction. For each outgoing conditional authorization issued by $H_C$, the challenger creates a record $\ell$ containing its signing key, condition statement, and set $M_\ell$ of authorized messages. When the item authorization, signer, and condition remain unchanged, retaining evidence or refreshing the message updates the existing record. A new conditional authorization creates a new record.

Let $D$ be the number of distinct incoming conditional authorizations for which the coordinating hub's $\mathsf{Solve}$ procedure produces a valid completed signature. An authorization is counted when the signature is produced, even if $\mathcal{A}$ later aborts. The adversary outputs $(\ell_i,m_i,\sigma_i)_{i=1}^{d}$. Refreshed versions of the same authorization count as one output. The adversary wins if the records $\ell_i$ are pairwise distinct, each $m_i\in M_{\ell_i}$, each $\sigma_i$ verifies under the corresponding signing key, and $d>D$.

\begin{definition}[Task Unforgeability]
\label{def:srbcs-task-unforgeability}
Let $\mathsf{Unforge}^{\mathcal{A}}_{\Pi_{\mathrm{S\text{-}RBCS}}}(1^\lambda)$ denote the experiment above, which returns $1$ if $\mathcal{A}$ wins the selected challenge. $\Pi_{\mathrm{S\text{-}RBCS}}$ is task unforgeable if, for every PPT adversary $\mathcal{A}$,
\begin{equation}
 \Pr[\mathsf{Unforge}^{\mathcal{A}}_{\Pi_{\mathrm{S\text{-}RBCS}}}(1^\lambda)=1]
 \leq \mathsf{negl}(\lambda).
 \label{eq:srbcs-unforgeability}
\end{equation}
\end{definition}

In the unlockability experiment, the challenger initializes an honest signer. The adversary then selects a challenge task $\eta^*=(\mathsf{signer}^*,m^*,Y^*)$, for which the challenger generates adaptor evidence $\widetilde{\sigma}^*$ according to the construction. The adversary may obtain signatures and task evidence for other messages, but cannot make another signing or pre-signing query for $m^*$ under that signer's key. It outputs a signature $\sigma^*$ and wins if $\sigma^*$ verifies for $m^*$ but $\mathsf{Extract}(\widetilde{\sigma}^*,\sigma^*,Y^*)$ returns $\bot$ or a value inconsistent with $Y^*$. For linked tasks generated by the construction, each corresponding honest endpoint retains its nonzero local randomizer. The adversary also wins if any such endpoint cannot derive the adjacent witness according to \eqref{eq:unipch-randomization}.

\begin{definition}[Unlockability]
\label{def:srbcs-unlockability}
Let $\mathsf{Unlock}^{\mathcal{A}}_{\Pi_{\mathrm{S\text{-}RBCS}}}(1^\lambda)$ denote the experiment above. $\Pi_{\mathrm{S\text{-}RBCS}}$ is unlockable if the winning probability of every PPT adversary is negligible in $\lambda$.
\end{definition}

For blindness, $\mathcal{A}$ selects honest senders $S_0,S_1$ and receivers $R_0,R_1$ whose challenge sessions have identical public inputs and follow the same phase schedule. The challenger samples $b\xleftarrow{\$}\{0,1\}$ and executes two sessions using $\mathcal{M}_0=\{(S_0,R_0),(S_1,R_1)\}$ when $b=0$, and $\mathcal{M}_1=\{(S_0,R_1),(S_1,R_0)\}$ when $b=1$. The adversary receives the combined S-RBCS transcripts visible to the participating honest-but-curious hubs and the corresponding public-ledger observations. This view includes credential issuance and presentation, randomized ciphertext-statement pairs, and task evidence visible on the hubs' adjacent channels. Messages carried by the anonymous channel between endpoints are hidden. The adversary outputs a bit $b'$.

\begin{definition}[Blindness]
\label{def:srbcs-blindness}
$\Pi_{\mathrm{S\text{-}RBCS}}$ is blind if, for every PPT adversary $\mathcal{A}$,
\begin{equation}
 \mathsf{Adv}^{\mathrm{blind}}_{\Pi_{\mathrm{S\text{-}RBCS}},\mathcal{A}}(\lambda)
 =\left|\Pr[b'=b]-\frac{1}{2}\right|
 \leq \mathsf{negl}(\lambda).
 \label{eq:srbcs-blindness}
\end{equation}
\end{definition}

\section{Proofs of UniPC Realization Security}
\label{sec:security-unipc}

We prove the four properties in Definition~\ref{def:formal-unipc-security} separately for each realization.

\subsection{UTXO Realization}

\begin{lemma}[UTXO Transition Correctness]
\label{lem:utxo-transition}
Under the assumptions of Theorem~\ref{thm:utxo-realization}, $\mathsf{Adv}^{\mathrm{tr}}_{\mathcal{R}_{\mathrm{UTXO}},\mathcal{A}}(\lambda)$ is negligible for every PPT adversary $\mathcal{A}$.
\end{lemma}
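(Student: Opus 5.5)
We must show $\Pr[\mathsf{Bad}_{\mathrm{tr}}]$ is negligible, where $\mathsf{Bad}_{\mathrm{tr}}$ means some installed state either does not follow from its predecessor by an admissible operation in \eqref{eq:unipc-transitions} or violates \eqref{eq:unipc-conservation}. The argument rests on one fact: an honest participant installs a record only after it has \emph{itself} reconstructed and checked the candidate, and after both confirmation signatures on the resulting $\mathsf{TX}_C'$ verify.

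We proceed by induction over the sequence of records installed by a fixed honest participant $P$. The base case is $\Sigma_0=(0,\psi_A,\psi_B,\emptyset)$, which matches the jointly signed initial commitment spending $o_F$. This commitment has outputs $o_A(\psi_A)$ and $o_B(\psi_B)$ and satisfies \eqref{eq:unipc-conservation} trivially.

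For the inductive step, suppose $P$ installs $\Sigma'$ on top of its installed $\Sigma$. We split on $P$'s role.

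\emph{$P$ is the initiator.} Then $P$ itself ran $\mathsf{Prepare}(\Sigma,\mathsf{op},\theta)$ and built $\mathsf{TX}_C'$ by the operation mapping. Each mapping is admissible by construction: it checks sufficient balance, the expected item type and payer, and, for claims and refunds, the witness or deadline. It also moves exactly $\mu$ between a balance output and an item output.

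\emph{$P$ is the counterparty.} Before signing, $P$ reconstructs the candidate transactions from its own installed $\Sigma$ and the request. It checks the single input $o_F$, the output amounts, the scripts, the item indices, the deadlines, and value conservation. It rejects, and invokes $\mathsf{Abort}$, unless $\mathsf{TX}_C'$ equals the encoding $\mathsf{Enc}_{\mathrm{UTXO}}(\Sigma')$ for some $\Sigma'$ obtained from $\Sigma$ by an admissible $\delta\in\{\delta_{\mathsf E},\dots,\delta_{\mathsf{RT}}\}$.

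In both cases, a direct bookkeeping check on each of the six rows shows that $\phi_A+\phi_B+\sum v_x$ is unchanged. By the induction hypothesis this sum equals $\psi_A+\psi_B$.

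The remaining gap is that $P$ finalizes installation only after $\mathsf{VfyConf}(\Sigma',\Gamma')=1$. So the adversary might try to make $P$ install a state whose confirmation pair contains an honest signature on a transaction $P$ never checked. Here we reduce to the unforgeability of $\Pi_{\mathrm{Sig}}$. The honest participant signs a commitment message $m'_{\mathsf{st}}$ only after the checks above pass, and it signs at most one successor per installed version. Therefore any $\Gamma'$ certifying an unchecked $\mathsf{TX}_C'$ contains a valid signature under $pk_P$ on a message outside $P$'s signing queries.

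The reduction $\mathcal{B}$ simulates the experiment, guessing nothing beyond which honest key to embed. It answers $P$'s signing requests through the oracle and outputs the forgery. A union bound over the polynomially many updates gives $\Pr[\mathsf{Bad}_{\mathrm{tr}}]\le\mathrm{poly}(\lambda)\cdot\mathsf{Adv}^{\mathrm{EUF\text{-}CMA}}_{\Pi_{\mathrm{Sig}}}+\mathsf{negl}$.

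One detail remains for the claim step. Admissibility of $\delta_{\mathsf C}$ requires a valid witness. In the UTXO encoding this holds because the claim removes a transfer output only when the completed adaptor signature verifies under the task message, and $P$ checks that completed signature against the stored task.

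The main obstacle I expect is encoding injectivity. The check "the transaction matches an admissible successor" must determine $\Sigma'$ uniquely. This means I must argue that distinct abstract states map to distinct commitment encodings: outputs are ordered with balances first, and each item is identified by its output reference with amount and script. I would handle this by fixing the canonical output ordering in the realization. I would then note that any residual ambiguity would at most reorder items, and reordering does not affect either admissibility or conservation.
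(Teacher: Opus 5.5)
Your proposal is correct and follows essentially the same route as the paper. An honest participant reconstructs the candidate and checks admissibility, amounts, scripts, indices, deadlines and conservation before signing $\mathsf{TX}_C'$, so $\mathsf{Bad}_{\mathrm{tr}}$ forces either a deviation from the prescribed algorithm or a forgery against $\Pi_{\mathrm{Sig}}$. Your induction from $\Sigma_0$, the initiator/counterparty split, and the remarks on encoding injectivity only make explicit what the paper's three-line proof leaves implicit, with one small imprecision: passing from a verifying completed claim signature to a witness for $Y$ tacitly uses extractability of $\Pi_{\mathrm{AS}}$, which Theorem~\ref{thm:utxo-realization} does not list. Having the payee present $y$ directly avoids this.
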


\begin{proof}
An installed candidate contains both participants' signatures on the same commitment transaction. Before signing, an honest participant reconstructs the commitment and its descendants, applies the requested operation, and checks their amounts, scripts, indices, deadlines, and the invariant in \eqref{eq:unipc-conservation}. Hence, $\mathsf{Bad}_{\mathrm{tr}}$ requires either an honest participant to sign a candidate rejected by these checks or the adversary to forge its signature. The former contradicts the prescribed algorithm, and the latter has negligible probability by the unforgeability of $\Pi_{\mathrm{Sig}}$.
\end{proof}

\begin{lemma}[UTXO Authorization Security]
\label{lem:utxo-authorization}
Under the assumptions of Theorem~\ref{thm:utxo-realization}, $\mathsf{Adv}^{\mathrm{auth}}_{\mathcal{R}_{\mathrm{UTXO}},\mathcal{A}}(\lambda)$ is negligible for every PPT adversary $\mathcal{A}$.
\end{lemma}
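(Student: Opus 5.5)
The plan is to bound $\Pr[\mathsf{Bad}_{\mathrm{auth}}]$ by splitting the event into its three constituent subevents and showing each is negligible. First, consider the subevent that a record is installed while $\mathsf{Comp}(\mathcal{P}',\mathcal{E}')=0$. An honest participant installs only after $\mathsf{Complete}$ succeeds in its local view, and it signs the candidate only after the same check. So this subevent requires one of two things: the honest participant signs $\mathsf{TX}_C'$ without complete evidence, which contradicts the prescribed algorithm; or the adversary presents a valid signature of the honest participant on a $\mathsf{TX}_C'$ the honest party never signed. The latter is a forgery. I would reduce it to the unforgeability of $\Pi_{\mathrm{Sig}}$ by embedding the challenge key as the honest participant's funding/confirmation key and answering its signing queries with the signing oracle. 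The reduction guesses the session in which the forgery appears, losing at most a polynomial factor.

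Second, consider installation while $\mathsf{VfyConf}(\Sigma',\Gamma')=0$. The installation rule requires verifying both signatures on the same $\mathsf{TX}_C'$, so this can occur only if the ledger-level encoding and the abstract $\Sigma'$ disagree. I would argue this is excluded because $\mathsf{StateMsg}$ in the UTXO realization is the signature hash of $\mathsf{TX}_C'$, which the honest party reconstructs deterministically from $\Sigma'$ via $\mathsf{Enc}_{\mathrm{UTXO}}$. Hence a verified pair certifies exactly the candidate installed.

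Third, consider evidence retained through $\mathcal{P}_{\mathrm{keep}}$ being accepted for a different task $\eta'\neq\eta$. In the UTXO realization, $\mathcal{P}_{\mathrm{keep}}=\emptyset$ for every update, since replacing the commitment changes the txid referenced by every carried claim transaction. So this subevent is vacuous. I will make this explicit: every active transfer lies in $\mathcal{P}_{\mathrm{affected}}\cup\mathcal{P}_{\mathrm{new}}$, and all its evidence is freshly produced and checked by $\mathsf{VfyEv}$ against the full tuple $(\mathsf{signer},m,Y)$.

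The remaining concern is that freshly submitted adaptor pre-signatures for the honest party's own tasks could be fabricated by the adversary. An honest verifier checks pre-signatures under the payer's key via $\mathsf{PreVerify}$. A pre-signature attributed to an honest payer on a message it never pre-signed or re-signed would break the unforgeability of $\Pi_{\mathrm{AS}}$ or $\Pi_{\mathrm{RAS}}$ (pre-signature unforgeability implied via adaptation with the corresponding witness, or directly). I expect this to be the main obstacle: $\mathsf{RSign}$ reuses the same key and statement across messages, so the reduction must simulate all $\mathsf{PreSign}$ and $\mathsf{RSign}$ queries with the $\Pi_{\mathrm{RAS}}$ oracles. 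It must also rely on the fact that distinct commitments yield distinct claim-transaction sighashes, so that a forged pre-signature lands on a fresh message. I would handle this by recording every queried sighash and showing that a winning event yields a verifying output outside that query set. A union bound over the three subevents then gives negligible $\mathsf{Adv}^{\mathrm{auth}}_{\mathcal{R}_{\mathrm{UTXO}},\mathcal{A}}$.
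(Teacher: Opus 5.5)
Your proposal matches the paper's proof: the same three-way split of $\mathsf{Bad}_{\mathrm{auth}}$, with $\mathcal{P}_{\mathrm{keep}}=\emptyset$ making the retained-evidence case vacuous (because replacing $\mathsf{TX}_C$ changes every carried claim transaction's outpoint), the honest party verifying all evidence before signing, installation gated on $\mathsf{VfyConf}$, and $\Pi_{\mathrm{Sig}}$ unforgeability ruling out a confirmation the honest party never produced. Your final step on forged adaptor pre-signatures, which you call the main obstacle, is not needed: $\mathsf{Bad}_{\mathrm{auth}}$ is defined by acceptance under $\mathsf{VfyEv}$, so a forged but verifying pre-signature does not trigger it, and the paper's proof uses only $\Pi_{\mathrm{Sig}}$ here.
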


\begin{proof}
Replacing $\mathsf{TX}_C$ changes the outpoint, and possibly the index, used by every carried claim transaction. The corresponding authorization messages therefore change, giving $\mathcal{P}_{\mathrm{keep}}=\emptyset$ and $\mathcal{P}_{\mathrm{refresh}}=\mathcal{P}'_{\mathrm{UTXO}}$. An honest participant verifies every refreshed adaptor pre-signature before signing the candidate, and $\mathsf{VfyConf}$ checks both confirmation signatures before installation. A required task without evidence accepted by $\mathsf{VfyEv}$ makes $\mathsf{Comp}=0$, while a missing or invalid confirmation fails $\mathsf{VfyConf}$. Producing a valid confirmation on a state message not signed by the honest participant would violate the unforgeability of $\Pi_{\mathrm{Sig}}$. Since $\mathcal{P}_{\mathrm{keep}}=\emptyset$, no evidence is retained for a different task. Thus, $\mathsf{Bad}_{\mathrm{auth}}$ has negligible probability.
\end{proof}

\begin{lemma}[UTXO State Freshness]
\label{lem:utxo-freshness}
Under the assumptions of Theorem~\ref{thm:utxo-realization}, $\mathsf{Adv}^{\mathrm{old}}_{\mathcal{R}_{\mathrm{UTXO}},\mathcal{A}}(\lambda)$ is negligible for every PPT adversary $\mathcal{A}$.
\end{lemma}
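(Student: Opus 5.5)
We aim to show that $\Pr[\mathsf{Bad}_{\mathrm{old}}]$ is negligible by reducing any obsolete-state outcome to a forgery against $\Pi_{\mathrm{Sig}}$, or to a violation of the ledger assumption. The argument is a case analysis on which obsolete commitment reaches the ledger.

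\textbf{Setup of the reduction.} Fix an honest participant, say $A$, and let $\mathsf{TX}_C^{\nu^*}$ be the latest record installed in $A$'s local view. An obsolete record is any $\mathsf{TX}_C^{\nu}$ with $\nu<\nu^*$. We also need to handle a jointly confirmed candidate that $A$ retained but that was aborted before the counterparty installed it. For that case we follow the stated convention: $A$ retains and resumes it, and either version is valid for settlement because both were jointly signed over the same preceding state.

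\textbf{Case 1: the published transaction carries a forged signature.} Here the adversary publishes a commitment transaction spending $o_F$ that $A$ never signed. Such a transaction needs a valid signature of $A$ under the 2-of-2 funding script. The reduction embeds the challenge key of $\Pi_{\mathrm{Sig}}$ as $A$'s funding key and answers $A$'s confirmation-signing calls with the signing oracle. Any accepted spend of $o_F$ on a message outside $A$'s query set is then a forgery. This leaves only commitments that $A$ actually signed, namely versions $\nu\le\nu^*$ together with at most one confirmed successor per installed version.

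\textbf{Case 2: an obsolete commitment that $A$ did signed is published.} In this case we invoke the revocation mechanism. The key invariant is that, before $A$ confirms version $\nu+1$, it has already received and checked the counterparty's revocation secret for $\mathsf{TX}_C^{\nu}$. This follows from UniPC's update ordering and from $A$'s check before signing.

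The delayed (CSV-timelocked) branch of the counterparty's balance and item outputs in $\mathsf{TX}_C^{\nu}$ can then be swept by $A$ through the punishment path. $A$ invokes its prescribed stale-state response within the response bound. By Assumption (4), the ledger confirms this response before the timelock expires, so the obsolete record never becomes the terminal state in the sense of $\mathsf{Bad}_{\mathrm{old}}$: its outputs go to $A$ rather than settling as the adversary intended.

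For item outputs with adaptor-conditioned claim branches, we must also check that the punishment path applies to them as well. Otherwise the obsolete claim or refund descendants could settle first. We argue this holds because every descendant spends an output of the revoked commitment, and that output carries the revocation condition.

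\textbf{Main obstacle.} The step I expect to be hardest is pinning down the revocation exchange precisely, because the paper's protocol description mentions revocation keys only informally. To address this, I would first state explicitly that the confirmation flow of $\mathsf{Confirm}$ and $\mathsf{Install}$ includes the revocation-secret release for the preceding version, and then argue that an honest party's abort retains at most the retained candidate. Any residual failure probability is bounded by $\mathsf{Adv}^{\mathrm{EUF\text{-}CMA}}_{\Pi_{\mathrm{Sig}}}$ plus the probability of guessing a revocation secret, both of which are negligible.
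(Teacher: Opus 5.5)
Your argument is essentially the paper's: revocation material exchanged when a commitment is replaced lets the honest party spend the cheater's revocable outputs before the broadcaster's timelocked branch matures, and the ledger's confirmation bound makes that response effective, so the obsolete commitment cannot determine settlement. Case~1 is unnecessary, since an obsolete record in $\mathsf{Bad}_{\mathrm{old}}$ is already a previously installed, jointly signed record; also, the invariant you need is that an honest party holds the counterparty's revocation secret for version $\nu$ before it \emph{installs} $\nu+1$, not before it signs it (revoke-before-sign does not follow from the UniPC flow, and it would leave whichever party revokes first unable to enforce its latest state).
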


\begin{proof}
When a new commitment replaces the current one, the parties exchange the revocation material for the old commitment. If a party later broadcasts that obsolete commitment, the counterparty can use this material to spend its revocable output before the broadcaster's timelocked settlement branch becomes available. The blockchain confirms the revocation spend within the assumed delay bound. Allowing the obsolete commitment to determine the terminal settlement despite this response would violate transaction validation or the confirmation bound. Thus, $\mathsf{Bad}_{\mathrm{old}}$ occurs only with negligible probability.
\end{proof}

\begin{lemma}[UTXO Settlement Consistency]
\label{lem:utxo-settlement}
Under the assumptions of Theorem~\ref{thm:utxo-realization}, $\mathsf{Adv}^{\mathrm{term}}_{\mathcal{R}_{\mathrm{UTXO}},\mathcal{A}}(\lambda)$ is negligible for every PPT adversary $\mathcal{A}$.
\end{lemma}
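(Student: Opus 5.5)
We must show that $\mathsf{Bad}_{\mathrm{term}}$, where the ledger accepts both a claim and a refund outcome for the same item $x$ of an installed record, has negligible probability. The plan is to reduce settlement consistency to a structural property of the UTXO encoding in \eqref{eq:unipc-utxo}. Every active item $x$ of an installed record $\Sigma_\nu$ is represented by a single output $o_x(v_x,\mathsf{scr}_x)$ of $\mathsf{TX}_C^{\nu}$. Both terminal actions for $x$ are descendant transactions in $\mathcal{D}_x$ that spend this same outpoint: the claim branch of a transfer and the timelocked refund branch of an escrow or transfer. Since the ledger is assumed to validate transactions correctly, it never accepts two transactions spending the same outpoint. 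Hence, for a fixed confirmed commitment, at most one terminal outcome for $o_x$ can be accepted.

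The key steps are as follows. \textbf{(i)} Fix an item $x$ and observe that any ledger-accepted terminal action for $x$ must spend an output of some confirmed commitment transaction. All commitment versions spend the common funding output $o_F$, so at most one commitment $\mathsf{TX}_C^{\nu}$ is ever confirmed on the ledger. \textbf{(ii)} Given the unique confirmed commitment, all claim and refund actions for $x$ in that state spend the same $o_x$. The no-double-spend rule of the ledger therefore excludes conflicting outcomes. \textbf{(iii)} Consider the case where the conflict would arise across versions, for example a claim justified by one installed record and a refund justified by another. Step (i) shows that only one commitment can settle. By Lemma~\ref{lem:utxo-freshness}, an obsolete commitment becomes terminal only with negligible probability when the honest party responds. \textbf{(iv)} Unauthorized claims or refunds that bypass the scripts require either forging the honest party's signature under $\Pi_{\mathrm{Sig}}$ or forging an adaptor completion without a valid pre-signature under $\Pi_{\mathrm{AS}}$ or $\Pi_{\mathrm{RAS}}$. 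Each of these events has negligible probability. Refund spends before $\tau_x$ are rejected by the ledger's timelock enforcement.

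Combining these steps with a union bound over the polynomially many items and versions gives a negligible $\mathsf{Adv}^{\mathrm{term}}$.

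The main obstacle is step (iii): interpreting ``conflicting outcomes for the same item in an installed record'' when the item persists across versions with different outpoints. I would first define item identity via the realization-local reference that supplies $\mathsf{id}_x$. I would then argue that since only one commitment confirms on $o_F$, each $\mathsf{id}_x$ maps to exactly one on-chain outpoint. This reduces the cross-version case to the single-outpoint case.
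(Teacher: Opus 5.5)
Your proposal is correct and uses the same core argument as the paper: the claim and refund for an item spend the same output $o_x$ of the confirmed commitment, so accepting both would be a double spend that ledger validation rejects. Your step (i), that every commitment spends $o_F$ and so at most one confirms, makes explicit what the paper leaves implicit in ``the installed commitment.'' The appeals to Lemma~\ref{lem:utxo-freshness} and to signature/adaptor unforgeability in steps (iii)--(iv) are harmless but unnecessary, since $\mathsf{Bad}_{\mathrm{term}}$ concerns only two ledger-accepted outcomes and step (i) already pins both to a single outpoint.
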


\begin{proof}
The claim and refund transactions for an item spend the same output of the installed commitment through alternative script branches. Once one transaction is confirmed, accepting the other would spend the same outpoint twice. This contradicts ledger validation, so conflicting terminal outcomes occur only with negligible probability.
\end{proof}

\begin{proof}[Proof of Theorem~\ref{thm:utxo-realization}]
Lemmas~\ref{lem:utxo-transition}--\ref{lem:utxo-settlement} make all four advantages in Definition~\ref{def:formal-unipc-security} negligible. The UTXO realization therefore satisfies Definition~\ref{def:unipc-security}.
\end{proof}

\subsection{Account Realization}

\begin{lemma}[Account Transition Correctness]
\label{lem:account-transition}
Under the assumptions of Theorem~\ref{thm:account-realization}, $\mathsf{Adv}^{\mathrm{tr}}_{\mathcal{R}_{\mathrm{Account}},\mathcal{A}}(\lambda)$ is negligible for every PPT adversary $\mathcal{A}$.
\end{lemma}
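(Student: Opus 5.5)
The approach mirrors Lemma~\ref{lem:utxo-transition}: an event in $\mathsf{Bad}_{\mathrm{tr}}$ forces either an honest participant to sign a bad candidate or the adversary to forge. The difference is that the account realization installs a state through confirmation signatures on a state message derived from the digest $h_{\Sigma}$ of \eqref{eq:unipc-account}, not on a full transaction. So I will add a collision-resistance step that pins the signed digest to a unique abstract state.

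\textbf{Step 1: fix an offending trace.} Suppose $\mathsf{Bad}_{\mathrm{tr}}$ occurs. Then some installed record at version $\nu+1$ has a state $\Sigma'$ with one of two defects. Either it is not obtained from the preceding installed $\Sigma_\nu$ by an admissible operation in \eqref{eq:unipc-transitions}, or it violates \eqref{eq:unipc-conservation}. By the installation rule \eqref{eq:unipc-installable}, $\mathsf{VfyConf}(\Sigma',\Gamma')=1$. So there is a valid signature of the honest participant on $m'_{\mathsf{st}}=\mathsf{StateMsg}(\mathsf{dom}_c,\Sigma')$, which is derived from $h_{\Sigma'}$.

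\textbf{Step 2: case split.}
\begin{itemize}
\item[(a)] The honest party never signed this message. The adversary then yields a $\Pi_{\mathrm{Sig}}$ forgery, which has negligible probability by a standard reduction that guesses the honest party and embeds the challenge key.
\item[(b)] The honest party signed $\mathsf{StateMsg}(\mathsf{dom}_c,\Sigma'')$ for some candidate $\Sigma''$ it locally constructed. Before signing, it reconstructed the candidate from its own installed state and the requested operation via $\mathsf{Prepare}$, and checked admissibility. This includes sufficient balance, correct type and payer of referenced items, deadlines and witnesses, fresh nonces $n_x$ for new items, the recomputed $C_{\mathcal I}$, and conservation. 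So $\Sigma''$ is admissible and conservative.
\end{itemize}
If $\Sigma'\neq\Sigma''$ but the two have the same signed message, then $h_{\Sigma'}=h_{\Sigma''}$ with distinct preimages $(\mathsf{cid},\nu,\phi_A,\phi_B,C_{\mathcal I})$, or two item sets yield the same $C_{\mathcal I}$. Each is a collision for $H$, either directly or inside the Merkle tree or ordered-list hash. The item identifiers \eqref{eq:unipc-account-item} likewise bind type, payer, value and deadline, so two distinct items sharing an $\mathsf{id}_x$ also give a collision. Collision resistance therefore bounds this subcase.

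\textbf{Step 3: close the argument.} I also need that the preceding state the honest party reasons about equals the trace's installed predecessor. This follows from the sequential update rule, the one-successor-per-version confirmation rule, and the fact that $\nu$ is bound into $h_\Sigma$. A union bound over the polynomially many updates gives a negligible total. On-chain settlement only finalizes jointly signed states and executes permitted item actions, so it adds no new installed states for this property.

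\textbf{Main obstacle.} The delicate step is the hash-binding argument: showing that $C_{\mathcal I}$ commits to the \emph{complete} abstract item set, including values and deadlines, not just identifiers. I would handle it by noting that identifiers themselves hash those fields. Under either commitment representation, two distinct item collections with equal $C_{\mathcal I}$ then yield an explicit $H$-collision, extractable by walking the Merkle paths or comparing the ordered lists.
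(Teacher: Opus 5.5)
Your proposal is correct and follows essentially the same route as the paper: an installed offending state forces either a $\Pi_{\mathrm{Sig}}$ forgery on the honest participant's confirmation, or an $H$-collision binding one digest to two different state contents, because the honest participant signs only a digest it has reconstructed and checked. Your added detail makes explicit what the paper's short proof leaves implicit: using the identifier hash to bind values and deadlines through $C_{\mathcal I}$, and the version-binding and one-successor argument for predecessor consistency.
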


\begin{proof}
The digest $h_{\Sigma}$ binds the channel identifier, version, balances, and commitment to the complete active item set. An honest participant reconstructs the requested operation, checks the candidate and value conservation, and signs only the resulting digest. Installing a state involved in $\mathsf{Bad}_{\mathrm{tr}}$ therefore requires a forgery of its confirmation signature or a collision that binds the same digest to different state contents. Both events have negligible probability.
\end{proof}

\begin{lemma}[Account Authorization Security]
\label{lem:account-authorization}
Under the assumptions of Theorem~\ref{thm:account-realization}, $\mathsf{Adv}^{\mathrm{auth}}_{\mathcal{R}_{\mathrm{Account}},\mathcal{A}}(\lambda)$ is negligible for every PPT adversary $\mathcal{A}$.
\end{lemma}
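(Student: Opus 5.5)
The event $\mathsf{Bad}_{\mathrm{auth}}$ has two parts, and I will bound each one separately. The first part is installation of a record with $\mathsf{Comp}(\mathcal{P}',\mathcal{E}')=0$ or $\mathsf{VfyConf}(\Sigma',\Gamma')=0$. The second is retention of evidence through $\mathcal{P}_{\mathrm{keep}}$ for a task $\eta'\neq\eta$. The overall structure mirrors Lemma~\ref{lem:utxo-authorization}, but $\mathcal{P}_{\mathrm{keep}}$ is generally nonempty in the account realization. The new work is therefore to show that retained evidence is always bound to exactly the task it is reused for.

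For the first part I will follow the installation procedure. An honest participant signs $\mathsf{StateMsg}(\mathsf{dom}_c,\Sigma')$, derived from $h_{\Sigma'}$, only after \eqref{eq:unipc-complete} holds in its local view. Installation requires both confirmation signatures to pass $\mathsf{VfyConf}$. Suppose a record is installed with an invalid certificate, or with a certificate whose honest share was never produced for that candidate. Then either the adversary forged a $\Pi_{\mathrm{Sig}}$ signature on an unqueried state message, or two distinct candidates map to the same $h_\Sigma$. The first case reduces directly to unforgeability of $\Pi_{\mathrm{Sig}}$. The second yields a collision in $H$, or in the commitment $C_{\mathcal I}$ when it is built from $H$ as a Merkle root or list hash. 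Every newly submitted pre-signature is checked with $\mathsf{VfyEv}$ before the honest participant confirms. A missing or invalid piece of evidence therefore leaves $\mathsf{Comp}=0$, and the honest party never signs.

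For the second part I will use \eqref{eq:unipc-account-item}. A retained evidence object $e$ was verified against an installed task $\eta=(\mathsf{signer},m_x,Y_x)$. Suppose it is accepted for $\eta'=(\mathsf{signer}',m',Y')\neq\eta$. The task determines the signer as $\mathsf{payer}_x$, and $m'=m_x=\mathsf{id}_x$ hashes $\mathsf{type},\mathsf{cid},\mathsf{payer},v,\tau,n$. So equal messages with distinct item contents give a collision in $H$. Fresh nonces $n_x$ make identifiers of distinct items distinct except with negligible probability, by a birthday bound over polynomially many items. It remains to handle the case of the same signer and message but a different statement, $Y'\neq Y_x$. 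Here $\mathsf{VfyEv}$ checks the pre-signature against the full tuple, including $Y$. Acceptance for $Y'$ would give a valid pre-signature on $(m_x,Y')$ never produced by the honest signer. This is ruled out by unforgeability of $\Pi_{\mathrm{AS}}$, since pre-signature unforgeability implies full-signature forgery after adaptation with a known witness. Alternatively, if the realization matches keep-tasks by identifier and compares the stored tuple exactly, the case is excluded syntactically.

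The main obstacle is this last step. One must argue that $\mathsf{Escrow2Transfer}$ and any deadline update, which change $\tau$ or $Y$ for the same logical item, necessarily produce a new $m_x$ or a new $Y$. Such updates must therefore fall into $\mathcal{P}_{\mathrm{refresh}}$ rather than silently reusing old evidence. I would first settle this with the observation that $\mathsf{Escrow2Transfer}$ creates a new item with a freshly sampled nonce. A union bound over the polynomially many updates then makes $\mathsf{Adv}^{\mathrm{auth}}$ negligible.
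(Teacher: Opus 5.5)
Your split of $\mathsf{Bad}_{\mathrm{auth}}$ into an installation clause and a retention clause matches the paper's proof, and so does your handling of the installation clause. An honest participant signs the candidate state message only after $\mathsf{Comp}(\mathcal{P}',\mathcal{E}')=1$ holds in its own view. $\mathsf{VfyConf}$ blocks installation without both shares, and a valid certificate the honest party never issued is a $\Pi_{\mathrm{Sig}}$ forgery. Your extra collision-resistance step for $h_\Sigma$ and $C_{\mathcal I}$ is a harmless refinement.

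For the retention clause, the paper argues purely structurally from the realization's fixed derivation rules. An unchanged transfer keeps its exact tuple $(\mathsf{signer},m,Y)$ together with its evidence. A new transfer receives a new task, which is refreshed. A consumed transfer loses its task. Hence retained evidence is never reassigned. That is exactly your ``syntactic'' alternative, and it is all the paper uses.

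Your primary, cryptographic route for the retention clause does not work as written. Take the case $Y'\neq Y_x$.
\begin{itemize}
\item The object $e$ is the honest signer's own pre-signature, so nothing has been forged.
\item Adapting $e$ under $Y'$ needs a witness for $Y'$, which the adversary need not know.
\item Even then, the result is a full signature on $m_x$. The honest signer has already pre-signed $m_x$, so this is not a forgery in the adaptor unforgeability game.
\end{itemize}
What this route actually needs is that pre-verification binds a pre-signature to a single statement. It also needs binding to a single message, for the case $m'\neq m_x$ that your case split skips by simply writing $m'=m_x$. No such binding property is among the assumptions of Theorem~\ref{thm:account-realization}.

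The step you flag as the main obstacle also dissolves. Escrow items carry no authorization task in the account realization, the $\mathsf{type}$ field changes anyway, and no operation modifies $\tau$ or $Y$ of an existing transfer. So $\mathsf{Escrow2Transfer}$ has no earlier evidence to reuse.

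Drop the $\Pi_{\mathrm{AS}}$ reduction and keep the structural argument. Your collision-resistance and fresh-nonce reasoning is then not needed for $\mathsf{Bad}_{\mathrm{auth}}$, because identical tuples do not count as a different task. That reasoning is, however, what supports the distinct-message assumption used in Theorem~\ref{thm:srbcs-construction}.
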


\begin{proof}
The verifier checks evidence against the complete task $\eta=(\mathsf{signer},m,Y)$. Under the realization's task derivation rules, an unchanged transfer keeps its task tuple and evidence, a new transfer receives a new task and the corresponding evidence, and removing a transfer removes its task. Retained evidence therefore remains assigned to its original task. A required task without evidence accepted by $\mathsf{VfyEv}$ makes $\mathsf{Comp}=0$. After this completeness check, the participants sign the candidate state digest. $\mathsf{VfyConf}$ rejects installation if either confirmation is missing or invalid. Producing a valid confirmation on a state message not signed by the honest participant would violate the unforgeability of $\Pi_{\mathrm{Sig}}$. Thus, $\mathsf{Bad}_{\mathrm{auth}}$ has negligible probability.
\end{proof}

\begin{lemma}[Account State Freshness]
\label{lem:account-freshness}
Under the assumptions of Theorem~\ref{thm:account-realization}, $\mathsf{Adv}^{\mathrm{old}}_{\mathcal{R}_{\mathrm{Account}},\mathcal{A}}(\lambda)$ is negligible for every PPT adversary $\mathcal{A}$.
\end{lemma}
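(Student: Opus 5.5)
The argument mirrors Lemma~\ref{lem:utxo-freshness}, with the revocation mechanism replaced by the contract's version ordering and challenge period. Fix a PPT adversary $\mathcal{A}$ and suppose $\mathsf{Bad}_{\mathrm{old}}$ occurs. Then some record $\mathcal{R}_{\nu}$ with $\nu<\nu^*$, where $\nu^*$ is the latest installed version held by the honest participant $P$, is finalized by the contract, even though $P$ invoked the prescribed stale-state response within the challenge period. I would show that this forces one of three events: a forgery of $\Pi_{\mathrm{Sig}}$, a collision in $H$, or a violation of the ledger assumption. Each has negligible probability.

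\textbf{Step 1: what the contract accepts.} The contract accepts a submission only when it carries valid confirmation signatures from both $A$ and $B$ on $\mathsf{StateMsg}$ derived from $h_\Sigma=H(\mathsf{cid}\Vert\nu\Vert\phi_A\Vert\phi_B\Vert C_{\mathcal I})$. Consider any version $\nu'$ that $P$ never confirmed for this channel, and any digest at version $\nu$ different from the one $P$ signed. An accepted submission of that kind needs either a signature of $P$ on a message outside its signing history, which breaks the unforgeability of $\Pi_{\mathrm{Sig}}$, or a second state content with the same digest as one $P$ signed, which breaks the collision resistance of $H$. Including $\mathsf{cid}$ in the digest also blocks replaying a state from another channel. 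So, except with negligible probability, every accepted submission is a state that $P$ actually confirmed.

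\textbf{Step 2: $P$'s state is the highest version.} An honest participant confirms at most one successor per installed version, and installs only after both signatures are verified. Hence every state $P$ has confirmed has version at most $\nu^*+1$. The one exception is a retained confirmed candidate at version $\nu^*+1$; it is fresher than $\nu^*$ and is not obsolete in the sense of $\mathsf{Bad}_{\mathrm{old}}$. By Step~1, the only accepted submissions have versions at most $\nu^*+1$, and the state $P$ holds is a jointly signed state at version $\nu^*$ or higher.

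\textbf{Step 3: the challenge period.} When the adversary submits $\mathcal{R}_\nu$, $P$ submits its state, either its installed record or its retained confirmed candidate, during the challenge period. By the ledger assumption, this valid contract call is executed and confirmed within the delay bound, which fits inside the challenge period. The contract rule then makes the higher version supersede $\nu$, and the contract finalizes the highest submitted version. That version is at least $\nu^*>\nu$, so finalizing $\mathcal{R}_\nu$ would contradict correct contract execution or the timely confirmation assumption.

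\textbf{Main obstacle.} The hardest part is pinning down which states count as obsolete when $P$ has confirmed a candidate that was never installed. That candidate is jointly signed only if the counterparty also signed it, and in that case it is at least as fresh as $\nu^*$. I would handle this by defining obsolete as any version strictly below $P$'s latest installed version, and then applying a union bound over the reduction events in Steps~1 and~3.
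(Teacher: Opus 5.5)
Your proposal is correct and follows the paper's argument. The paper's proof is essentially your Step~3: the honest participant's timely submission of its latest jointly confirmed state supersedes any lower version under the contract's challenge-period rule, so finalizing an obsolete record would contradict correct contract execution or timely confirmation. Your Steps~1--2 usefully make explicit what the paper leaves implicit, namely that by unforgeability of $\Pi_{\mathrm{Sig}}$ and collision resistance of $H$ no accepted submission exceeds version $\nu^*+1$, and that such a candidate is not obsolete. One small correction: any candidate $P$ confirmed but did not install was one $P$ signed first as counterparty, so $P$ lacks the second signature and cannot submit it, but submitting its installed version-$\nu^*$ record already suffices.
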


\begin{proof}
During the challenge period, the contract accepts a jointly signed state with a higher version and supersedes every lower submitted version. If an honest participant submits the latest jointly confirmed state within this period, finalizing an older record contradicts the contract's version rule or correct execution by the blockchain. Hence, $\mathsf{Bad}_{\mathrm{old}}$ has negligible probability.
\end{proof}

\begin{lemma}[Account Settlement Consistency]
\label{lem:account-settlement}
Under the assumptions of Theorem~\ref{thm:account-realization}, $\mathsf{Adv}^{\mathrm{term}}_{\mathcal{R}_{\mathrm{Account}},\mathcal{A}}(\lambda)$ is negligible for every PPT adversary $\mathcal{A}$.
\end{lemma}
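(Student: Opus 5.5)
We need to show that, for every PPT adversary, $\mathsf{Adv}^{\mathrm{term}}_{\mathcal{R}_{\mathrm{Account}},\mathcal{A}}(\lambda)$ is negligible, i.e., the contract never accepts both a claim and a refund for the same item of an installed record. The proof mirrors Lemma~\ref{lem:utxo-settlement}, with the contract's finalization rule playing the role of the spent outpoint. The plan is to locate every path by which the contract can execute a terminal action on an item. Then show that all such paths are tied to a single finalized state, and that each item identifier in that state is consumed at most once.

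\textbf{Step 1: a single finalized state.} By the settlement procedure, terminal claim and refund actions are executed only after the challenge period, and only against the highest submitted jointly signed state. By correct execution of contract calls, exactly one state digest $h_\Sigma^*$ is finalized per channel. Any alternative terminal outcome would have to come from a different finalized state, which contradicts the contract's version rule. The one exception is two distinct jointly signed states with the same version. An honest participant confirms at most one successor per installed version. So producing such a pair requires a forgery of the honest party's $\Pi_{\mathrm{Sig}}$ signature, which happens only with negligible probability.

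\textbf{Step 2: unique item binding.} Item actions must be verified against $C_{\mathcal I}$ in the finalized digest, using membership proofs or the ordered-list index. By collision resistance of $H$, the opening of $C_{\mathcal I}$ is unique. This binding has two consequences. First, each item identifier $\mathsf{id}_x$ of \eqref{eq:unipc-account-item} determines a unique tuple $(\mathsf{type}_x,\mathsf{payer}_x,v_x,\tau_x,n_x)$. Second, the adversary cannot present a second, differently typed or differently valued opening of the same item. The fresh nonce $n_x$ ensures distinct items never share an identifier except on a hash collision.

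\textbf{Step 3: exclusive consumption.} The contract records an item as consumed once a claim or refund is executed and rejects any later action on that identifier. This is the account analogue of a double-spend check, and it relies on correct contract execution. Consequently, accepting both a claim and a refund for the same item requires one of three events: (i) a second finalized state (excluded in Step 1), (ii) a hash collision giving two openings (excluded in Step 2), or (iii) a contract that re-executes on a consumed identifier (excluded by the ledger assumption). A union bound over these events gives the negligible bound.

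The main obstacle is Step 3. The realization description says only that the contract ``executes its permitted claim or refund actions'' and does not explicitly state a consumption flag or a claim-versus-refund time rule. I would make explicit that the contract enforces conforming UniPC semantics: each item is removed upon its terminal action, as in the transition rules \eqref{eq:unipc-transitions}, where a claimed or refunded item is removed and cannot be consumed again. I would then argue that this removal is part of correct contract execution. If a claim and a refund can race near $\tau_x$, the contract's sequential execution still admits only the first, so exclusivity holds regardless of timing.
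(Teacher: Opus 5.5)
Your proposal is correct and follows the paper's argument: the paper's proof is exactly your Step 3. The contract checks membership in the finalized active set, a successful claim or refund removes the item, so any later terminal action on that item fails the membership check, which would contradict correct contract execution. Your Steps 1 and 2 add care the paper leaves implicit: uniqueness of the finalized state via the version rule and unforgeability of $\Pi_{\mathrm{Sig}}$, and binding of $C_{\mathcal I}$ and $\mathsf{id}_x$ via collision resistance of $H$. The removal rule you flagged as the main obstacle is simply asserted by the paper as part of the contract's state machine.
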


\begin{proof}
The contract executes an item action after checking its membership in the finalized active set, authorization message, and deadline. A successful claim or refund removes the item from that set, so a later terminal action on the same item fails the membership check. Accepting both outcomes would contradict the contract state machine or correct ledger execution and therefore occurs only with negligible probability.
\end{proof}

\begin{proof}[Proof of Theorem~\ref{thm:account-realization}]
Lemmas~\ref{lem:account-transition}--\ref{lem:account-settlement} make all four advantages in Definition~\ref{def:formal-unipc-security} negligible. The account realization therefore satisfies Definition~\ref{def:unipc-security}.
\end{proof}

\section{Proof of S-RBCS Security}
\label{sec:security-srbcs}

We prove separately that the construction $\Pi_{\mathrm{S\text{-}RBCS}}^{*}$ satisfies the four requirements formalized in Section~\ref{sec:complete-security-definitions}. The arguments use only the task and evidence interfaces of UniPC and therefore do not depend on a particular ledger realization.

\begin{lemma}[S-RBCS Correctness]
\label{lem:srbcs-correctness}
Under the assumptions of Theorem~\ref{thm:srbcs-construction}, $\Pi_{\mathrm{S\text{-}RBCS}}^{*}$ satisfies Definition~\ref{def:srbcs-correctness}.
\end{lemma}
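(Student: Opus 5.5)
The plan is to verify each of the four clauses of Definition~\ref{def:srbcs-correctness} in turn, for an honest execution of $\Pi_{\mathrm{S\text{-}RBCS}}^{*}$. Each clause reduces to the correctness property of one underlying primitive, and a union bound over the polynomially many invoked instances keeps the total failure probability negligible. The argument uses only the UniPC task and evidence interfaces, so it applies to either realization.

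\emph{Credentials.} In $\mathsf{Reserve}$, the honest sender forms $(cm,d,\pi_{cm})\leftarrow\mathsf{Commit}(\mathsf{tid})$. By completeness of $\Pi_{\mathrm{ZK}}$, $H_C$ accepts $\pi_{cm}$ and returns $\sigma^*$. Correctness of $\Pi_{\mathrm{BRS}}$ under unblinding and randomization then gives that $\sigma_{\mathsf{tid}}=\mathsf{Randomize}(\mathsf{Unblind}(\sigma^*,d))$ verifies on $\mathsf{tid}$.

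\emph{Ciphertext--statement pairs.} In $\mathsf{Promise}$, $H_C$ sets $Y_0=x_0G$ and $C_0=\mathsf{Enc}(pk_{\mathrm{LOE}},x_0)$. These share the witness $x_0$ by LOE correctness, and $\pi_0$ is accepted by ZK completeness. In $\mathsf{Convert}$, I argue by induction over \eqref{eq:unipch-randomization}. Since $\alpha_R,\alpha_S\in\mathbb{Z}_q^*$, correctness of $\mathsf{RandEnc}$ shows that $(Y_1,C_1)$ encodes $x_1=\alpha_Rx_0$ and $(Y_2,C_2)$ encodes $x_2=\alpha_Sx_1$. Hence $\mathsf{Dec}(C_2)=x_2$ satisfies $x_2G=Y_2$ in $\mathsf{Solve}$.

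\emph{Evidence.} I split on the refresh rule \eqref{eq:srbcs-refresh}.
\begin{itemize}
\item For $\eta\in\mathcal{P}_{\mathrm{keep}}$, the retained $e_\eta$ satisfies $\mathsf{VfyEv}(\eta,e_\eta)=1$ by the very definition \eqref{eq:unipc-refresh} of $\mathcal{P}_{\mathrm{keep}}$; no cryptographic argument is needed.
\item For a new transfer task, $\mathsf{PreSign}(sk,m,Y)$ passes $\mathsf{PreVerify}$ by $\Pi_{\mathrm{AS}}$ correctness.
\item For a task whose message alone changed, from $m$ to $m'$ with the same signer key and statement, $\mathsf{RSign}$ yields a pre-signature valid for $m'$ by $\Pi_{\mathrm{RAS}}$ correctness.
\end{itemize}
This case is where I expect the main obstacle. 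I must check that the conditions under which the construction invokes $\mathsf{RSign}$ match the hypotheses of $\Pi_{\mathrm{RAS}}$ correctness, namely the same key and statement as the stored task. I also need $\mathsf{VfyEv}$ to coincide with $\mathsf{PreVerify}$ on the full task tuple $(\mathsf{signer},m,Y)$. I would settle both by inspecting the task derivation rules: in the UTXO realization every carried task is refreshed with the same $Y_x$ and payer, and in the account realization $m_x$ is stable, so refresh applies only to newly created tasks.

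\emph{Adapt and extract.} For valid evidence $\widetilde{\sigma}$ on $(m,Y)$ with $Y=xG$, $\Pi_{\mathrm{AS}}/\Pi_{\mathrm{RAS}}$ correctness gives that $\mathsf{Adapt}(\widetilde{\sigma},x)$ verifies on $m$ and that $\mathsf{Extract}$ returns a witness for $Y$. The witnesses in the construction, $x_2$ from the previous clause, $x_1=\alpha_S^{-1}x_2$, and $x_0=\alpha_R^{-1}x_1$, are exactly the witnesses for $Y_2$, $Y_1$, and $Y_0$, so every linked completion in $\mathsf{Solve}$ succeeds. A union bound over all sessions and tasks then completes the proof.
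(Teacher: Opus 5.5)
Your proposal is correct and follows essentially the same route as the paper's proof. You use $\Pi_{\mathrm{BRS}}$ correctness for the credential. For the ciphertext--statement pairs you use $\Pi_{\mathrm{LOE}}$ correctness, $\Pi_{\mathrm{ZK}}$ completeness, and the fact that multiplying by nonzero $\alpha_R,\alpha_S$ preserves the shared witness. For the evidence you split into $\mathcal{P}_{\mathrm{keep}}$ and $\mathcal{P}_{\mathrm{refresh}}$, with $\Pi_{\mathrm{AS}}$/$\Pi_{\mathrm{RAS}}$ correctness covering new and refreshed evidence, and you use adaptor correctness for adapt-then-extract. Your extra checks are details the paper leaves implicit, not a different argument: that $\mathsf{RSign}$ is invoked only with the stored signer key and statement, the explicit inversion chain $x_1=\alpha_S^{-1}x_2$ and $x_0=\alpha_R^{-1}x_1$, and the closing union bound.
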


\begin{proof}
The correctness of $\Pi_{\mathrm{BRS}}$ ensures that unblinding and randomizing the response in \eqref{eq:unipch-credential} produces a credential accepted under $H_C$'s public key. The correctness of $\Pi_{\mathrm{LOE}}$ ensures that every honestly generated ciphertext-statement pair contains the same witness, and the completeness of $\Pi_{\mathrm{ZK}}$ ensures that its accompanying proof verifies. Multiplication by the nonzero scalars $\alpha_R$ and $\alpha_S$ preserves this relation in \eqref{eq:unipch-randomization}.

For every task in $\mathcal{P}_{\mathrm{keep}}$, the retained evidence has already been accepted for the unchanged tuple $(\mathsf{signer},m,Y)$. Each task in $\mathcal{P}_{\mathrm{refresh}}$ receives evidence generated by $\mathsf{PreSign}$ or $\mathsf{RSign}$ and is checked by $\mathsf{VfyEv}$. The correctness of $\Pi_{\mathrm{AS}}$ and $\Pi_{\mathrm{RAS}}$ therefore makes all required evidence valid. Finally, adapting valid evidence with $x$ yields a valid signature, and extraction from the evidence and completed signature recovers $x$ such that $Y=xG$. All conditions in Definition~\ref{def:srbcs-correctness} follow.
\end{proof}

\begin{lemma}[S-RBCS Task Unforgeability]
\label{lem:srbcs-task-unforgeability}
Under the assumptions of Theorem~\ref{thm:srbcs-construction}, the winning probability in \eqref{eq:srbcs-unforgeability} is negligible.
\end{lemma}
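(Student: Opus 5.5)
The argument splits according to which challenge $\mathcal{A}$ selects in $\mathsf{Unforge}^{\mathcal{A}}_{\Pi_{\mathrm{S\text{-}RBCS}}}$, and each branch is handled by a separate reduction. The overall bound is the sum of the two reductions' advantages plus the negligible failure probabilities of $\Pi_{\mathrm{ZK}}$ soundness, $\Pi_{\mathrm{BRS}}$ unforgeability, and collisions in $H$. Each of these terms is negligible.

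\emph{Fresh-message challenge.} We build a reduction $\mathcal{B}$ against the unforgeability of $\Pi_{\mathrm{AS}}$ and $\Pi_{\mathrm{RAS}}$, which we treat as a single key-sharing scheme since $\mathsf{RSign}$ reuses the original key and statement.
\begin{itemize}
\item $\mathcal{B}$ embeds its challenge key as the designated honest signer's key.
\item It answers every $\mathsf{PreSign}$, $\mathsf{RSign}$, and full-signature request arising in $\mathsf{Promise}$, $\mathsf{Convert}$, and the refresh rule \eqref{eq:srbcs-refresh} through its own oracles.
\item All remaining values (BRS credentials, LOE ciphertexts, ZK proofs) are simulated with keys $\mathcal{B}$ generates itself.
\end{itemize}
The set $\mathcal{Q}_{\mathrm{msg}}$ coincides exactly with $\mathcal{B}$'s oracle query set. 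Evidence retained through $\mathcal{P}_{\mathrm{keep}}$ triggers no new query, so it adds nothing to $\mathcal{Q}_{\mathrm{msg}}$. Hence a winning pair $(m^*,\sigma^*)$ with $m^*\notin\mathcal{Q}_{\mathrm{msg}}$ is directly a forgery for $\mathcal{B}$. The hypothesis that independent item authorizations under one key use distinct messages (for the account realization, ensured by the nonce in \eqref{eq:unipc-account-item} together with collision resistance of $H$) guarantees that the simulation never has to answer two different item tasks on the same message.

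\emph{Conditional-completion challenge.} We reduce to OMDL, following the RBCS unforgeability argument of XPCH and using the LOE model.
\begin{itemize}
\item \textbf{Embedding.} For each $\mathsf{Promise}$ by the honest $H_C$, the reduction sets $Y_0$ to a fresh OMDL challenge instead of sampling $x_0$. It produces $C_0$ and $\pi_0$ by simulation, using ZK and LOE confidentiality.
\item \textbf{Answering solves.} When $H_C$ must decrypt an incoming $C_2$ in $\mathsf{Solve}$, the LOE model lets the reduction extract a linear representation $C_2=\mathsf{RandEnc}(C_{0,j},\beta)$ for some earlier $C_{0,j}$, or a fresh encryption of a known value. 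It then obtains $x_2$ with one DL-oracle query on $Y_{0,j}$, or on the adversary-chosen statement itself, and no query at all for a known value. Thus the number of oracle queries is at most $D$.
\item \textbf{Extraction.} Every output $(\ell_i,m_i,\sigma_i)$ refers to a record whose statement is some $Y_{0,j}$. By witness extractability against the stored pre-signature for $m_i\in M_{\ell_i}$, which covers both original and re-signed evidence under the same key and statement, the reduction extracts $x_{0,j}$.
\end{itemize}
Pairwise distinct records carry distinct challenge statements, since each new conditional authorization comes from a new $\mathsf{Promise}$. We count refreshes as one record, and the \texttt{tid} check keeps records for separate $\mathsf{Promise}$ calls separate. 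With $d>D$ the reduction therefore outputs more discrete logarithms than oracle queries, contradicting OMDL. If extraction fails instead, $\sigma_i$ is a forgery against $\Pi_{\mathrm{AS}}$ or $\Pi_{\mathrm{RAS}}$ on a message that was never fully signed, and we fall back to the first reduction.

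\emph{Main obstacle.} The delicate step is the accounting in the conditional-completion branch. We must show that each oracle query is charged to at most one counted incoming solve, including solves the adversary aborts after $H_C$ produced the signature. We must also show that selective refresh, where evidence is retained or regenerated by $\mathsf{RSign}$ on a new message, never yields two distinct records sharing one statement, which would let one witness pay for two outputs. We would settle both points by maintaining an explicit map from records to challenge indices that is updated in the refresh rule. Its injectivity then follows from the fixed task-derivation algorithm together with the requirement that evidence is always checked against the complete tuple $(\mathsf{signer},m,Y)$.
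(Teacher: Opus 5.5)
Your two-branch structure and both reductions follow the paper's proof. However, the query accounting in the conditional-completion branch fails as written, and this is the step you yourself flag as delicate. You issue a DL query whenever $H_C$ must decrypt an incoming $C_2$, but $D$ counts only incoming authorizations whose $\mathsf{Solve}$ yields a valid completed signature. An adversary can send, for every $j$, $C_2=\mathsf{RandEnc}(C_{0,j},\beta)$ paired with a task statement $Y_2\neq\beta Y_{0,j}$, or with invalid evidence. The real $H_C$ then produces no valid completion, so $D$ does not grow. Yet your reduction spends up to $N$ queries, and $d>D$ no longer contradicts OMDL. The paper avoids this in three steps: it first verifies the incoming evidence, then publicly checks that the task statement equals the point represented by the ciphertext (rejecting otherwise), and only then queries, once per point, caching the answer. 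In the LOE model a handle is also a general linear form $a_0+\sum_j a_jr_j$, not only $\mathsf{RandEnc}(C_{0,j},\beta)$ or a known value. The query should therefore be on $a_0G+\sum_j a_jY_j$ after the consistency check, not on ``the adversary-chosen statement''. Without that check, your simulated $H_C$ would also complete signatures that the real one would not.

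Your fallback for failed extraction is also wrong. The message $m_i\in M_{\ell_i}$ was pre-signed or re-signed under $H_C$'s key, so $m_i\in\mathcal{Q}_{\mathrm{msg}}$, and $\sigma_i$ does not win the fresh-message challenge. What fails there is extraction, so the reduction must go to the witness extractability of $\Pi_{\mathrm{AS}}$ or $\Pi_{\mathrm{RAS}}$, guessing which of the polynomially many prescribed pre-signatures to take from the extractability challenger. This is where the distinct-messages hypothesis is actually needed. It ensures $m_i$ is not also authorized for another record, so the adversary cannot output one completed signature on a shared message for two distinct records and inflate $d$ without revealing two logarithms. In the fresh-message branch, where you placed it, it plays no role. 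Finally, unforgeability of $\Pi_{\mathrm{BRS}}$ is not among the hypotheses of Theorem~\ref{thm:srbcs-construction}, and nothing in your argument needs it, so remove it from your bound. The $\mathsf{tid}$ check is likewise irrelevant to keeping records distinct, since each record receives a fresh challenge point by construction.
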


\begin{proof}
We treat the two challenges in Definition~\ref{def:srbcs-task-unforgeability} separately.

\emph{Fresh message challenge.} A reduction uses an AS or RAS challenge key for the designated honest signer. It answers the permitted signing and task evidence requests with the appropriate oracles and records every authorized message in $\mathcal{Q}_{\mathrm{msg}}$. A refresh is handled as a pre-signing query for the replacement message under the same statement. If $\mathcal{A}$ returns a valid $(m^*,\sigma^*)$ with $m^*\notin\mathcal{Q}_{\mathrm{msg}}$, the same pair is a forgery. The reduction loses only a polynomial factor in guessing the challenge key and evidence type. A non-negligible success probability would therefore contradict the unforgeability of $\Pi_{\mathrm{AS}}$ or $\Pi_{\mathrm{RAS}}$.

\emph{Conditional completion challenge.} Let $N$ be a polynomial upper bound on the number of independent outgoing authorizations created during the experiment. The reduction receives an OMDL challenge containing $N$ independently sampled points $Y_j=r_jG$ whose discrete logarithms it does not know. It generates the coordinating hub's signing and credential keys, assigns a different challenge point to each new authorization record $\ell_j$, and represents the record's initial ciphertext with a fresh LOE handle encrypting the formal variable $r_j$. The accompanying honest proofs are simulated. By zero knowledge, this changes the adversary's success probability by at most a negligible amount.

The reduction records each issued pre-signature with its authorization record, message, and statement. It returns retained evidence unchanged. When the message of an existing authorization changes, the reduction uses the corresponding signing key and the unchanged public statement to generate refreshed evidence. Neither operation queries the discrete logarithm oracle or creates another independent authorization record.

In the LOE model, the reduction records each ciphertext handle as a linear form $a_0+\sum_{j=1}^{N}a_jr_j$. The corresponding public point $a_0G+\sum_{j=1}^{N}a_jY_j$ can be computed from the challenge points without knowing the $r_j$. Encryption of known values, linear operations, and ciphertext randomization are simulated by updating these forms and returning handles with the prescribed distribution.

For each incoming $\mathsf{Solve}$ request, the reduction verifies the task evidence and checks that the task statement matches the public point represented by the ciphertext. It rejects invalid evidence or an inconsistent pair. If the witness for a valid request is not yet known, the reduction queries its discrete logarithm oracle for that point, completes the incoming pre-signature with the returned witness, and stores it for later requests involving the same condition. Retrying or refreshing the same incoming authorization makes no further discrete logarithm query. The authorization is counted in $D$ once its valid completed signature is produced, even if $\mathcal{A}$ later aborts. Hence, the simulation makes at most $D$ discrete logarithm queries.

For each valid output $(\ell_i,m_i,\sigma_i)$, the reduction retrieves the evidence recorded for that authorization and message. Except with negligible probability, witness extractability yields a value $r_i$ satisfying $Y_{\ell_i}=r_iG$. If extraction fails, selecting the affected evidence instance from the polynomially many prescribed queries and forwarding its completed signature gives an adversary against the corresponding AS or RAS extraction experiment. The theorem's message-binding requirement prevents an authorized message from being counted for two independent item authorizations, except with negligible probability.

Suppose that $\mathcal{A}$ wins with $d>D$. Its outputs reveal the discrete logarithms of $d$ distinct challenge points, and the reduction queries its oracle for the remaining $N-d$. It thus recovers all $N$ logarithms with at most $D+(N-d)\leq N-1$ queries. A run with $D\geq N$ cannot satisfy $d>D$, so the reduction may abort before exceeding its query budget. Any non-negligible success probability would contradict OMDL hardness.

Both challenges therefore have negligible winning probability.
\end{proof}

\begin{lemma}[S-RBCS Unlockability]
\label{lem:srbcs-unlockability}
Under the assumptions of Theorem~\ref{thm:srbcs-construction}, $\Pi_{\mathrm{S\text{-}RBCS}}^{*}$ satisfies Definition~\ref{def:srbcs-unlockability}.
\end{lemma}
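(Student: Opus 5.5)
The plan is to split the unlockability experiment of Definition~\ref{def:srbcs-unlockability} into its two winning conditions and bound each separately. The first is extraction failure on the challenge task $\eta^*=(\mathsf{signer}^*,m^*,Y^*)$. The second is failure of an honest endpoint to derive the adjacent witness for linked tasks. A union bound over the two events, and over the polynomially many linked tasks, then gives the lemma.

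\emph{Extraction failure.} The challenger produces $\widetilde{\sigma}^*$ via $\mathsf{PreSign}$ for a new transfer task, or via $\mathsf{RSign}$ for a refreshed one, following \eqref{eq:srbcs-refresh}. Retained evidence is the same object earlier produced for the identical tuple, so it adds nothing new. I would build a reduction $\mathcal{B}$ against witness extractability of $\Pi_{\mathrm{AS}}$ or $\Pi_{\mathrm{RAS}}$:
\begin{itemize}
\item $\mathcal{B}$ embeds its challenge key as the honest signer's key and guesses the evidence type, losing a factor of $2$.
\item It answers the adversary's signing and task-evidence queries on messages $m\neq m^*$ with its own oracles.
\item It forwards $(m^*,Y^*)$ as its challenge and returns the challenge pre-signature as $\widetilde{\sigma}^*$.
\end{itemize}
The experiment forbids further queries on $m^*$ under that key. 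The theorem's assumption that independent item authorizations under one key use distinct messages ensures that no other UniPC task legitimately shares $m^*$, so $\mathcal{B}$'s query restrictions match exactly. Any $\sigma^*$ that verifies for $m^*$ but makes $\mathsf{Extract}(\widetilde{\sigma}^*,\sigma^*,Y^*)$ output $\bot$ or a $y$ with $yG\neq Y^*$ therefore breaks extractability. Because evidence is checked against the complete UniPC task, the $Y$ recorded in the evidence is the same $Y^*$ used in the task.

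\emph{Linked-witness derivation.} Here the argument is algebraic and rests on correctness. Take a sender-side task with statement $Y_2$ and a receiver-side task with statement $Y_0$.
\begin{itemize}
\item Extraction gives $x_2$ with $x_2G=Y_2$.
\item By \eqref{eq:unipch-randomization}, $Y_2=\alpha_S Y_1$ and $Y_1=\alpha_R Y_0$, with $\alpha_S,\alpha_R\in\mathbb{Z}_q^*$ retained by the honest endpoints. Since $q$ is prime, the inverses exist.
\item Hence $x_1=\alpha_S^{-1}x_2$ satisfies $x_1G=Y_1$, and $x_0=\alpha_R^{-1}x_1$ satisfies $x_0G=Y_0$.
\end{itemize}
Adjacent hub-level tasks reuse the same statement, $Y_2$ on the sender side and $Y_0$ on the receiver side, so the extracted witness directly opens the incoming item. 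Correctness of $\Pi_{\mathrm{LOE}}$ (Lemma~\ref{lem:srbcs-correctness}) guarantees that $(Y_i,C_i)$ share a witness. Soundness of $\Pi_{\mathrm{ZK}}$ for $\pi_0$ prevents a malicious $H_C$ from pairing $Y_0$ with an inconsistent ciphertext, except with negligible probability. Without that, $R$'s derived statement could disagree with what $H_C$ later adapts.

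The main obstacle is the refresh path. An adversary might try to use an $\mathsf{RSign}$ output on some earlier message $m$ together with $\widetilde{\sigma}^*$ to build a signature on $m^*$ that evades extraction. I would handle this first by relying on the $\Pi_{\mathrm{RAS}}$ extractability game, which grants re-signing oracle access for messages other than $m^*$. Then I would argue that any re-signature on $m^*$ is exactly the challenge evidence itself, so it is not an extra query. Under the distinct-message assumption, no two independent authorizations collide on $m^*$.
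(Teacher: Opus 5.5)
Your proposal is correct and follows essentially the same route as the paper: a reduction to the witness extractability of $\Pi_{\mathrm{AS}}$ or $\Pi_{\mathrm{RAS}}$ for the challenge task $\eta^*$, with the query restriction on $m^*$ and a polynomial guessing loss, followed by the algebraic derivation $x_1=\alpha_S^{-1}x_2$ and $x_0=\alpha_R^{-1}x_1$ from nonzero randomizers, supported by $\Pi_{\mathrm{LOE}}$ correctness and $\Pi_{\mathrm{ZK}}$ soundness. Your treatment of the refresh path through the $\Pi_{\mathrm{RAS}}$ game is consistent with the theorem's hypothesis that extractability holds for the task-evidence queries prescribed by the construction; the paper leaves that point implicit.
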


\begin{proof}
Suppose an adversary wins the unlockability experiment with non-negligible probability. A reduction submits the selected challenge task to the witness-extractability challenger for $\Pi_{\mathrm{AS}}$ or $\Pi_{\mathrm{RAS}}$ and gives $\mathcal{A}$ the resulting challenge pre-signature as $\widetilde{\sigma}^*$. It answers the permitted queries with the appropriate oracles and submits $\mathcal{A}$'s final signature $\sigma^*$ to the challenger. The query restriction ensures that no additional signing or pre-signing query for $m^*$ is made under the challenge key. If $\sigma^*$ is valid but the challenge witness cannot be extracted, the reduction wins the underlying witness-extractability experiment. Guessing the challenge signer and evidence type costs only a polynomial factor, so non-negligible success contradicts the assumed extractability.

For linked tasks, extraction first reveals the witness attached to the completed statement. Equation~\eqref{eq:unipch-randomization} gives $x_2=\alpha_Sx_1$ and $x_1=\alpha_Rx_0$. Since both randomizers are nonzero, an honest sender derives $x_1=\alpha_S^{-1}x_2$, and an honest receiver derives $x_0=\alpha_R^{-1}x_1$. These values satisfy their adjacent statements. Failure of this derivation would contradict field inversion, the correctness of $\Pi_{\mathrm{LOE}}$, or the soundness of $\Pi_{\mathrm{ZK}}$. Thus the winning probability in the unlockability experiment is negligible.
\end{proof}

\begin{lemma}[S-RBCS Blindness]
\label{lem:srbcs-blindness}
Under the assumptions of Theorem~\ref{thm:srbcs-construction}, the advantage in \eqref{eq:srbcs-blindness} is negligible.
\end{lemma}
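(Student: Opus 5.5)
A hybrid argument will move the challenge execution from the pairing $\mathcal{M}_0=\{(S_0,R_0),(S_1,R_1)\}$ to $\mathcal{M}_1=\{(S_0,R_1),(S_1,R_0)\}$. The adversary's view consists of four parts: the credential objects, the randomized ciphertext-statement pairs, the task evidence on the hubs' adjacent channels, and the public-ledger observations. I will change them one at a time, each time bounding the distinguishing gap by a primitive assumed in Theorem~\ref{thm:srbcs-construction}. Since both challenge sessions share the same public inputs and phase schedule, the hubs' view is symmetric apart from these linking objects. Timing and ordering therefore carry no information.

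\emph{Hybrid 1: credentials.} Here I use the blindness of $\Pi_{\mathrm{BRS}}$ and the zero-knowledge property of $\Pi_{\mathrm{ZK}}$. First, simulate every proof $\pi_{cm}$, which costs a negligible loss by zero knowledge. Next, replace the randomized credentials $\sigma_{\mathsf{tid}}$ presented by $R_0$ and $R_1$ in $\mathsf{Promise}$ with credentials for the same $\mathsf{tid}$ values, but issued in the opposite $\mathsf{Reserve}$ session. This swap is distinguishable only if the issuing transcripts $(cm,\pi_{cm},\sigma^*)$ can be linked to the presented pairs $(\mathsf{tid},\sigma_{\mathsf{tid}})$. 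That is exactly the BRS blindness game, where the hubs play the malicious signer. The reduction embeds the two issuance sessions and the two presentations. Each $\mathsf{tid}$ is fresh and uniform, and it is revealed only at presentation, so it carries no session information.

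\emph{Hybrid 2: conditions.} The receiver-side view contains $(Y_0,C_0)$ for each session. The sender-side view contains $(Y_2,C_2)=(\alpha_S\alpha_RY_0,\mathsf{RandEnc}(C_0,\alpha_S\alpha_R))$, and the pair $(Y_1,C_1)$ travels only over the hidden endpoint channel. I will argue that, given $(Y_0,C_0)$, the pair $(Y_2,C_2)$ is distributed as a uniformly random nonzero multiple of $Y_0$ together with a fresh encryption of the corresponding witness. This follows from confidentiality under randomization of $\Pi_{\mathrm{LOE}}$, because $\alpha_S\alpha_R$ is uniform in $\mathbb{Z}_q^*$ and unknown to the hubs. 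Consequently, the two sender-side pairs are identically distributed whichever $Y_0$ they were derived from. I can then swap which receiver-side condition each sender-side pair is derived from.

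\emph{Hybrid 3: task evidence and ledger data.} The adaptor pre-signatures on sender channels are bound to $Y_2$, and those on receiver channels to $Y_0$. The ledger-defined messages $m$ depend only on item fields such as amount, deadline, and a fresh nonce or outpoint. These fields are identical across sessions with the same $\mathcal{L}$. I therefore plan to show that the evidence is efficiently simulatable from $(\mathsf{signer},m,Y)$ alone. Completed signatures released in $\mathsf{Solve}$ reveal $x_2$ and $x_0$, whose ratio $\alpha_S\alpha_R$ is again uniform and hidden. So once Hybrid 2 is in place, revealing these witnesses adds no linkage.

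The main obstacle is Hybrid 2 together with the $\mathsf{Solve}$ leakage. The issue is that the honest-but-curious $H_C$ holds the LOE decryption key and learns both $x_0$ and $x_2$, so LOE confidentiality alone is insufficient. I would first try an information-theoretic argument. For each candidate pairing there is exactly one $\alpha_S\alpha_R$ consistent with the view, and this value is uniform and independent in each session. The joint distribution of $\{(x_0^{(k)},x_2^{(k)})\}$ is therefore identical under $\mathcal{M}_0$ and $\mathcal{M}_1$. Computational assumptions are then needed only for Hybrid 1 and for the simulated proofs. Summing the negligible gaps over the constant number of hybrids completes the proof.
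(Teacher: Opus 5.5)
Your hybrid skeleton is the paper's: simulate the endpoint commitment proofs, reassociate credential issuance and presentation via BRS blindness, swap the condition pairs, and let task evidence and ledger data follow by coupling. Your observation that $x_2=\alpha_S\alpha_R x_0$ is uniform on $\mathbb{Z}_q^*$ and independent of $x_0$ is exactly the paper's bijection $(\alpha_{R,j},\alpha_{S,i})\mapsto(\alpha_{R,j},\alpha_{S,i}\alpha_{R,j}x_{0,j})$. The plaintext part of the condition swap is therefore fine.

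\textbf{The gap is the ciphertext part of that swap.} You rightly discard LOE confidentiality once the colluding view contains $H_C$'s decryption key. However, your replacement argument only shows that the witness pairs $\{(x_0^{(k)},x_2^{(k)})\}$ have the same joint distribution under $\mathcal{M}_0$ and $\mathcal{M}_1$. The hubs also see the ciphertexts themselves. $H_C$ generated $C_0$, so it knows its encryption randomness, and it receives $C_2=\mathsf{RandEnc}(\mathsf{RandEnc}(C_0,\alpha_R),\alpha_S)$. Nothing in your argument rules out that $C_2$ retains a trace of $C_0$ beyond its plaintext.

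For instance, suppose $\mathsf{RandEnc}$ merely multiplied both components of an ElGamal-type ciphertext $(rG,\ast)$ by $\alpha$ without adding fresh randomness. Then $H_C$, which chose $r$, could check for each candidate pairing whether the first component of $C_2$ equals $(r x_2 x_0^{-1})G$. This links the sessions even though the plaintexts are perfectly distributed. So your claim that only Hybrid 1 and the simulated proofs need assumptions is not justified.

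The missing step is the one the paper takes from the LOE model, which is among the hypotheses of Theorem~\ref{thm:srbcs-construction}. In that model, the sender-side pair is distributed as $(x_{2,i}G,\mathsf{Enc}(pk_{\mathrm{LOE}},x_{2,i}))$ with fresh encryption randomness, independent of $C_0$ even to the key holder. With that step added, or with an explicit circuit-private rerandomization property for $\mathsf{RandEnc}$, your Hybrid 2 goes through and the proof coincides with the paper's.
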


\begin{proof}
Consider the two endpoint matchings $\mathcal{M}_0$ and $\mathcal{M}_1$ in the blindness experiment. We use the following hybrids.

\emph{Hybrid $\mathcal{H}_0$.} This is the real transcript generated with $\mathcal{M}_0$.

\emph{Hybrid $\mathcal{H}_1$.} In the two challenge sessions, replace the commitment proofs produced by honest endpoints with simulated proofs, while leaving the ciphertext-statement proofs produced by the hubs unchanged. By the zero-knowledge property of $\Pi_{\mathrm{ZK}}$, $\mathcal{H}_0$ and $\mathcal{H}_1$ are computationally indistinguishable.

\emph{Hybrid $\mathcal{H}_2$.} In $\mathcal{H}_1$, change the association between credential issuance and presentation from $\mathcal{M}_0$ to $\mathcal{M}_1$. The blindness and randomizability of $\Pi_{\mathrm{BRS}}$ make this change indistinguishable to the participating hubs.

\emph{Hybrid $\mathcal{H}_3$.} Change the matching of the randomized condition pairs from $\mathcal{M}_0$ to $\mathcal{M}_1$. Let $x_{0,j}$ be the witness generated for $R_j$ and $z_i$ the final witness on the sending side of $S_i$. Under either matching, independently sample $z_i,\alpha_{R,j}\xleftarrow{\$}\mathbb{Z}_q^*$ and define $\alpha_{S,i}=z_i(\alpha_{R,j}x_{0,j})^{-1}$. Since $x_{0,j}\in\mathbb{Z}_q^*$, the map $(\alpha_{R,j},\alpha_{S,i})\mapsto(\alpha_{R,j},z_i=\alpha_{S,i}\alpha_{R,j}x_{0,j})$ is a bijection on $(\mathbb{Z}_q^*)^2$. The sampled values therefore have the same joint distribution as the two independent randomizers in the real protocol, and $x_{2,i}=z_i$. In the LOE model, the resulting sender-side pair is distributed as $(z_iG,\mathsf{Enc}(pk_{\mathrm{LOE}},z_i))$ with fresh encryption randomness and independently of the receiver matched to $S_i$. Throughout this change, the hubs' keys, receiver-side witnesses, and initial ciphertexts remain fixed. The intermediate pair and witness pass through the anonymous endpoint channel. During $\mathsf{Solve}$, sender-side tasks use $z_i$ and receiver-side tasks use $x_{0,j}$, in agreement with the sampled randomizers.

Couple the two matchings using the same local signing and channel realization randomness. Their local conditions, public inputs, and phase schedules coincide, so each channel derives candidate states, task tuples, and evidence retention and refresh sets with the same distribution. The resulting task evidence, state confirmation signatures, claim outputs, and public ledger observations also have the same distribution. Thus, $\mathcal{H}_2$ and $\mathcal{H}_3$ have the same distribution in the LOE model.

\emph{Hybrid $\mathcal{H}_4$.} Replacing the simulated endpoint proofs with real proofs yields the real transcript generated with $\mathcal{M}_1$. By the zero-knowledge property of $\Pi_{\mathrm{ZK}}$, $\mathcal{H}_3$ and $\mathcal{H}_4$ are computationally indistinguishable.

Credential reassociation and the simulation or restoration of endpoint proofs introduce only negligible distinguishing advantages under the blindness and randomizability of $\Pi_{\mathrm{BRS}}$ and the zero-knowledge property of $\Pi_{\mathrm{ZK}}$, respectively. Reassigning the condition pairs preserves their distribution in the LOE model. Hence, the real transcripts for $\mathcal{M}_0$ and $\mathcal{M}_1$ are computationally indistinguishable, and the advantage in \eqref{eq:srbcs-blindness} is negligible.
\end{proof}

\begin{proof}[Proof of Theorem~\ref{thm:srbcs-construction}]
The four preceding lemmas establish the correctness, task unforgeability, unlockability, and blindness requirements of Definition~\ref{def:srbcs-security}. Therefore, $\Pi_{\mathrm{S\text{-}RBCS}}^{*}$ is secure.
\end{proof}

\end{document}